\newcites{App}{Additional References}
\newtheorem{theorem}{Theorem}[section]%
\newtheorem{lemma}[theorem]{Lemma}
\newtheorem{definition}[theorem]{Definition}
\newtheorem{corollary}[theorem]{Corollary}
\newtheorem{question}[theorem]{Question}
\newtheorem{remark}[theorem]{Remark}
\newtheorem{proposition}{Proposition}
\DeclareMathOperator{\E}{E}
\newcommand{\AAA}{\mathcal A}
\newcommand{\AAAspace}{{\mathcal A}_{\rm space}}
\newcommand{\BBB}{\mathcal B}
\newcommand{\BBBsamp}{{\mathcal B}_{\rm sample}}
\newcommand{\BBBspace}{{\mathcal B}_{\rm space}}
\newcommand{\hBBBspace}{\widehat{\mathcal B}_{\rm space}}
\newcommand{\WWW}{\mathcal W}
\newcommand{\oracle}{\mathcal{E}}
\newcommand{\DDD}{\mathcal D}
\newcommand{\DDDsamp}{{\mathcal D}_{\rm sample}}
\newcommand{\DDDspace}{{\mathcal D}_{\rm space}}
\newcommand{\PPP}{\mathcal P}
\newcommand{\XXX}{\mathcal X}
\newcommand{\YYY}{\mathcal Y}
\newcommand{\eps}{\varepsilon}
\newcommand{\Gen}{{\rm Gen}}
\newcommand{\Param}{{\rm Param}}
\newcommand{\Enc}{{\rm Enc}}
\newcommand{\Dec}{{\rm Dec}}
\newcommand{\MAJ}{\operatorname{\rm MAJ}}
\newcommand{\polylog}{\mathop{\rm polylog}}
\newcommand{\poly}{\mathop{\rm poly}}
\newcommand{\negl}{\mathop{\rm negl}}
\def\E{\operatorname*{\mathbb{E}}}
\def\poly{\mathop{\rm{poly}}\nolimits}
\newcommand{\mynote}[2]{{\textcolor{#1}{ #2}}}
\newcommand{\red}[1]{\mynote{red}{#1}}
\newcommand{\gray}[1]{\mynote{gray}{#1}}
\DeclareSymbolFont{AMSb}{U}{msb}{m}{n}
\DeclareMathSymbol{\N}{\mathbin}{AMSb}{"4E}
\DeclareMathSymbol{\Z}{\mathbin}{AMSb}{"5A}
\DeclareMathSymbol{\R}{\mathbin}{AMSb}{"52}
\DeclareMathSymbol{\erert}{\mathbin}{AMSb}{"50}
\DeclareMathSymbol{\I}{\mathbin}{AMSb}{"49}
\DeclareMathSymbol{\C}{\mathbin}{AMSb}{"43}
\newcommand{\adv}{{\rm Adv}}
\author{
Itai Dinur\thanks{Ben-Gurion University. \texttt{dinuri@bgu.ac.il}.
Partially supported by the Israel Science Foundation (grant 1903/20) and by the European Research Council under the ERC starting grant agreement no. 757731 (LightCrypt).}
\and
Uri Stemmer\thanks{Tel Aviv University and Google Research. \texttt{u@uri.co.il}. Partially supported by the Israel Science Foundation (grant 1871/19) and by
Len Blavatnik and the Blavatnik Family foundation.}
\and
David P. Woodruff\thanks{Carnegie Mellon University. \texttt{dwoodruf@andrew.cmu.edu}. Work done in part while visiting Google Research. Also supported by a Simons Investigator Award and by the National Science Foundation under Grant No. CCF-1815840.}
\and
Samson Zhou\thanks{UC Berkeley and Rice University. \texttt{samsonzhou@gmail.com}. Work done in part while at Carnegie Mellon University. Partially supported by a Simons Investigator Award and by the National Science Foundation under Grant No. CCF-1815840.}
}
\title{On Differential Privacy and Adaptive Data Analysis with Bounded Space}
\date{February 11, 2023}
\begin{document}

\maketitle

\begin{abstract}
We study the space complexity of the two related fields of {\em differential privacy} and {\em adaptive data analysis}. Specifically,

\begin{enumerate}[topsep=5px, itemsep=5
px]
    \item Under standard cryptographic assumptions, we show that there exists a problem $P$ that requires exponentially more space to be solved efficiently with differential privacy, compared to the space needed without privacy. To the best of our knowledge, this is the first separation between the space complexity of private and non-private algorithms.
    
    \item The line of work on adaptive data analysis focuses on understanding the number of {\em samples} needed for answering a sequence of adaptive queries. We revisit previous lower bounds at a foundational level, and show that they are a consequence of a space bottleneck rather than a sampling bottleneck.
\end{enumerate}

To obtain our results, we define and construct an encryption scheme with multiple keys that is built to withstand a limited amount of key leakage in a very particular way.
\end{abstract}

\section{Introduction}

Query-to-communication lifting theorems allow translating lower bounds on the {\em query complexity}  of a given function $f$ to lower bounds on the {\em communication complexity} of a related function $\hat{f}$. Starting from the seminal work of Raz and McKenzie \cite{RazM99}, several such lifting theorems were presented, and applied, to obtain new communication complexity lower bounds in various settings. 

In the domain of cryptography, related results have been obtained, where the starting point is a lower bound on the {\em query complexity} of an adversary solving a cryptanalytic problem in an idealized model, such as the random oracle model~\cite{BellareR93}. The query complexity lower bound is then lifted to a {\em query-space} lower bound for a non-uniform (preprocessing) adversary solving the same problem~\cite{Unruh07,CorettiDGS18,CorettiDG18}. 

Building on ideas developed in these lines of work, we present a new technique for translating sampling lower bounds to space lower bounds for problems in the context of {\em differential privacy} and {\em adaptive data analysis}. Before presenting our results, we motivate our settings.

\subsection{Differential privacy}
Differential privacy~\cite{dwork2016calibrating} is a mathematical definition for privacy that aims to enable statistical analyses of datasets while providing strong guarantees that individual-level information does not leak.
Informally, an algorithm that analyzes data satisfies differential privacy if it is robust in the sense that its
outcome distribution does not depend ``too much'' on any single data point. Formally,
\begin{definition}[\cite{dwork2016calibrating}]
Let $\AAA:X^*\rightarrow Y$ be a randomized algorithm whose input is a dataset $D\in X^*$. Algorithm $\AAA$ is {\em $(\eps,\delta)$-differentially private (DP)} if for any two datasets $D,D'$ that differ on one point (such datasets are called {\em neighboring}) and for any outcome set $F\subseteq Y$ it holds that
$
\Pr[\AAA(D)\in F]\leq e^{\eps}\cdot\Pr[\AAA(D')\in F]+\delta.
$
\end{definition}
To interpret the definition, let $D$ be a dataset containing $n$ data points, each of which represents the information of one individual. Suppose that Alice knows all but one of these data points (say Bob's data point). Now suppose that we compute $z\leftarrow\AAA(D)$, and give $z$ to Alice. If $\AAA$ is differentially private, then Alice learns very little about Bob's data point, because $z$ would have been distributed roughly the same no matter what Bob's data point is.

Over the last few years, we have witnessed an explosion of research on differential privacy in various settings. In particular, a fruitful line of work has focused on designing differentially private algorithms with small {\em space} complexity, mainly in streaming settings. Works in this vein include~\cite{DworkNPRY10,mir2011pan,BlockiBDS12,BassilyNST20,BunNS19,AlabiBC22,KaplanS21,Smith0T20,PaghS21,PaghT22,bu2021fast,wang2021differentially,BolotFMNT13,chan2012differentially,Upadhyay19,UpadhyayU21,zhao2022differentially}. All of these works showed {\em positive} results and presented differentially private algorithms with small space complexity for various problems. In fact, some of these works showed that classical streaming algorithms are differentially private essentially {\em as is}. For example, Blocki et al.~\cite{BlockiBDS12} showed that the Johnson-Lindenstrauss transform itself preserves differential privacy, and Smith et al.~\cite{Smith0T20} showed this for the classical Flajolet-Martin Sketch.

In light of all these positive results, one might think that algorithms with small space are particularly suitable for differential privacy, because these algorithms are not keeping too much information about the input to begin with. 

\begin{question}\label{quest:1}
Does differential privacy require more space?
\end{question}

\subsubsection{Our results for differential privacy with
bounded space}
We answer Question~\ref{quest:1} in the affirmative, i.e., we show that differential privacy \emph{may require more space}. To this end, we come up with a problem that can be solved using a small amount of space without privacy, but requires a large amount of space to be solved with privacy. As a first step, let us examine the following toy problem, which provides {\em some} answer to the above question.

\paragraph{A non-interesting toy problem.} 
Recall that $F_2$ (the second frequency moment of a stream) estimation with multiplicative approximation error $1+\alpha$ has an $\Omega(1/\alpha^2)$ space lower bound \cite{w04}. This immediately shows a separation for the problem of ``output either the last element of the stream or a $(1+\alpha)$-approximation to the $F_2$ value of the stream''. In the non-private setting, the last element can be output using space independent of $\alpha$, but in the private setting the algorithm is forced to (privately) estimate $F_2$ and thus use at least $1/\alpha^2$ space. Of course, we could replace $F_2$ with other tasks that have a large space lower bound in the standard non-private model.

We deem this toy problem non-interesting because, at a high level, our goal is to show that there are cases where computing something privately requires a lot more space than computing ``the same thing'' non-privately. In the toy problem, however, the private and non-private algorithms are arguably {\em not} computing ``the same thing''. To reconcile this issue, we will focus on problems that are defined by a {\em function} (ranging over some metric space), and the desired task would be to approximate the value of this function. Note that this formulation disqualifies the toy problem from being a valid answer to Question~\ref{quest:1}, and that with this formulation there is a formal sense in which every algorithm for solving the task must compute (or approximate) ``the same thing''.

Let us make our setting more precise. In order to simplify the presentation, instead of studying the streaming model, we focus on the following computation model.\footnote{We remark, however, that all of our results extend to the streaming setting. See  Remark~\ref{rem:streamingModel}.} Consider an algorithm that is instantiated on a dataset $D$ and then aims to answer a query with respect to $D$. We say that such an algorithm has space $s$ if, before it gets the query, it shrinks $D$ to a {\em summary} $z$ containing at most $s$ bits. Then, when obtaining the query, the algorithm answers it using only the summary $z$ (without additional access to the original dataset $D$). 
Formally, we consider problems that are defined by a function $P:X^*\times Q\rightarrow M$, where $X$ is the data domain, $Q$ is a family of possible queries, and $M$ is a metric space.

\begin{definition}
We say that $\AAA=(\AAA_1,\AAA_2)$ solves a problem $P:X^*\times Q\rightarrow M$ with space complexity $s$, sample complexity $n$, error $\alpha$, and confidence $\beta$ if 
\begin{enumerate}
    \item $\AAA_1:X^*\rightarrow\{0,1\}^s$ is a {\em preprocessing procedure} that takes a dataset $D$ and outputs an $s$-bit string. 
    \item For every input dataset $D\in X^{n}$ and every query $q\in Q$ it holds that
    $$
    \Pr_{\substack{z\leftarrow\AAA_1(D)\\
    a\leftarrow\AAA_2(z,q)
    }}[|a-P(D,q)|\leq\alpha]\geq1-\beta.
    $$
\end{enumerate}
\end{definition}

We show the following theorem.

\begin{theorem}[informal]
Let $d\in\N$ be a parameter controlling the size of the problem (specifically, data points from $X$ can be represented using $\polylog(d)$ bits, and queries from $Q$ can be represented using $\poly(d)$ bits). There exists a problem $P:X^*\times Q\rightarrow M$ such that the following holds.
\begin{enumerate}
    \item $P$ can be solved non-privately using $\polylog(d)$ bits of space.\\ {\small \gray{\% See Lemma~\ref{lem:DAisEasyNonPrivately} for the formal statement.}}
    \item $P$ can be solved privately using sample and space complexity $\tilde{O}(\sqrt{d})$.\\ {\small \gray{\% See Lemma~\ref{lem:DAisSolvablePrivately} for the formal statement.}}
    \item Assuming the existence of a sub-exponentially secure symmetric-key encryption scheme, every computationally-efficient differentially-private algorithm $\AAA$ for solving $P$ must have space complexity $\tilde{\Omega}(\sqrt{d})$, even if its sample complexity is a large polynomial in $d$. Furthermore, this holds even if $\AAA$ is only required to be {\em computationally} differentially private (namely, the adversary we build against $\AAA$ is computationally efficient).\\ {\small \gray{\% See Corollary~\ref{cor:finalDP} for the formal statement.}}
\end{enumerate}
\end{theorem}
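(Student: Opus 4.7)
The plan is to instantiate $P$ via the multi-key encryption scheme built later in the paper, and then to derive the space lower bound by a reduction that combines a compression argument on the summary $z$ with a fingerprinting-code style attack adapted from adaptive data analysis. Concretely, each data point in $D$ will be a pair $(k_i,m_i)$ where $k_i$ is a fresh key of the leakage-resilient scheme and $m_i\in\zo$ is a secret bit; a query amounts to a ciphertext tied to one of the keys $k_i$, and the correct value of $P(D,q)$ is a simple aggregate of the corresponding $m_i$'s. This formulation gives every algorithm the same ``target'' function to compute on $(D,q)$ and so dodges the toy-problem caveat discussed just before the theorem statement.

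Items~1 and~2 are handled by direct constructions. Non-privately, the algorithm stores $D$ as is (data points have $\polylog(d)$-bit descriptions, so this amounts to an index, not a summary) and answers each query on the fly by a linear scan; this uses only $\polylog(d)$ bits of working memory. For item~2, a $\tilde{O}(\sqrt d)$-sample differentially private mechanism from the adaptive-data-analysis toolkit (for example, a private-multiplicative-weights style answerer applied to a $\tilde O(\sqrt d)$-sized subsample) solves the task, and by standard book-keeping it can be implemented in $\tilde{O}(\sqrt d)$ bits of state.

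For item~3, fix an efficient $(\eps,\delta)$-DP algorithm $\AAA=(\AAA_1,\AAA_2)$ with space $s = o(\sqrt d/\polylog d)$, operating on datasets of size $n=\poly(d)$. The summary $z\leftarrow \AAA_1(D)$ has only $s$ bits. Using the key leakage-resilience of the encryption scheme, I argue a dichotomy: for each index $i$, either $z$ encodes ``nontrivial'' information about $k_i$, or the answers that $\AAA_2(z,\cdot)$ produces on queries tied to $k_i$ are computationally indistinguishable from answers produced without ever seeing $m_i$. A counting argument over $z$'s $2^s$ possible values then bounds the number of ``leaked'' indices by $\tilde O(s)$. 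Restricting attention to the remaining $d-\tilde O(s)$ ``hidden'' indices, I run a standard fingerprinting-code tracer against $\AAA$: since $\AAA$ still answers a $\poly(d)$-sized batch of such queries accurately, the tracer recovers a specific record $(k_i,m_i)$ with noticeable probability, contradicting $(\eps,\delta)$-DP. The tracer and the encryption are both polynomial-time, so computational differential privacy is enough to derive the contradiction and $\AAA$ need only be computationally DP.

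The main obstacle is the encryption scheme itself: it must withstand an $s$-bit leakage function of all $d$ keys chosen adversarially (by $\AAA_1$, which has seen the full dataset) and still allow one to identify a large set of ``hidden'' keys on which semantic security holds against an adversary who sees $z$. Ordinary leakage-resilient symmetric encryption does not suffice, because here the leakage is global and can correlate information across many keys at once; one needs the scheme to enforce, roughly, that any short summary $z$ admits a partition of the keys into a small ``leaked'' part of size $\tilde O(s)$ and a ``hidden'' part whose ciphertexts remain semantically secure given $z$. Designing such a scheme, and in particular proving that the partition can be pinned down computationally so that the tracing reduction is efficient, is where I expect the real work to lie; once that primitive is in place, the counting step and the invocation of fingerprinting codes proceed along familiar lines.
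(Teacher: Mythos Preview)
Your high-level plan (leakage-resilient multi-key encryption plus a fingerprinting-code attack) matches the paper's, but the concrete instantiation of $P$ is off in a way that breaks all three items.

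In the paper, a data point is \emph{only} a key $x_i$; the things to be aggregated are $d$-bit vectors that arrive \emph{in the query}, encrypted under the $x_i$'s. The answer is the coordinate-wise average of the decrypted vectors. Your version instead stores pairs $(k_i,m_i)$ with $m_i\in\{0,1\}$ in the dataset. This causes two immediate problems. First, your Item~1 argument (``store $D$ as is'') does not give $\polylog(d)$ space in the model used here: the space is the length of the summary $z=\AAA_1(D)$, and storing all of $D$ costs $n\cdot\polylog(d)=\poly(d)$ bits. The paper's non-private algorithm instead \emph{subsamples} $O(\log d)$ keys and, when the query arrives, decrypts only those $O(\log d)$ encrypted vectors and averages them; this works precisely because the aggregation targets are carried by the query, not by the dataset. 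Second, your Item~3 reduction needs each data point to correspond to an FPC ``user'' with a $d$-bit codeword, and the algorithm's single output to be a $d$-dimensional vector obeying the marking assumption. A single bit $m_i$ per point does not support that structure; in the paper the codewords are exactly the plaintexts of the query's ciphertexts, and the algorithm's output (the approximate decoded average, rounded) is the pirate codeword.

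There is also a missing step you do not flag: the algorithm is assumed DP with respect to the \emph{keys}, whereas the FPC tracer needs DP with respect to the \emph{codewords}. The paper bridges this by a short hybrid (swap the neighboring key for a fresh one, use semantic security to swap the encrypted codeword, swap the key back), which transfers $(\eps,\delta)$-CDP from keys to codewords. Without this step the tracer does not yield a privacy violation.

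Finally, you list as the main obstacle that the hidden/leaked partition must be ``pinned down computationally so that the tracing reduction is efficient.'' The paper explicitly avoids this: it defines a \emph{non-efficient} variant $\hat{\BBB}$ that computes the set $J$, shows $\hat{\BBB}\equiv_c\BBB$ via the MILR security game, and then derives a contradiction because fingerprinting-code guarantees (the trace succeeds, and false accusation is rare) hold against \emph{arbitrary} pirates, not just efficient ones. So you need not---and the paper does not---make $J$ efficiently computable; your ``counting over $2^s$ values of $z$'' is also not the mechanism used (the bound $|J|\ge n-O(s/\lambda)$ comes from a dense-source decomposition plus a block-wise leftover-hash argument, packaged as the MILR security definition).
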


Note that this is an exponential separation (in $d$) between the non-private space complexity and the private space complexity. We emphasize that the hardness of privately solving $P$ does not come from not having enough samples. Indeed, by Item~2, $\tilde{O}(\sqrt{d})$ samples suffice for privately solving this problem. However, Item~3 states that unless the algorithm has large space, then it cannot privately solve this problem {\em even if it has many more samples than needed}.

To the best of our knowledge, this is the first result that separates the space complexity of private and non-private algorithms. Admittedly, the problem $P$ we define to prove the above theorem is somewhat unnatural. In contrast, our negative results for adaptive data analysis (to be surveyed next) are for the canonical problem studied in the literature. 

\subsection{Adaptive data analysis}

Consider a data analyst interested in testing a specific research hypothesis. The analyst acquires relevant data, evaluates the hypothesis, and (say) learns that it is false. Based on the findings, the analyst now decides on a second hypothesis to be tested, and evaluates it {\em on the same data} (acquiring fresh data might be too expensive or even impossible). That is, the analyst chooses the hypotheses {\em adaptively}, where this choice depends on previous interactions with the data. As a result, the findings are no longer supported by classical statistical theory, which assumes that the tested hypotheses are fixed before the data is gathered, and the analyst runs the risk of overfitting to the data.

Starting with~\cite{dwork2015preserving}, 
the line of work on {\em adaptive data analysis (ADA)} aims to design methods for provably guaranteeing statistical validity in such settings. Specifically, the goal is to design a mechanism $\AAA$ that initially obtains a dataset $D$ containing $t$ i.i.d.\ samples from some unknown distribution $\PPP$, and then answers $k$ {\em adaptively chosen queries} w.r.t.\ $\PPP$. Importantly, $\AAA$'s answers must be accurate w.r.t.\ the underlying distribution $\PPP$, and not just w.r.t.\ the empirical dataset $D$. The main question here is,
\begin{question}\label{quest:2}
How many samples does $\AAA$ need (i.e., what should $t$ be) in order to support $k$ such adaptive queries?
\end{question}

As a way of dealing with worst-case analysts, the analyst is assumed to be adversarial in that it tries to cause
the mechanism to fail. If a mechanism can maintain utility against such an adversarial analyst, then
it maintains utility against any analyst. Formally, the canonical problem pursued by the line of work on ADA is defined as a two-player game between a mechanism $\AAA$ and an adversary $\BBB$. See Figure~\ref{fig:ADAintro}.

\begin{figure}
    \centering
    \fbox{%
  \parbox{\textwidth}{%

\begin{enumerate}
    \item The adversary $\BBB$ chooses a distribution $\PPP$ over a data domain $X$.
    
    \item The mechanism $\AAA$ obtains a sample $S\sim\PPP^t$ containing $t$ i.i.d.\ samples from $\PPP$.
    
    \item For $k$ rounds $j = 1, 2,\dots, k$:
    
    \begin{itemize}[leftmargin=7px]
        \item The adversary chooses a function $h_j : X \rightarrow \{-1,0, 1\}$, possibly as a function of all previous answers given by the mechanism.
        \item The mechanism obtains $h_j$ and responds with an answer $z_j$, which is given to $\BBB$.
    \end{itemize}
\end{enumerate} 
  }
}
    \caption{A two-player game between a mechanism $\AAA$ and an adversary $\BBB$.}
    \label{fig:ADAintro}
\end{figure}

\begin{definition}[\cite{dwork2015preserving}]\label{def:ADA}
A mechanism $\AAA$ is $(\alpha, \beta)$-accurate for $k$ queries over a domain $X$ using sample size $t$ if for every adversary $\BBB$ (interacting with $\AAA$ in the game specified in Figure~\ref{fig:ADAintro}) it holds that 
$$
\Pr\big[\exists j\in[k] \text{ s.t.\ } |z_j-h_j(\PPP)|>\alpha\big]\leq\beta,
$$
where $h_j (\PPP) = \E_{x\sim\PPP} [h_j (x)]$ is the ``true'' value of $h_j$ on the underlying
distribution $\PPP$.
\end{definition}

Following Dwork et al.~\cite{dwork2015preserving}, this problem has attracted a significant amount of work~\cite{HardtU14,SteinkeU15,BassilyNSSSU21,ullman2018limits,KontorovichSS22,0001LN0SS21,abs-2106-10761,ShenfeldL19,FishRR18,FishRR20,DworkFHPRR15,rogers2016max,cummings2016adaptive,feldman2018calibrating,feldman2017generalization,bassily2016typical}, most of which is focused on understanding how many {\em samples} are needed for adaptive data analysis (i.e., focused on Question~\ref{quest:2}). In particular, the following almost matching bounds are known.

\begin{theorem}[\cite{dwork2015preserving,BassilyNSSSU21}] 
There exists a computationally efficient mechanism that is $(0.1,0.1)$-accurate for $k$ queries using sample size $t=\tilde{O}\left(\sqrt{k}\right)$.
\end{theorem}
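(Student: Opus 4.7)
The plan is to invoke the ``stability implies generalization'' paradigm pioneered by Dwork et al.~\cite{dwork2015preserving}: I would have $\AAA$ answer each adaptive query by releasing a differentially private estimate of its empirical mean on the sample $S$, and then appeal to a transfer theorem showing that differential privacy automatically lifts accuracy on the sample to accuracy on the underlying distribution $\PPP$. The composition properties of DP are what buy the $\sqrt{k}$ saving over the naive union bound, which would only give $t=\tilde{O}(k)$.

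Concretely, on receiving query $h_j$, the mechanism computes the empirical mean $h_j(S)=\tfrac{1}{t}\sum_{x\in S}h_j(x)$ and releases $z_j = h_j(S)+\xi_j$, where $\xi_j$ is fresh Gaussian (or Laplace) noise with scale $\sigma$ chosen so that the $k$-fold composition remains $(\eps,\delta)$-DP. Since each query has sensitivity $1/t$ on $S$, advanced composition yields $\sigma=O\!\left(\sqrt{k\log(1/\delta)}/(t\eps)\right)$. Setting $\eps,\delta$ to appropriate (roughly constant) values and requiring the noise magnitude to be below $\alpha/2=0.05$ with probability $1-\beta/2$ across all $k$ queries forces $t=\Omega(\sqrt{k\log(k/\beta)\log(1/\delta)}/\eps)$, which is $\tilde{O}(\sqrt{k})$.

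It remains to convert empirical accuracy into accuracy with respect to $\PPP$. Here the key ingredient is the transfer theorem of Bassily et al.~\cite{BassilyNSSSU21}: an $(\eps,\delta)$-DP mechanism whose answers are within $\alpha/2$ of the empirical values (against any adversary) is automatically within $\alpha$ of the population values, provided $\delta$ is sufficiently subpolynomial in $1/t$ and $t=\Omega(\log(k/\beta)/(\alpha\eps))$. Combining this with the previous paragraph yields $t=\tilde{O}(\sqrt{k})$ overall for fixed $\alpha=\beta=0.1$, and the algorithm is clearly computationally efficient since each query is answered by a single empirical-mean computation plus noise.

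The main obstacle in executing this plan is the transfer theorem itself, which I would treat as a black box: proving it requires a subtle ``monitor'' argument in which one simulates the adaptive interaction and uses the group-privacy / histogram-learning trick to bound the probability that the adversary ever produces a query whose empirical mean and population mean disagree. The mechanism side, by contrast, is routine Gaussian-mechanism analysis with advanced composition, and the parameter balancing at the end is just arithmetic to verify that both the noise term and the generalization term are $O(\alpha)$ when $t=\tilde{O}(\sqrt{k})$.
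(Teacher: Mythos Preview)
The paper does not actually prove this theorem; it is stated as a known result imported from \cite{dwork2015preserving,BassilyNSSSU21} and used only as background for the lower-bound discussion. Your sketch is a faithful outline of exactly the argument those references give (noisy empirical answers, advanced composition to control the privacy budget, and the DP-to-generalization transfer theorem), so there is nothing further to compare.
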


\begin{theorem}[\cite{HardtU14,SteinkeU15}, informal]\label{thm:HSU1415}
Assuming the existence of one-way functions, every computationally efficient mechanism that is $(0.1,0.1)$-accurate for $k$ queries must have sample size at least $t=\Omega\left(\sqrt{k}\right)$.
\end{theorem}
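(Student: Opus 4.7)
The plan is to show that if there were a computationally efficient mechanism $\AAA$ that is $(0.1,0.1)$-accurate for $k$ adaptive queries from $t=o(\sqrt{k})$ samples, then $\AAA$ could be used to break a cryptographic primitive. The tool I would use is an \emph{interactive fingerprinting code} with computational security. Informally, such a code specifies a distribution over ``codebooks'' $C \in \{\pm 1\}^{t \times k}$ (generated column by column) together with a tracing algorithm, with the property that for any efficient ``pirate'' that, given an unknown subset of $t$ rows, emits column values $y_1,\ldots,y_k$ each lying in the coordinate-wise feasible set of those rows, the tracer identifies some row actually held by the pirate with good probability. Such codes are obtained by composing Tardos-style information-theoretically secure codes with a pseudorandom generator, which exists under the one-way function assumption.

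Given $\AAA$, the adversary $\BBB$ is built as follows. $\BBB$ chooses $\PPP$ to be uniform over $X=\{\pm 1\}^d$ for an appropriate $d$, and the mechanism obtains $S=(x_1,\ldots,x_t)\sim \PPP^t$. $\BBB$ then plays the role of the pirate in an interactive fingerprinting game of length $k$: in round $j$, $\BBB$ receives a ``column query'' from the tracer, translates it into a statistical query $h_j: X\to\{-1,0,1\}$ (a thresholded linear query aligned with that column), feeds $h_j$ to $\AAA$, and passes $\AAA$'s answer $z_j$ back to the tracer. The key observation is that $\AAA$'s answers must lie within $O(\alpha)$ of the empirical mean over $S$, since on these random-looking queries $h_j(\PPP)$ is close to $0$ and deviating from the empirical mean by much more would already violate accuracy. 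This places each $z_j$ inside the coordinate-wise feasible set of $x_1,\ldots,x_t$, which is exactly the regime in which the tracer succeeds.

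After $k$ rounds, the tracer outputs a candidate index $i^*$, and with good probability $x_{i^*}\in S$. $\BBB$ then issues one final query $h^*$ designed to distinguish ``$x_{i^*}$ is in the sample'' from ``$x_{i^*}$ is an independent random point''—e.g., a normalized inner-product query $h^*(x)=\langle x_{i^*},x\rangle/d$ whose expectation under $\PPP$ is $0$ but whose value on $x_{i^*}$ itself is $1$. Any answer that $\AAA$ can extract from $S$ is noticeably biased away from $0$, contradicting accuracy. Balancing the tracing round budget against the accuracy slack yields $k=\Omega(t^2)$, i.e.\ $t=\Omega(\sqrt{k})$, with the exponent coming directly from the Tardos parameters.

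The main obstacle is the \emph{cryptographic} step: information-theoretic fingerprinting alone does not rule out an efficient $\AAA$ that somehow hides the identity of its samples while still answering accurately, because a computationally bounded mechanism could in principle refuse to play along in ways that force the tracer into the trivial failure region. To close this gap, I would replace the column contents of the fingerprinting code by pseudorandom strings expanded from a short seed via a PRG built from the assumed one-way function (using H\aa stad--Impagliazzo--Levin--Luby), and then argue via a hybrid that any efficient $\AAA$ whose answers cause the tracer to fail on the pseudorandom codebook but succeed on a truly random one yields a PRG distinguisher. Everything else—translating column queries into $\{-1,0,1\}$-valued statistical queries, extracting the traced index from the transcript, and amplifying from constant $(0.1,0.1)$ accuracy—is routine once the fingerprinting primitive is in hand.
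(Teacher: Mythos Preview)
This theorem is not proved in the paper; it is stated in the introduction as a prior result from \cite{HardtU14,SteinkeU15} and used only as a black box (the paper later cites the stronger ``natural mechanism'' version in Appendix~\ref{app:ADAfullProof}, again without proof). There is therefore no in-paper argument to compare your proposal against.

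That said, your sketch has a genuine gap relative to how the cited works actually proceed. The assertion that ``$\AAA$'s answers must lie within $O(\alpha)$ of the empirical mean over $S$'' does not follow from accuracy: accuracy only forces $z_j$ close to the \emph{population} value $h_j(\PPP)$, and if $h_j(\PPP)\approx 0$ then $\AAA$ can simply output $0$ without consulting $S$ at all, satisfying accuracy while giving the tracer nothing. This is precisely why \cite{HardtU14,SteinkeU15} first prove the bound for \emph{natural} mechanisms---those that can only access $h_j$ through its values on $S$---via interactive fingerprinting, and then use an \emph{encryption scheme} to force an arbitrary efficient mechanism to behave naturally: each query is delivered as encrypted data that can only be decrypted with keys the mechanism happens to hold for its own sample points. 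Your proposed cryptographic step---replacing the codebook columns by PRG outputs---does not accomplish this reduction; it does nothing to stop $\AAA$ from evaluating $h_j$ on points outside $S$ or from answering non-empirically. The one-way-function assumption in the cited theorem is used to build the encryption (equivalently, a private traitor-tracing scheme), not to derandomize the fingerprinting code. Your ``final query'' step inherits the same problem: without the encryption layer, nothing forces $\AAA$'s answer on $h^*$ to be biased toward the empirical value.
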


\subsubsection{Our results for adaptive data analysis with bounded space}

All prior work on the ADA problem treated it as a {\em sampling} problem, conveying the message that ``adaptive data analysis requires more samples than non adaptive data analysis''. In this work we revisit the ADA problem at a foundational level, and ask:
\begin{question}\label{quest:3}
Is there a more fundamental bottleneck for the ADA problem than the number of samples?
\end{question}
Consider a mechanism $\AAA$ that initially gets the {\em full description of the underlying distribution $\PPP$}, but is required to shrink this description into a summary $z$, whose description length is identical to the description length of $t$ samples from $\PPP$. (We identify the {\em space complexity} of $\AAA$ with the size of $z$ in bits.) Afterwards, $\AAA$ needs to answer $k$ adaptive queries using $z$, without additional access to $\PPP$. Does this give $\AAA$ more power over a mechanism that only obtains $t$ samples from $\PPP$?

We show that, in general, the answer is no. Specifically, we show the following theorem.
\begin{theorem}[informal version of Theorem~\ref{thm:ADA}]
Assuming the existence of one-way functions,
then every computationally efficient mechanism that is $(0.1,0.1)$-accurate for $k$ queries must have {\em space complexity} at least $\Omega\left(\sqrt{k}\right)$.
\end{theorem}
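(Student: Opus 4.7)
The plan is to upgrade the sample lower bound in Theorem~\ref{thm:HSU1415} of~\cite{HardtU14, SteinkeU15} into a space lower bound via a cryptographic reduction. A naive reduction fails because a space-$s$ mechanism sees the entire distribution $\PPP$, not just $s$ samples from it, so we cannot hope to simulate it by a mechanism that only sees few samples. Instead, the adversary $\BBB$ will design $\PPP$ so that the only useful information about $\PPP$ that fits into $s$ bits is an ``effective sample'' of size $O(s)$, allowing us to translate the Hardt-Ullman-Steinke-Ullman attack into the space-bounded setting.

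Concretely, I would have $\BBB$ first draw $N \gg k$ independent keys $\kappa_1, \ldots, \kappa_N$ from the multi-key leakage-resilient encryption scheme constructed elsewhere in the paper. For each $i$, $\BBB$ would pick a fresh secret bit $b_i \in \{0,1\}$ and set the support of $\PPP$ to consist of items of the form $(i, c_i)$, where $c_i$ is an encryption of $b_i$ under $\kappa_i$. The full description of $\PPP$ (including all ciphertexts) is handed to $\AAA_1$. Since $\AAA_1$'s output $z$ has only $s$ bits, leakage resilience of the scheme would guarantee that the list of indices whose underlying plaintexts $b_i$ are computationally recoverable from $z$ has size at most $O(s)$; for every other $i$, the bit $b_i$ remains pseudorandom conditioned on $z$.

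Next I would run the interactive fingerprinting attack of~\cite{HardtU14, SteinkeU15} on top of this construction. On round $j$, $\BBB$ adaptively chooses a query $h_j$ in the style of Definition~\ref{def:ADA} whose value on each item $(i, c_i)$ is determined by the decryption of $c_i$ under $\kappa_i$ (which $\BBB$ knows, but $\AAA_2$ essentially does not, outside of the leaked set). For any index not in the leaked set, the mechanism's answer on $h_j$ must be independent of $b_i$, so the Steinke-Ullman analysis shows that after $k$ rounds one can fingerprint the small set of leaked bits. Setting $k = C \cdot s^2$ for a sufficiently large constant $C$ pushes the number of queries past the sample lower bound for a mechanism whose effective sample size is $O(s)$, so the mechanism must violate $(0.1, 0.1)$-accuracy on at least one query.

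The main obstacle, and the reason the paper must build a custom encryption scheme, is obtaining the right notion of leakage resilience: we need a multi-key scheme in which an arbitrary $s$-bit function of all $N$ ciphertexts can compromise at most $O(s)$ of the keys, while the remaining bits stay hidden even against a computationally bounded distinguisher that can adaptively query the mechanism. Composing this cryptographic guarantee with the interactive fingerprinting reduction---and in particular ensuring that the decryption-based queries $h_j$ genuinely fit into the $\{-1, 0, 1\}$-valued interface of Definition~\ref{def:ADA} and that the adaptive choices of $\BBB$ do not prematurely expose too many keys---is the technically delicate part of the argument.
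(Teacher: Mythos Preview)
Your proposal has the encryption in the wrong place, and this breaks the reduction. You put ciphertexts $(i,c_i)$ into the support of $\PPP$ and then invoke ``leakage resilience'' on an $s$-bit function of those ciphertexts. But the MILR guarantee (Definition~\ref{def:muiltienc}) is about leaking bits of the \emph{keys}, not of ciphertexts: the preprocessing function $F$ acts on $\vec{x}$, and only \emph{after} that are fresh public parameters $\vec{p}$ drawn and ciphertexts formed. In your construction the mechanism never sees any key, so there is no key leakage at all; by plain semantic security, \emph{every} $b_i$ is pseudorandom given $z$, not merely all but $O(s)$ of them, and the ``effective sample of size $O(s)$'' conclusion has no force. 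Conversely, once the mechanism receives a query $h_j$ it can evaluate $h_j$ on any stored item $(i,c_i)$ and thereby learn $h_j(i,c_i)$, so the encryption is not hiding anything from it either.

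The paper inverts your layout. The distribution is uniform over the \emph{keys} $\{(j,x_j)\}_{j\in[n]}$; $\AAA_1$ sees all keys and compresses them to $z$ (this is the leakage $F(\vec{x})$); public parameters $p_j$ are sampled \emph{after} compression; and each round~$i$ of the Hardt--Steinke--Ullman adversary is translated into a query whose description carries fresh ciphertexts $c_{i,j}=\Enc(x_j,p_j,q_i(j))$, with value $f_i(j,x)=\Dec(x,p_j,c_{i,j})$. Two features of this ordering are essential and are missing from your sketch: (i) the public parameters must be independent of $z$ for the MILR security game to apply, and (ii) the ciphertexts must encode the per-round query values $q_i(j)$ (a fresh column of the fingerprinting code each round), not a single static bit $b_j$ fixed in advance. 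With these in place, the MILR property yields a set $J$ of size $n-O(s/\lambda)$ on which encryptions are indistinguishable from encryptions of $0$, so replacing $q_i(j)$ by $0$ for $j\in J$ is undetectable to $\AAA_2$; the resulting wrapped mechanism is then a \emph{natural} mechanism on the $O(s/\lambda)$ indices in $[n]\setminus J$, and Theorem~\ref{thm:extension} finishes the job.
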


In fact, in the formal version of this theorem (see Theorem~\ref{thm:ADA}) we show that the space complexity must be at least $\Omega\left(\sqrt{k}\right)$ times the representation length of domain elements. 
We view this as a significant strengthening of the previous lower bounds for the ADA problem:
it is not that the mechanism did not get enough information about $\PPP$; it is just that it cannot shrink this information in a way that allows for efficiently answering $k$ adaptive queries. This generalizes the negative results of \cite{HardtU14,SteinkeU15}, as sampling $t=\sqrt{k}$ points from $\PPP$ is just one particular way for storing information about $\PPP$.

\subsection{Our techniques}
We obtain our results through a combination of techniques across several research areas including cryptography, privacy, learning theory, communication complexity, and information theory.

\subsubsection{Our techniques: Multi-instance leakage-resilient scheme}
The main cryptographic tool we define and construct is a suitable encryption scheme with multiple keys  that is built to withstand a certain amount of key leakage in a very particular way. Specifically, the scheme consists of $n$ instances of an underlying encryption scheme with independent keys (each of length $\lambda$ bits). The keys are initially given to an adversary who shrinks them down to a summary $z$ containing $s \ll n \cdot \lambda$ bits. After this phase, each instance independently sets an additional parameter, which is public but unknown to the adversary in the initial phase.
Then, the adversary obtains encryptions of plaintexts under the $n$ keys. We require that given $z$ and the public parameters, the plaintexts encrypted with each key remain computationally hidden, except for a small number of the keys (which depends on $s$, but not on $n$). 

We call the scheme a {\em multi-instance leakage-resilient scheme} (or MILR scheme) to emphasize the fact that although the leakage of the adversary is an arbitrary function of all the $n$ keys, the scheme itself is composed of $n$ instances that are completely independent.

The efficiency of the MILR scheme is measured by two parameters: (1) the number of keys under which encryptions are (potentially) insecure, and (2) the loss in the security parameter $\lambda$.
The scheme we construct is optimal in both parameters up to a multiplicative constant factor. First,
encryptions remain hidden for all but $O \left(\frac{s}{\lambda} \right)$ of the keys. This is essentially optimal, as the adversary can define $z$ to store the first $\frac{s}{\lambda}$ keys. Second, we lose a constant factor in the security parameter $\lambda$. An additional advantage of our construction is that its internal parameters do not depend on $s$. If we did allow such a dependency, then in some settings (particularly when $s \leq o(n \cdot \lambda)$) it would be possible to fine-tune the scheme to obtain a multiplicative $1 + o(1)$ loss in the efficiency parameters, but this has little impact on our application. 

Our construction is arguably the most natural one. To encrypt a plaintext with a $\lambda$-bit key after the initial phase, we first apply an extractor (with the public parameter as a seed) to hash it down to a smaller key, which is used to encrypt the plaintext with the underlying encryption scheme. 

The MILR scheme is related to schemes developed in the area of leakage-resilient cryptography (cf.,~\cite{CanettiDHKS00,Dziembowski06,DziembowskiP08,Pietrzak09,StandaertPYQYO10,HazayLWW16,KalaiR19}) where the basic technique of randomness extraction is commonly used.
However, leakage-resilient cryptography mainly deals with resilience of cryptosystems to side-channel attacks, whereas our model is not designed to formalize security against such attacks and has several properties that are uncommon in this domain (such as protecting independent multiple instances of an encryption scheme in a way that inherently makes some of them insecure).
Consequently, the advanced cryptosytems developed in the area of leakage-resilient cryptography are either unsuitable, overly complex, or  inefficient for our purposes.

Despite the simplicity of our construction, our proof that it achieves the claimed security property against leakage is somewhat technical. We stress that we do not rely on hardness assumptions for specific problems, nor assume that the underlying encryption scheme has special properties such as resilience to related-key attacks. Instead, our proof is based on the pre-sampling technique introduced by Unruh~\cite{Unruh07} to prove security of cryptosystems against non-uniform adversaries in the random oracle model. This technique has been recently refined and optimized in~\cite{CorettiDGS18,CorettiDG18,DodisFMT20} based on tools developed in the area of communication complexity~\cite{GoosLM0Z15,KothariMR17}.
The fact that we use the technique to prove security in a computational (rather than information-theoretic) setting seems to require the assumption that the underlying encryption scheme is secure against non-uniform adversaries (albeit this is considered a standard assumption).

\subsubsection{Our techniques: Privacy requires more space}

We design a problem that can be solved non-privately with very small space complexity, but requires a large space complexity with privacy. To achieve this, we lift a known negative result on the {\em sample} complexity of privately solving a specific problem to obtain a {\em space} lower bound for a related problem. 
The problem we start with, for which there exists a sampling lower bound, is the so-called {\em 1-way marginals} problem with parameter $d$. In this problem, our input is a dataset $D\in(\{0,1\}^d)^*$ containing a collection of binary vectors, each of length $d$. Our goal is to output a vector $\vec{a}\in[0,1]^d$ that approximates the {\em average} of the input vectors to within small $L_{\infty}$ error, say to within error $1/10$. That is, we want vector $\vec{a}\in[0,1]^d$ to satisfy: 
$
\left\| \vec{a} - \frac{1}{|D|}\sum_{\vec{x}\in D} \vec{x} \right\|_{\infty} \leq \frac{1}{10}.
$

We say that an algorithm for this problem has {\em sample complexity} $n$ if, for every input dataset of size $n$, it outputs a good solution with probability at least $0.9$. One of the most fundamental results in the literature of differential privacy shows that this problem requires a large dataset:

\begin{theorem}[\cite{BunUV14}, informal]
Every differentially private algorithm for solving the 1-way marginal problem with parameter $d$ must have sample complexity $n=\Omega(\sqrt{d})$.
\end{theorem}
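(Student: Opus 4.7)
My plan is to reduce the problem to the existence of \emph{fingerprinting codes}, following the framework pioneered by Bun, Ullman, and Vadhan. A fingerprinting code is a pair of algorithms $(\Gen,\operatorname{\rm Trace})$ where $\Gen$ samples a random codebook $C\in\{0,1\}^{n\times d}$ and $\operatorname{\rm Trace}$ returns an accused index in $[n]\cup\{\bot\}$ from a pirate string $p\in\{0,1\}^d$ and the codebook. Its defining properties are (i) \emph{traceability}: whenever $p$ satisfies the marking condition with respect to $C$ (at every coordinate $j$, the bit $p_j$ matches some row of $C$ at coordinate $j$), $\operatorname{\rm Trace}(p,C)$ outputs an index in $[n]$ with overwhelming probability; and (ii) \emph{security against collusion}: for any attacker that receives only the rows $\{C_i:i\in S\}$ for some $S\subsetneq[n]$ and outputs a feasible pirate $p$, $\operatorname{\rm Trace}(p,C)$ almost never outputs an index outside $S$. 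I would invoke Tardos' construction, which achieves these properties with $d=\tilde O(n^2)$.

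Given a differentially private algorithm $\AAA$ for $1$-way marginals with sample complexity $n$, I would build an attacker against the code as follows. Sample $C\sim\Gen$ and feed the $n$ rows of $C$ to $\AAA$, obtaining $\vec a\in[0,1]^d$ with $\|\vec a-\bar C\|_\infty\le 1/10$, where $\bar C$ is the coordinate-wise mean. Round $\vec a$ via thresholding at $1/2$ to produce $p\in\{0,1\}^d$. At any coordinate $j$ where all rows of $C$ carry the same bit $b$, the accuracy of $\AAA$ forces $|\vec a_j-b|\le 1/10$, hence $p_j=b$ matches each codeword; so $p$ satisfies the marking condition. By traceability, $\Pr[\operatorname{\rm Trace}(p,C)\in[n]]\ge 1-\negl$, so by averaging there exists some $i^\star\in[n]$ with $\Pr[\operatorname{\rm Trace}(p,C)=i^\star]\ge(1-\negl)/n$.

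To derive a contradiction, fix such $i^\star$ and consider the neighboring dataset $C'$ obtained from $C$ by replacing its $i^\star$-th row with a canonical default (say $0^d$). Running $\AAA$ on $C'$ and applying the same rounding-and-tracing pipeline accuses $i^\star$ with only negligible probability by the code's security (the pipeline's input depends only on the remaining $n-1$ rows). But $(\eps,\delta)$-differential privacy forces the accusation probabilities on $C$ and $C'$ to differ by at most a multiplicative $e^\eps$ factor plus an additive $\delta$. Combining these gives $(1-\negl)/n\le e^{\eps}\cdot\negl+\delta$, which fails for $\eps=O(1)$ and $\delta\ll 1/n$. Hence $n=\Omega(\sqrt d)$ by the parameters of the Tardos code.

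The hardest step to make rigorous will be aligning the code's security model with the DP neighboring relation. Rounding produces a feasible $p$ only at \emph{consensus} coordinates, so at the many coordinates where rows of $C$ disagree one must either invoke a ``weakly robust'' variant of fingerprinting codes that tolerates a small fraction of violated markings, or generalize the marking condition so that it only constrains consensus coordinates; similarly, standard fingerprinting code security is phrased against subset-selecting attackers, whereas DP neighboring corresponds to row-\emph{replacement}, requiring a short bridging argument to absorb the default-row substitution into the attacker without degrading the security parameters. These technical accommodations are exactly what Bun, Ullman, and Vadhan formalize in their original proof.
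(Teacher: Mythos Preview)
The paper does not give its own proof of this statement; it is quoted as a cited result from Bun--Ullman--Vadhan. Your sketch faithfully reproduces the fingerprinting-code argument of that paper, and in fact the same template (feed codewords to the private algorithm, round, trace, then swap one row and invoke differential privacy against the tracing event) is exactly what the present paper uses in its proof of Theorem~\ref{thm:DPhard} for the DA problem.

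One correction to your self-identified ``hardest step'': the marking condition is \emph{automatically} satisfied at non-consensus coordinates. If column $j$ of $C$ contains both a $0$ and a $1$, then whatever bit $p_j$ you output matches some row; the accuracy of $\AAA$ is only needed at consensus coordinates, where the mean is $0$ or $1$ and hence rounding recovers the common bit. So no ``weakly robust'' variant is required. The genuine technical points are (i) accuracy only holds with probability $9/10$, so the marking condition may fail with probability $1/10$, which must be absorbed into the tracing bound; and (ii) your replacement of row $i^\star$ by $0^d$ does not literally fit the FPC security definition. The clean fix---used both in Bun--Ullman--Vadhan and in this paper's Theorem~\ref{thm:DPhard}---is to sample an $(n{+}1)$-user codebook $w_0,w_1,\dots,w_n$, run $\AAA$ on $(w_1,\dots,w_n)$, and in the neighboring experiment swap $w_{i^\star}$ for the held-out $w_0$; then the pirate is a function of the coalition $\{0,1,\dots,n\}\setminus\{i^\star\}$ and the false-accusation bound applies directly.
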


To lift this sampling lower bound into a space lower bound, we consider a problem in which the input dataset contains $n$ {\em keys} $\vec{x}=(x_1,\dots,x_n)$ (sampled from our MILR scheme). The algorithm must then shrink this dataset into a summary $z$ containing $s$ bits. Afterwards, the algorithm gets a ``query'' that is specified by a collection of $n$ ciphertexts, each of which is an encryption of a $d$-bit vector. The desired task is to approximate the average of the {\em plaintext} input vectors. Intuitively, if the algorithm has space $s\ll\sqrt{d}$, then by the properties of our MILR scheme, it can decrypt at most $\approx s\ll\sqrt{d}$ of these $d$-bit vectors, and is hence trying to solve the 1-way marginal problem with fewer than $\sqrt{d}$ samples. We show that this argument can be formalized to obtain a contradiction.

\subsubsection{Our techniques: ADA is about space}

As we mentioned, Hardt and Ullman~\cite{HardtU14} and Steinke and Ullman~\cite{SteinkeU15} showed that the ADA problem requires a large {\em sample} complexity (see Theorem~\ref{thm:HSU1415}). Specifically, they showed that there exists a computationally efficient adversary that causes every efficient mechanism to fail in answering adaptive queries. Recall that the ADA game (see Figure~\ref{fig:ADAintro}) begins with the adversary choosing the underlying distribution.

We lift the negative result of \cite{HardtU14,SteinkeU15} to a {\em space} lower bound. To achieve this, we design an alternative adversary that first samples a large collection of {\em keys} for our MILR scheme, and then defines the target distribution to be uniform on these sampled keys. Recall that in our setting, the mechanism gets an exact description of this target distribution and afterwards it must shrink this description into $s$ bits. However, by the properties of our MILR scheme, this means that the mechanism would only be able to decrypt ciphertexts that correspond to at most $\approx s/\lambda$ of the keys. We then show that the adversary of \cite{HardtU14,SteinkeU15} can be simulated under these conditions, where the ``input sample'' from their setting corresponds to the collection of indices of keys for which the mechanism can decrypt ciphertexts.

\subsection{Additional related work}
\label{app:relatedWork}

Query-to-communication lifting theorems were applied by many works in the domain of communication complexity, e.g.~\cite{goos2015deterministic,GoosLM0Z15,de2016limited,chattopadhyay2019simulation,wu2017raz,goos2020query,garg2018monotone,chattopadhyay2018simulation,loff2019lifting,hatami2018structure,goos2014communication,chattopadhyay2021query,chattopadhyay2019query}. As we mentioned, to obtain our negative result for DP algorithms, we lift the {\em sampling} lower bound of Bun et al.~\cite{BunUV14} for the 1-way marginal problem to a {\em space} lower bound. Many additional sampling lower bounds exist in the DP literature; here we only survey a few of them. 

Dwork et al.~\cite{dwork2009complexity} used {\em traitor-tracing schemes} to prove computational sampling lower bounds for differential
privacy. Their results were extended by Ullman~\cite{ullman2013answering}, who used {\em fingerprinting codes} to construct a novel traitor-tracing scheme and to obtain stronger computational sampling lower bounds for DP. Ullman's construction can be viewed as an encrypted variant of the 1-way marginal problem. Bun et al.~\cite{BunNSV15} and Alon et al.~\cite{AlonLMM19} showed that there are trivial learning tasks that require asymptotically more samples to solve with differential privacy (compared to the non-private sample complexity). 
Feldman~\cite{feldman2020does} and Brown et al.~\cite{brown2021memorization} showed that there are learning problems for which every {\em near optimal} learning algorithm (obtaining near optimal error w.r.t.\ to the number of input {\em samples} it takes) must memorize a large portion of its input data. 
We stress that all of these results are fundamentally different than ours, as they are about sampling rather than space.

Steinhardt et al.~\cite{SteinhardtVW16} asked whether there exists a separation in learning a concept class given additional space constraints and conjectured that the problem of parity learning provided such a separation. Raz
\cite{Raz19} proved the conjecture to be true by showing that although parity learning can be solved in $O(n)$ samples and $O(n^2)$ space, any algorithm for parity learning that uses $o(n^2)$ samples requires an \emph{exponential} number of samples. 
Subsequently, \cite{KolRT17} showed that learning linear-size DNF formulas, linear-size decision trees and logarithmic-size juntas is infeasible under certain super-linear space constraints. 
An active line of work further characterized the limitations of learning concept classes~\cite{MoshkovitzM17,Raz17,MoshkovitzM18,BeameGY18,GargRT18,DaganS18} and distributions subject to space constraints~\cite{ChienLM10,DiakonikolasGKR19,DiakonikolasKPP22}. 
We remark that although the message of higher sample complexity due to space constraints is conceptually similar to our adaptive data analysis result, the problem of learning concept classes is not fundamentally inherent to accurately responding to adaptive queries. 

\subsection{Applications to Communication Complexity}
Finally, our arguments also provide distributional one-way communication complexity lower bounds, which are useful when the goal is to compute a relation with a very low success probability. 
To the best of our knowledge, existing query-to-communication lifting theorems, e.g., \cite{RazM99,goos2015deterministic,GoosLM0Z15}  do not consider the problems and input distributions that we consider here. 
Roughly speaking, we show that if any sampling based protocol for computing a function $f$ requires $k$ samples $(a_1,b_1),\ldots,(a_k,b_k)$, where $a_i\in\{0,1\}^t$ for each $i\in[k]$, then any one-way protocol that computes $f(A,B)$ in this setting must use $\Omega(kt)$ communication.

More precisely, in the two-player one-way communication game, inputs $A$ and $B$ are given to Alice and Bob, respectively, and the goal is for Alice to send a minimal amount of information to Bob, so that Bob can compute $f(A,B)$ for some predetermined function $f$. 
The communication cost of a protocol $\Pi$ is the size of the largest message in bits sent from Alice across all possible inputs $A$ and the (randomized) communication complexity is the minimum communication cost of a protocol that succeeds with probability at least $\frac{2}{3}$. 
In the distributional setting, $A$ and $B$ are further drawn from a known underlying distribution. 

In our distributional setting, suppose Alice has $m$ independent and uniform numbers $a_1,\ldots,a_m$ so that either $a_i\in GF(p)$ for all $i\in[m]$ or $a_i\in GF(2^t)$ for sufficiently large $t$ for all $i\in[m]$ and suppose Bob has $m$ independent and uniform numbers $b_1,\ldots,b_m$ from the same field, either $GF(p)$ or $GF(2^t)$. 
Then for any function $f(\langle a_1,b_1\rangle,\ldots,\langle a_m,b_m\rangle)$, where the dot products are taken over $GF(2)$ or $f(a_1\cdot b_1,\ldots,a_m\cdot b_m)$, where the products are taken over $GF(p)$, has the property that the randomized one-way communication complexity of computing $f$ with probability $\sigma$ is the same as the number of samples from $a_1,\ldots, a_m$ that Alice needs to send Bob to compute $f$ with probability $\sigma-\eps$. 
It is easy to prove sampling lower bounds for many of these problems, sum as $\sum_i a_i\cdot b_i\pmod{p}$ or $\MAJ(\langle a_1, b_1\rangle,\ldots,\langle a_m, b_m\rangle)$, and this immediately translates into communication complexity lower bounds. 
The main intuition for these lower bounds is that the numbers $b_1,\ldots,b_m$ can be viewed as the hash functions that Bob has and thus we can apply a variant of the leftover hash lemma if Bob only has a small subset of these numbers. 
We provide full details in Appendix~\ref{sec:app:cc}. 

\subsection{Paper structure}
The rest of the paper is structured as follows.
The MILR scheme is defined in Section~\ref{sec:MILR}. Our results for differential privacy and adaptive data analysis are described in Sections~\ref{sec:privacy}
and~\ref{sec:ADAbody}, respectively. We construct our MILR scheme in Section~\ref{sec:constructME}, 
and prove its security in Section~\ref{sec:preprocessing}. 
For readability, some of the technical details from these sections are deferred to Appendices~\ref{app:ADAfullProof}, \ref{app:constructME}, and~\ref{app:preprocessing}. 
Appendix~\ref{sec:additionalPrelims} gives standard preliminaries from cryptography. 
Finally, Appendix~\ref{sec:app:cc} details the relationship between sample complexity and communication complexity for one-way protocols. 

\section{Multi-instance Leakage-resilient Scheme} \label{sec:MILR}
We define a {\em multi-instance leakage-resilient scheme} (or MILR scheme) to be a tuple of efficient algorithms $(\Gen,$ $\Param,\Enc,\Dec)$ with the following syntax:
\begin{itemize}
    \item $\Gen$ is a randomized algorithm that takes as input a security parameter $\lambda$ and outputs a $\lambda$-bit secret key. Formally,
    $x\leftarrow\Gen(1^\lambda)$.
    \item $\Param$ is a randomized algorithm that takes as input a security parameter $\lambda$ and outputs a $\poly(\lambda)$-bit public parameter. Formally,
    $p\leftarrow\Param(1^\lambda)$.
    \item $\Enc$ is a randomized algorithm that takes as
    input a secret key $x$, a public parameter $p$, and a
    message $m\in\{0,1\}$, and outputs a
    ciphertext $c\in\{0,1\}^{\poly(\lambda)}$. Formally, $c\leftarrow\Enc(x,p,m)$.
    \item $\Dec$ is a deterministic algorithm that takes as input a secret key $x$, a public parameter $p$, and a ciphertext $c$, and outputs a decrypted message $m'$. If the ciphertext $c$ was an encryption of $m$ under the key $x$ with the parameter $p$, then $m' = m$. Formally, if $c\leftarrow\Enc(x,p,m)$, then $\Dec(x,p,c) = m$ with probability $1$.
\end{itemize}
To define the security of an MILR scheme, let $n \in\N$, let $\vec{x}=(x_1,\dots,x_n)$ be a vector of keys, and let $\vec{p}=(p_1,\dots,p_n)$ be a vector of public parameters (set once for each scheme by invoking $\Param$). Let $J\subseteq[n]$ be a subset, referred to as the ``hidden coordinates''. Now consider a pair of oracles $\oracle_1(\vec{x},\vec{p},J,\cdot,\cdot)$ and $\oracle_0(\vec{x},\vec{p},J,\cdot,\cdot)$ with the following properties.

\begin{enumerate}
	\item $\oracle_1(\vec{x},\vec{p},J,\cdot,\cdot)$ takes as input an index of a key $j\in[n]$ and a message $m$, and returns $\Enc(x_j,p_j,m)$.
	\item $\oracle_0(\vec{x},\vec{p},J,\cdot,\cdot)$ takes the same inputs. If $j\in J$ then the outcome is $\Enc(x_j,p_j,0)$, and otherwise the outcome is $\Enc(x_j,p_j,m)$.
\end{enumerate}

\begin{definition}\label{def:muiltienc}
Let $\lambda$ be a security parameter.
Let $\Gamma:\R\rightarrow\R$ and $\overline{\tau}:\R^2\rightarrow\R$ be real-valued functions. An MILR scheme $(\Gen,\Param,\Enc,\Dec)$ is $(\Gamma,\overline{\tau})$-\emph{secure} against space bounded preprocessing adversaries if the following holds.
\begin{enumerate}
\item[{\bf (1)}] {\bf Multi semantic security:} For every $n=\poly(\Gamma(\lambda))$ and every $\poly(\Gamma(\lambda))$-time adversary $\BBB$ there exists a negligible
function $\negl$ such that
    $$
\left|
\Pr_{\vec{x},\vec{p},\BBB,\Enc}\left[  \BBB^{\oracle_0(\vec{x},\vec{p},[n],\cdot,\cdot)}(\vec{p})=1  \right]
-
\Pr_{\vec{x},\vec{p},\BBB,\Enc}\left[  \BBB^{\oracle_1(\vec{x},\vec{p},[n],\cdot,\cdot)}(\vec{p})=1  \right]
\right| \leq \negl(\Gamma(\lambda)).
    $$
    That is, a computationally bounded adversary that gets the public parameters, but not the keys, cannot tell whether it is interacting with $\oracle_0$ or with $\oracle_1$.
\item[{\bf (2)}] {\bf Multi-security against a bounded preprocessing adversary:} 
For every $n=\poly(\Gamma(\lambda))$, every $s\leq n\cdot\lambda$, and every preprocessing procedure $F:\left(\{0,1\}^{\lambda}\right)^n\rightarrow \{0,1\}^s$ (possibly randomized), 
there exists a random function $J=J(F,\vec{x},z,\vec{p})\subseteq[n]$ that given a collection of keys $\vec{x}$ and public parameters $\vec{p}$, and an element $z\leftarrow F(\vec{x})$, returns a subset of size $|J| \geq \tau:=n-\overline{\tau}(\lambda,s)$ such that for every $\poly(\Gamma(\lambda))$-time algorithm $\BBB$ there exists a negligible
function $\negl$ satisfying
$$
\left|
\Pr_{\substack{\vec{x},\vec{p},\BBB,\Enc\\
z\leftarrow F(\vec{x})\\
J\leftarrow J(F,\vec{x},z,\vec{p}) }}\left[  \BBB^{\oracle_0(\vec{x},\vec{p},J,\cdot,\cdot)}(z,\vec{p})=1  \right]
-
\Pr_{\substack{\vec{x},\vec{p},\BBB,\Enc\\
z\leftarrow F(\vec{x})\\
J\leftarrow J(F,\vec{x},z,\vec{p}) }}\left[  \BBB^{\oracle_1(\vec{x},\vec{p},J,\cdot,\cdot)}(z,\vec{p})=1  \right]
\right| \leq \negl(\Gamma(\lambda)).
$$
That is, even if $s$ bits of our $n$ keys were leaked (computed by the preprocessing function $F$ operating on the keys), then still encryptions w.r.t.\ the keys of $J$ are computationally indistinguishable.
\end{enumerate}

\end{definition}

\begin{remark}

When $\Gamma$ is the identity function, we simply say that the scheme is $\overline{\tau}$-secure. Note that in this case, security holds against all adversaries with runtime polynomial in the security
parameter $\lambda$. We will further assume the existence of a sub-exponentially secure encryption scheme. By that we mean that there exists a constant $\nu>0$ such that the scheme is $(\Gamma,\overline{\tau})$-secure for $\Gamma(\lambda)=2^{{\lambda}^{\nu}}$. That is, we assume the existence of a scheme in which security holds against all adversaries with runtime polynomial in $2^{{\lambda}^{\nu}}$.
\end{remark}

In Sections~\ref{sec:constructME} and~\ref{sec:preprocessing} we show the following theorem.

\begin{theorem}
\label{thm:multi_first_statement}
Let $\Omega(\lambda) \leq \Gamma(\lambda) \leq 2^{o(\lambda)}$.
If there exists a $\Gamma(\lambda)$-secure encryption scheme against non-uniform adversaries then there exists an MILR scheme that is $(\Gamma(\lambda),\overline{\tau})$-\emph{secure} against space bounded non-uniform preprocessing adversaries, where $\overline{\tau}(\lambda,s) = \frac{2s}{\lambda} + 4$.
\end{theorem}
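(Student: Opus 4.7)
My plan is to instantiate the scheme with a strong randomness extractor layered on top of the assumed underlying encryption scheme. Concretely, $\Gen(1^\lambda)$ outputs a uniform $\lambda$-bit key $x$; $\Param(1^\lambda)$ outputs a seed $p$ for a strong seeded extractor $\mathrm{Ext}\colon \{0,1\}^\lambda \times \{0,1\}^{|p|} \to \{0,1\}^{\lambda'}$, where $\lambda' = \Theta(\lambda)$ is chosen so that the underlying $\Gamma$-secure scheme with $\lambda'$-bit keys still has security $\negl(\Gamma(\lambda))$; and $\Enc(x,p,m)$ first derives $x' := \mathrm{Ext}(x,p)$ and then encrypts $m$ under $x'$ with the underlying scheme. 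Decryption is the natural counterpart. Note that the internal parameters do not depend on $s$, as advertised in the introduction.

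For property (1) (multi semantic security), since the $x_i$ are i.i.d.\ uniform, each $\mathrm{Ext}(x_i,p_i)$ is uniform and the $n=\poly(\Gamma(\lambda))$ instances behave exactly like independent copies of the underlying scheme. A standard $n$-fold hybrid argument, losing a $\poly(\Gamma(\lambda))$ factor in the distinguishing advantage, then gives negligible-in-$\Gamma(\lambda)$ advantage against $\BBB$.

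Property (2) is the heart of the argument and is where the presampling technique enters. I will invoke a bit-fixing decomposition lemma in the style of Unruh and its refinements by Coretti--Dodis--Guo--Steinberger, saying that for any $F\colon(\{0,1\}^\lambda)^n\to\{0,1\}^s$ and any error parameter $\gamma$, the joint distribution $(\vec{x},F(\vec{x}))$ is $\gamma$-close to a convex combination of distributions of the following form: a set $P\subseteq[n]$ of size at most $\tfrac{s+\log(1/\gamma)}{\lambda}$ together with fixed values $\vec{x}|_P$ is chosen, and the remaining coordinates are independent uniform. I will set $J(F,\vec{x},z,\vec{p}) := [n]\setminus P$ for the selected $P$, and tune $\gamma$ (e.g.\ $\gamma = 1/\Gamma(\lambda)^2$) so that, using $\Gamma(\lambda)\le 2^{o(\lambda)}$, the $\log(1/\gamma)/\lambda$ term is absorbed into the additive constant, yielding $|J|\ge n-(\tfrac{2s}{\lambda}+4)$. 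Conditioning on a fixed choice of $(P,\vec{x}|_P)$, the keys $\{x_j\}_{j\in J}$ are uniform and independent of the leakage $z$ (which is determined by $\vec{x}|_P$), so by the strong extractor property the effective keys $\{\mathrm{Ext}(x_j,p_j)\}_{j\in J}$ are jointly $\negl$-close to uniform; another $|J|$-fold hybrid then reduces distinguishing $\oracle_0$ from $\oracle_1$ on $J$ to breaking a single instance of the underlying scheme, losing an additional $\poly(\Gamma(\lambda))$ factor which remains negligible in $\Gamma(\lambda)$.

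The main obstacle I anticipate is lifting the presampling bound, which is inherently information-theoretic and proved for uniformly sampled queries, to the computational setting where $\BBB$ is an adaptive non-uniform adversary with access to $z$ and to encryption oracles. To handle this I would separate the argument into a ``bit-fixing world'' in which the presampling lemma gives a clean product structure on the free coordinates, and then invoke non-uniform security of the underlying scheme (our assumption) to absorb the polynomial blowup in the reduction. Balancing the three error sources (presampling slack $\gamma$, extractor deficiency, and per-instance semantic security) so that each remains negligible in $\Gamma(\lambda)$, while keeping the bound $\overline{\tau}(\lambda,s)=\tfrac{2s}{\lambda}+4$ independent of $\gamma$, is the quantitatively delicate step that I expect to occupy the bulk of the formal proof in Sections~\ref{sec:constructME} and~\ref{sec:preprocessing}.
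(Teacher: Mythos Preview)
Your high-level architecture (extractor layered on the underlying scheme, a presampling-style decomposition, then a hybrid reduction to single-instance security) matches the paper's, but the central decomposition step as you state it is false, and this is not a detail that can be patched by tuning $\gamma$. You claim that $(\vec{x},F(\vec{x}))$ is $\gamma$-close to a convex combination of \emph{bit-fixing} sources with at most $\tfrac{s+\log(1/\gamma)}{\lambda}$ fixed coordinates and the remaining coordinates \emph{independent uniform}. No such lemma holds with those parameters. A concrete obstruction: let $F$ output one fixed linear functional (say the parity) of each of the first $s$ keys. Conditioned on $z$, every one of those $s$ keys is uniform on an affine hyperplane, hence at statistical distance $1/2$ from uniform; this distribution is not $\gamma$-close to any bit-fixing source that fixes fewer than $s\gg s/\lambda$ coordinates. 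What Coretti--Dodis--Guo--Steinberger actually prove (the statement the paper imports as Lemma~\ref{lem:fixing}) is that $X_z$ is $\gamma$-close to a convex combination of $(k,1-\delta)$-\emph{dense} sources with $k=\tfrac{s_z+\log(1/\gamma)}{\delta\lambda}$: the unfixed coordinates are not uniform, they only carry min-entropy rate $1-\delta$ on every subset.

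This is exactly where the extractor does real work in the paper's proof, rather than serving as a key-derivation convenience. The paper proves a block-wise leftover-hash statement (Lemma~\ref{lem:leftover}) showing that independent universal hashes applied coordinatewise to a $(1-\delta)$-dense source produce output that is \emph{jointly} close to uniform, provided $\lambda'<(1-\delta)\lambda-\log n-1$. Composing this with the dense decomposition yields that $(G,Z,G(X))$ is close to a family of $k$-bit-fixing sources \emph{on the hashed keys} (Lemma~\ref{lem:leakage}); the factor $1/\delta$ is precisely the origin of the $2$ in $\overline{\tau}=\tfrac{2s}{\lambda}+4$ (the paper takes $\delta=1/2$, $\lambda'=0.1\lambda$). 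In your write-up the extractor is present but idle: once you (incorrectly) assume the unfixed $x_j$ are already uniform, there is nothing for it to extract. A symptom of the same gap is your definition $J=[n]\setminus P$, independent of $\vec{p}$; in the paper $J$ is determined by which bit-fixing component of $V_{\vec{g},z}$ is realized and genuinely depends on the hash seeds $\vec{g}$.
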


\section{Space Hardness for Differential Privacy} \label{sec:privacy}

Consider an algorithm that is instantiated on a dataset $D$ and then aims to answer a query w.r.t.\ $D$. We say that such an algorithm has space $s$ if, before it gets the query, it shrinks $D$ to a {\em summary} $z$ containing at most $s$ bits. Then, when obtaining the query, the algorithm answers it using only the summary $z$ (without additional access to the original dataset $D$).

Let $(\Gen,\Param,\Enc,\Dec)$ be an MILR scheme with security parameter $\lambda$. In the $(\lambda,d)$-decoded average (DA) problem with sample complexity $n$, the input dataset contains $n$ keys, that is $D=(x_1,\dots,x_n)\in\left(\{0,1\}^\lambda\right)^n$. A query $q=\left((p_1,c_1),\dots,(p_n,c_n)\right)$ is specified using $n$ pairs $(p_i,c_i)$ where $p_i$ is a public parameter and $c_i$ is a ciphertext, which is an encryption of a binary vector of length $d$. The goal is to release a vector $\vec{a}=(a_1,\dots,a_d)\in[0,1]^d$ that approximates the ``decrypted average vector (dav)'', defined as 
$
{\rm dav}_q (D)=
\frac{1}{n}\sum_{i=1}^n \Dec(x_i,p_i,c_i).
$

\begin{definition}
Let $\AAA=(\AAA_1,\AAA_2)$ be an algorithm where 
$\AAA_1:\left(\{0,1\}^\lambda\right)^n\rightarrow\{0,1\}^s$ is the preprocessing procedure that summarizes a dataset of $n$ keys into $s$ bits, and where $\AAA_2$ is the ``response algorithm'' that gets the outcome of $\AAA_1(D)$ and a query $q$. We say that $\AAA$ solves the DA problem if with probability at least $9/10$ the output is a vector $\vec{a}$ satisfying
$
\left\|\vec{a} - {\rm dav}_q (D) \right\|_{\infty} \leq\frac{1}{10}.
$
\end{definition}

Without privacy considerations, the DA problem is almost trivial. Specifically,
\begin{lemma}\label{lem:DAisEasyNonPrivately}
The $(\lambda,d)$-DA problem can be solved efficiently using space $s=O\left( \lambda\log(d) \right)$.
\end{lemma}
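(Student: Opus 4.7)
The plan is to sub-sample a small number of keys during preprocessing and, at query time, decrypt only the corresponding ciphertexts and return the empirical mean of the resulting binary vectors. Concretely, $\AAA_1$ samples $k = \Theta(\log d)$ indices $i_1,\dots,i_k$ uniformly at random (with replacement) from $[n]$ and stores the list of pairs $(i_j, x_{i_j})_{j=1}^k$. This takes $k(\log n + \lambda) = O(\lambda \log d)$ bits, since in all parameter regimes of interest $n$ is polynomial in the security parameter and hence $\log n = O(\lambda)$ (otherwise one simply replaces $\lambda$ by $\lambda + \log n$ in the bound, which does not affect the $\polylog(d)$ conclusion used downstream).

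Given a query $q=((p_1,c_1),\dots,(p_n,c_n))$, the response algorithm $\AAA_2$ computes $v_j := \Dec(x_{i_j}, p_{i_j}, c_{i_j}) \in \{0,1\}^d$ for each $j \in [k]$ and returns $\hat a := \tfrac{1}{k}\sum_{j=1}^k v_j \in [0,1]^d$. Both procedures run in time polynomial in $n$, $\lambda$, and $d$.

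For accuracy, observe that because $i_j$ is uniform on $[n]$, each $v_j$ is a uniformly random element of the multiset $\{\Dec(x_i,p_i,c_i)\}_{i\in[n]}$, so coordinate-wise $\E[v_j] = \mathrm{dav}_q(D)$, and the $v_j$'s are i.i.d. Since the coordinates of each $v_j$ lie in $\{0,1\}$, a standard Hoeffding bound applied coordinate-wise and a union bound over the $d$ coordinates shows that taking $k$ to be a sufficiently large constant multiple of $\log d$ drives $\|\hat a - \mathrm{dav}_q(D)\|_\infty$ below $1/10$ with probability at least $9/10$.

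I expect no genuine obstacle here: this is a textbook subsampling argument enabled precisely by the absence of privacy constraints (the non-private algorithm may freely discard all but $O(\log d)$ of the input keys). The only bookkeeping to check is that storing $k$ sampled indices together with their $\lambda$-bit keys fits within the claimed $O(\lambda \log d)$ summary size, which follows from the assumption $\log n = O(\lambda)$ noted above.
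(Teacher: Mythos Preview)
Your proposal is correct and follows essentially the same approach as the paper: subsample $O(\log d)$ keys in $\AAA_1$, decrypt and average at query time in $\AAA_2$, and invoke a Chernoff/Hoeffding bound with a union bound over the $d$ coordinates. The paper's proof is a two-line sketch omitting the bookkeeping you supply (storing indices, the $\log n = O(\lambda)$ point), but the argument is the same.
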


\begin{proof}
The preprocessing algorithm $\AAA_1$ samples $O\left( \log d \right)$ of the input keys. Algorithm $\AAA_2$ then gets the query $q$ and estimates the ${\rm dav}$ vector using the sampled keys. The lemma then follows by the Chernoff bound.
\end{proof}

\begin{remark}\label{rem:streamingModel}
As we mentioned, to simplify the presentation, in our computational model we identify the space complexity of algorithm $\AAA=(\AAA_1,\AAA_2)$ with the size of the output of algorithm $\AAA_1$. We remark, however, that our separation extends to a streaming model where both $\AAA_1$ and $\AAA_2$ are required to have small space. To see this, note that algorithm $\AAA_1$ in the above proof already has small space (and not just small output length), as it merely samples $O(\log d)$ keys from its input dataset. We now analyze the space complexity of $\AAA_2$, when it reads the query $q$ in a streaming fashion. Recall that the query $q$ contains $n$ public parameters $p_1,\dots,p_n$ and $n$ ciphertexts $c_1,\dots,c_n$, where each $c_i$ is 
an encryption of a $d$-bit vector, call it $y_i\in\{0,1\}^d$. To allow $\AAA_2$ to read $q$ using small space, we order it as follows:
$
q=(p_1,\dots,p_n),(c_{1,1},\dots,c_{n,1}),\dots,(c_{1,d},\dots,c_{n,d})\triangleq q_0\circ q_1\circ\dots\circ q_d.
$
Here $c_{i,j}=\Enc(x_i,p_i,y_i[j])$ is an encryption of the $j$th bit of  $y_i$ using key $x_i$ and public parameter $p_i$. Note that the first part of the stream, $q_0$, contains the public parameters, and then every part $q_j$ contains encryptions of the $j$th bit of each of the $n$ input vectors. With this ordering of the query, algorithm $\AAA_2$ begins by reading $q_0$ and storing the $O(\log d)$ public parameters corresponding to the keys that were stored by $\AAA_1$. Then, for every $j\in[d]$, when reading $q_j$, algorithm $\AAA_2$ estimates the average of the $j$th coordinate using the sampled keys. Algorithm $\AAA_2$ then outputs this estimated value, and proceeds to the next coordinate. This can be implemented using space complexity $\poly(\lambda\log(d))$.
\end{remark}

So, without privacy constraints, the DA problem can be solved using small space. We now show that, assuming that the input dataset is large enough, the DA problem can easily be solved with differential privacy using {\em large} space. Specifically,

\begin{lemma}\label{lem:DAisSolvablePrivately}
There is a computationally efficient $(\eps,\delta)$-differentially private algorithm that solves the $(\lambda,d)$-DA problem using space $s=O\left( \frac{1}{\eps}\cdot \sqrt{d\cdot\log(\frac{1}{\delta})}\cdot \lambda\cdot \log d \right)$, provided that the size of the input dataset satisfies $n=\Omega(s)$ (large enough).
\end{lemma}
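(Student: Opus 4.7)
The plan is to instantiate a standard sample-then-add-Gaussian-noise mechanism. Set $m := \Theta\!\left(\sqrt{d\log(1/\delta)\log d}/\eps\right)$. The preprocessing procedure $\AAA_1(D)$ samples a uniformly random index set $I \subseteq [n]$ of size $m$ (without replacement) and stores $z := (x_i)_{i \in I}$, which uses $m\lambda$ bits; since $\sqrt{\log d} \le \log d$, this matches the claimed space bound. The assumption $n = \Omega(s)$ ensures $n \ge m$, so the sampling is well-defined. Given $z$ and a query $q = ((p_i, c_i))_{i=1}^n$, the response algorithm $\AAA_2$ decrypts $y_i := \Dec(x_i, p_i, c_i) \in \{0,1\}^d$ for each $i \in I$, forms the sample average $\hat y := \frac{1}{m}\sum_{i\in I} y_i$, draws $\eta \sim \mathcal N(0, \sigma^2 I_d)$ with $\sigma := \Theta(\sqrt{d\log(1/\delta)}/(m\eps))$, and outputs $\hat y + \eta$ clipped coordinatewise to $[0,1]$.

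For privacy, first view $\AAA_2(\cdot, q)$ as operating on the $m$-element ``sub-dataset'' $z$. The map $z \mapsto \hat y$ has $\ell_2$-sensitivity at most $\sqrt{d}/m$, because changing one key alters at most one decrypted vector in $\{0,1\}^d$. The standard Gaussian mechanism with the stated $\sigma$ therefore makes $\AAA_2$ an $(\eps, \delta)$-DP algorithm with respect to $z$. To lift this to $(\eps,\delta)$-DP of $\AAA$ with respect to $D$, I use a coupling argument: for neighboring datasets $D, D'$ differing only at position $i^*$, couple $\AAA_1$'s randomness so that both executions draw the same index set $I$. Then the resulting samples $z, z'$ are either identical (if $i^* \notin I$) or differ in exactly one key (if $i^* \in I$). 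Applying the DP guarantee of $\AAA_2$ conditionally on $I$ and taking expectation over $I$ preserves the $(\eps,\delta)$ bound.

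For utility, decompose $\| \hat y + \eta - \mathrm{dav}_q(D)\|_\infty \le \|\hat y - \mathrm{dav}_q(D)\|_\infty + \|\eta\|_\infty$. Each coordinate of $\hat y$ is the empirical average of a size-$m$ sample without replacement from an $n$-element $\{0,1\}$-valued multiset, so Hoeffding's inequality (valid also without replacement) combined with a union bound over the $d$ coordinates yields $\|\hat y - \mathrm{dav}_q(D)\|_\infty \le O(\sqrt{\log(d)/m}) \le 1/20$ with probability at least $0.95$ for the chosen $m$. A standard Gaussian tail bound and a union bound give $\|\eta\|_\infty \le O(\sigma \sqrt{\log d}) \le 1/20$ with probability at least $0.95$ for the chosen $\sigma$. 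Combining, the total $\ell_\infty$ error is at most $1/10$ with probability at least $0.9$, as required.

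The main technical point is balancing the two constraints on $m$: the sampling-concentration bound requires only $m = \Omega(\log d)$, while the Gaussian-noise bound dominates and forces $m = \Omega(\sqrt{d\log(1/\delta)\log d}/\eps)$; taking $m$ at the latter scale and choosing $\sigma$ accordingly gives the claimed space bound. Computational efficiency is immediate: $\AAA_1$ merely samples $m$ indices and copies the corresponding keys, while $\AAA_2$ performs $m$ decryptions, computes an average, and draws $d$ Gaussians, all in time $\poly(d, \lambda, 1/\eps, \log(1/\delta))$.
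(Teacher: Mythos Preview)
Your proof is correct and follows essentially the same approach as the paper: subsample $\tilde\Theta(\sqrt{d}/\eps)$ of the keys and then privately release the empirical decrypted average. The paper's three-line argument invokes advanced composition over the $d$ coordinates rather than applying the Gaussian mechanism to the full $d$-dimensional vector, but this is a cosmetic difference---your one-shot Gaussian analysis is equally standard and in fact yields a sample size that is a $\sqrt{\log d}$ factor smaller than the stated bound.
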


\begin{proof}
The preprocessing algorithm $\AAA_1$ samples $\approx \sqrt{d}$ of the keys. By standard composition theorems for differential privacy~\cite{DworkRV10}, this suffices for the response algorithm $\AAA_2$ to privately approximate each of the $d$ coordinates of the target vector.
\end{proof}

Thus the DA problem can be solved non-privately using small space, and it can be solved privately using large space. Our next goal is to show that large  space is indeed necessary to solve this problem privately. Before showing that, we introduce several additional preliminaries on computational differential privacy and on fingerprinting codes.

\subsection{Preliminaries on computational differential privacy and fingerprinting codes}

Computational differential privacy was defined by Beimel et al~\cite{BeimelNO08} and Mironov et al.~\cite{MironovPRV09}. Let $\AAA$ be a randomized algorithm (mechanism) that operates on datasets.
Computational differential privacy is defined via a two player game between a {\em challenger} and an adversary, running a pair of algorithms $(Q,T)$.
The game begins with the adversary $Q$ choosing a pair of neighboring datasets $(D_0,D_1)$ of size $n$ each, as well as an arbitrary string $r$ (which we think of as representing its internal state). Then the challenger samples a bit $b$ and applies $\AAA(D_b)$ to obtain an outcome $a$. Then $T(r,\cdot)$ is applied on $a$ and tries to guess $b$. %
Formally,

\begin{definition}\label{def:CompDP}
Let $\lambda$ be a security parameter, let $\eps$ be a constant, and let $\delta:\R\rightarrow\R$ be a function. 
A randomized algorithm $\AAA:X^*\rightarrow Y$ is {\em  $(\eps,\delta)$-computationally differentially private (CDP)} if for every $n=\poly(\lambda)$ and every non-uniform $\poly(\lambda)$-time adversary $(Q,T)$ there exists a negligible function $\negl$ such that
$$
\Pr_{\substack{(r,D_0,D_1)\leftarrow Q\\ \AAA, T}}[T(r,\AAA(D_0))=1]\leq e^{\eps}\cdot\Pr_{\substack{(r,D_0,D_1)\leftarrow Q\\ \AAA, T}}[T(r,\AAA(D_1))=1]+\delta(n)+\negl(\lambda).
$$
\end{definition}

\begin{definition}
Let $\eps$ be a constant and let $\delta=\delta(\lambda)$ be a function.
Given two probability ensembles $\XXX=\{X_{\lambda}\}_{\lambda\in\N}$ and $\YYY=\{Y_{\lambda}\}_{\lambda\in\N}$ we write  $\XXX\approx_{\eps,\delta}\YYY$ if for
every non-uniform probabilistic polynomial-time distinguisher $D$ there exists a negligible
function $\negl$ such that
$
\Pr_{x\leftarrow X_{\lambda}}\left[D(x)=1\right]
\leq e^{\eps}\cdot\Pr_{y\leftarrow Y_{\lambda}}\left[D(y)=1\right] + \delta(\lambda) + \negl(\lambda),$ and vice versa.
\end{definition}

We recall the concept of {\em fingerprinting codes}, which was introduced by Boneh and Shaw~\cite{boneh1998collusion} as a cryptographic tool for watermarking digital content. 
Starting from the work of Bun, Ullman, and Vadhan \cite{BunUV14}, fingerprinting codes have played a key role in proving lower bounds for differential privacy in various settings.

A (collusion-resistant) fingerprinting code is a scheme for distributing codewords $w_1,\cdots, w_n$ to $n$ users that can be uniquely traced back to each user. 
Moreover, if a group of (at most $k$) users combines its codewords into a pirate codeword $\hat{w}$, then the pirate codeword can still be traced back to one of the users who contributed to it. 
Of course, without any assumption on how the pirates can produce their combined codeword, no secure tracing is possible. 
To this end, the pirates are constrained according to a {\em marking assumption}, which asserts that the combined codeword must agree with at least one of the  ``real'' codewords in each position. 
Namely, at an index $j$ where $w_{i}[j] = b$ for every $i\in[n]$, the pirates are constrained to output $\hat{w}$ with $\hat{w}[j] = b$ as well.\footnote{We follow the formulation of the marking assumption as given by \cite{SteinkeU15}, which is a bit different than the commonly considered one.}

\begin{definition}[\cite{boneh1998collusion,tardos2008optimal}]
A $k$-collusion resilient fingerprinting code of length $d$ for $n$ users with failure probability $\gamma$, or $(n,d,k,\gamma)$-FPC in short, is a pair of random variables $C\in\{0,1\}^{n\times d}$ and 
${\rm Trace}:\{0,1\}^d\rightarrow2^{[n]}$ such that the following holds. For all adversaries $P:\{0,1\}^{k\times d}\rightarrow\{0,1\}^d$ and $S\subseteq[n]$ with $|S|=k$, 
$$
\Pr_{C,{\rm Trace},P}\left[ \left( \forall 1\leq j\leq d \;\; \exists i\in[n] \text{ s.t.\ } P(C_S)[j]=c_i[j]  \right) \wedge \left( {\rm Trace}(P(C_S))=\emptyset \right)  \right]\leq\gamma,
$$
and 
$$
\Pr_{C,{\rm Trace},P}\left[
{\rm Trace}(P(C_S))\cap([n]\setminus S) \neq\emptyset
\right]\leq\gamma,
$$
 where $C_S$ contains the rows of $C$ given by $S$.
\end{definition}

\begin{remark}\label{rem:marking}
As mentioned, the condition $\left\{\forall 1\leq j\leq d \;\; \exists i\in[n] \text{ s.t.\ } P(C_S)[j]=c_i[j]\right\}$ is called the ``marking assumption''. The second condition is called the ``small probability of false accusation''. Hence, if the adversary $P$ guarantees that its output satisfies the marking assumption, then with probability at least $1-2\gamma$ it holds that algorithm {\rm Trace} outputs an index $i\in S$.
\end{remark}

\begin{theorem}[\cite{boneh1998collusion,tardos2008optimal,SteinkeU15}]\label{thm:FPC}
For every $1\leq k\leq n$ there is a $k$-collusion-resilient fingerprinting code of length $d=O(k^2 \cdot \log n )$ for $n$ users with failure probability $\gamma= \frac{1}{n^2}$ and an efficiently computable Trace function.
\end{theorem}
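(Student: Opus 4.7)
The plan is to use Tardos's probabilistic construction, which achieves the optimal rate $d=O(k^{2}\log n)$. First I would specify the code distribution: for each column $j\in[d]$, sample a bias $p_{j}\in(t,1-t)$ independently from a truncated arcsine distribution (with density proportional to $(p(1-p))^{-1/2}$ and cutoff $t$ chosen like $t = \Theta(1/k)$), and then independently set each codeword bit as $c_{i}[j]\sim\mathrm{Bernoulli}(p_{j})$. This column-by-column independence is what allows a clean tracing analysis.

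Next I would define the Trace function through a per-user score. Given the pirated codeword $\hat{w}$ produced by the coalition $S$ (with $|S|=k$), define user $i$'s score as $Z_{i}=\sum_{j=1}^{d} s_{j}(c_{i}[j],\hat{w}[j])$, where $s_{j}$ is the standard Tardos scoring: if $\hat{w}[j]=1$ then reward $c_{i}[j]=1$ by $\sqrt{(1-p_{j})/p_{j}}$ and penalize $c_{i}[j]=0$ by $-\sqrt{p_{j}/(1-p_{j})}$ (and symmetrically when $\hat{w}[j]=0$). Trace would then output every index $i$ whose score exceeds a threshold $\theta=\Theta(k\log n)$. Two properties of this scoring make it work: for an honest (non-colluding) user, the score in each column has mean $0$ and variance $O(1)$, so $Z_{i}$ is a sum of bounded-variance independent terms; for the colluders, the sum of scores is large whenever the marking assumption holds.

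The analysis then splits into two tail bounds. For \emph{soundness} (no false accusation), I would apply a sub-Gaussian concentration bound (Chernoff/Bernstein, using that individual contributions are bounded by $O(\sqrt{k})$ because of the cutoff $t$) to show that each innocent user's score exceeds $\theta$ with probability at most $1/n^{2}$, then union bound over the at most $n-k$ innocent users. For \emph{completeness} (accusing some colluder whenever the marking assumption is met), I would compute $\E\bigl[\sum_{i\in S} Z_{i}\bigr]$ conditional on $\hat{w}$ satisfying the marking constraint: a column-wise argument shows this expectation is $\Omega(d/k)$, so by the choice $d=C k^{2}\log n$ the total colluder score exceeds $k\theta$ with probability $\geq 1-1/n^{2}$; consequently at least one colluder's score exceeds $\theta$, and Trace catches them.

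The main technical obstacle is calibrating the three free parameters $(t,\theta,d)$ simultaneously to make both tails fall below $1/n^{2}$ at a single length $d=O(k^{2}\log n)$. This is exactly the balancing act that the arcsine distribution is designed to solve: it makes the second-moment estimates for innocent users tight (giving the $k^{2}$ dependence rather than $k^{3}$) while simultaneously keeping the conditional mean of colluder scores large under the marking assumption. Once these parameters are tuned, efficient computability of both the sampling and the Trace function is immediate, since everything reduces to sampling $d=\mathrm{poly}(k,\log n)$ independent biases and evaluating $nd$ scalar scores.
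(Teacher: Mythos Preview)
The paper does not prove this theorem at all; it is stated as a black-box import from the cited references (Boneh--Shaw for the framework, Tardos for the optimal $O(k^{2}\log n)$ length, and Steinke--Ullman for the specific formulation used here). Your sketch is the standard Tardos construction and analysis, which is precisely what the citation to \cite{tardos2008optimal} is pointing to, so you are reconstructing the cited argument rather than diverging from anything the paper itself does.

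One point worth flagging: the marking assumption in the paper's definition is phrased as ``for every $j$ there exists $i\in[n]$ with $P(C_{S})[j]=c_{i}[j]$,'' i.e., agreement with \emph{some user among all $n$}, not just with some colluder in $S$. The paper explicitly notes (in the footnote) that this is the Steinke--Ullman variant and differs from the classical formulation. Your completeness argument is written for the classical assumption (at positions where all colluders agree, the output matches them), which is a stronger constraint on the pirate and hence an easier hypothesis for Trace. Under the weaker variant used here, the pirate has more freedom, so the completeness bound requires the adaptation carried out in \cite{SteinkeU15} rather than the vanilla Tardos calculation. This is a detail rather than a gap in your overall plan, but if you want your sketch to match the statement as written you should either cite \cite{SteinkeU15} for that step or indicate how the score lower bound survives the weaker hypothesis.
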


We remark that there exist both adaptive and non-adaptive constructions of fingerprinting codes with the guarantees of Theorem~\ref{thm:FPC}; we use the non-adaptive variant. 

\subsection{A negative result for the DA problem}

Our main negative result for space bounded differentially private algorithms is the following.

\begin{theorem}\label{thm:DPhard}
Let $\Pi=(\Gen,\Param,\Enc,\Dec)$ be an MILR scheme with security parameter $\lambda$ that is  $(\Gamma,\overline{\tau})$-secure against space bounded preprocessing adversaries. Let $d\leq\poly(\Gamma(\lambda))$ and $n\leq\poly(\Gamma(\lambda))$ be functions of $\lambda$. 
Let $\eps$ be a constant and let $\delta\leq\frac{1}{4n(e^{\eps}+1)}=\Theta(\frac{1}{n})$. 
For every $\poly(\Gamma(\lambda))$-time $(\eps,\delta)$-CDP algorithm for the $(\lambda,d)$-DA problem with sample complexity $n$ and space complexity $s$ it holds that
$
\overline{\tau}(\lambda,s)=\Omega\left(\sqrt{\frac{d}{\log n}}\right).
$
\end{theorem}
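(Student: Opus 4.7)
The plan is to lift the fingerprinting-codes attack of Bun, Ullman, and Vadhan \cite{BunUV14} for $1$-way marginals to our space-bounded setting, using the MILR scheme as the bridge that converts a bound on the leakage size $s$ into a bound on the number of keys that $\AAA$ can effectively see. Write $k:=\overline{\tau}(\lambda,s)$ and suppose toward contradiction that $k = o(\sqrt{d/\log n})$. By Theorem~\ref{thm:FPC} I instantiate a $(k+1)$-collusion-resilient FPC $(C,\mathrm{Trace})$ for $n$ users of length $d^{*}=O(k^{2}\log n)\leq d$ with failure probability $\gamma=1/n^{2}$, padding each codeword to length $d$ with zeros. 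The attack samples $\vec{x}\leftarrow\Gen(1^{\lambda})^{n}$, $\vec{p}\leftarrow\Param(1^{\lambda})^{n}$, and a random $C$; sets $c_{i}\leftarrow\Enc(x_{i},p_{i},C[i,\cdot])$ bit by bit; runs $z\leftarrow\AAA_{1}(\vec{x})$ and $\vec{a}\leftarrow\AAA_{2}(z,(\vec{p},\vec{c}))$; rounds $\vec{a}$ entrywise to $\hat{w}\in\{0,1\}^{d}$; and computes $T:=\mathrm{Trace}(\hat{w})\subseteq[n]$. The desired contradiction will come from two incompatible estimates of $\sum_{i\in[n]}\Pr[i\in T]$.

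For the upper bound, fix $i\in[n]$ and consider the neighbor $\vec{x}^{(i)}$ obtained by replacing $x_{i}$ with an independent fresh key. Because $\vec{x}^{(i)}$ does not contain $x_{i}$, MILR multi-semantic security implies that $c_{i}$ is computationally indistinguishable from $\Enc(x_{i},p_{i},0)$ in the $\vec{x}^{(i)}$-experiment, so $\AAA$'s transcript there is computationally independent of $C[i,\cdot]$; combined with MILR leakage-security applied to $\AAA_{1}$ on $\vec{x}^{(i)}$, the transcript depends on $C$ only through $C[S^{(i)}\setminus\{i\},\cdot]$ for a set $S^{(i)}$ of size at most $k$. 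The FPC small-false-accusation condition then yields $\Pr[i\in T_{\vec{x}^{(i)}}]\leq\gamma+\negl$, and the CDP guarantee gives $\Pr[i\in T]\leq e^{\eps}(\gamma+\negl)+\delta$. Summing over $i\in[n]$ and plugging $\gamma=1/n^{2}$ together with $\delta\leq\frac{1}{4n(e^{\eps}+1)}$ yields $\sum_{i}\Pr[i\in T]\leq\frac{e^{\eps}}{n}+\frac{1}{4(e^{\eps}+1)}+n\cdot\negl$, which is strictly below $1$ for polynomially large $n$.

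For the matching lower bound I apply MILR leakage-security directly to $\AAA_{1}$ on $\vec{x}$: in the alternate experiment where each ciphertext at $i\in J$ (with $|J|\geq n-k$ as guaranteed by MILR) is replaced by $\Enc(x_{i},p_{i},0)$, the joint distribution of $(z,\vec{a},\hat{w},T)$ is computationally indistinguishable from the real one, and $\hat{w}$ in the alternate experiment is a function of $C[S,\cdot]$ only, where $S:=[n]\setminus J$ has $|S|\leq k$. By real-game accuracy $\vec{a}$ is within $\frac{1}{10}$ of the column-mean of $C$, so $\hat{w}$ agrees with the column-majority of $C$; for a random FPC codebook with $|S|=k$ the probability that at some column $j$ the set $\{C[i,j]:i\in S\}$ excludes this majority value is at most $d^{*}\cdot 2^{-k}=o(1)$. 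Hence $\hat{w}$ satisfies marking with respect to $S$ with probability $1-o(1)-\negl$, and FPC non-emptiness forces $|T|\geq 1$ except with probability $\gamma+o(1)+\negl$; this gives $\sum_{i}\Pr[i\in T]\geq 1-o(1)-\negl$, contradicting the upper bound above.

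The technical heart of the argument, and the step I expect to be the main obstacle, is turning the informal claim ``$\hat{w}$ depends on $C$ only through $C[S,\cdot]$'' into rigorous statements that can be chained with the $\negl$ slack coming from MILR leakage-security. The cleanest way is to phrase both the upper-bound and lower-bound reductions as explicit efficient distinguishers against either MILR leakage-security or the underlying encryption scheme; designing those distinguishers carefully — in particular, ensuring that their queries to the MILR oracle faithfully simulate the attack and that $\mathrm{Trace}$ is efficiently evaluable inside the distinguisher — is the bulk of the remaining work.
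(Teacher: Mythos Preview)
Your approach is essentially the paper's: combine a fingerprinting code with the MILR scheme so that any small-space CDP solver yields an FPC pirate that both satisfies marking and effectively uses only $\overline{\tau}$ codewords, then derive a contradiction from $\mathrm{Trace}$. Two points in your write-up need fixing.

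First, in your lower bound the claim ``for a random FPC codebook with $|S|=k$ the probability that at some column $j$ the set $\{C[i,j]:i\in S\}$ excludes the majority value is at most $d^{*}\cdot 2^{-k}$'' is unjustified: the abstract FPC definition says nothing about the distribution of $C$, so you cannot pull a $2^{-k}$ out of it. Fortunately the step is also unnecessary. Re-read the marking condition in the FPC definition: it requires that at every column $j$ there exist some $i\in[n]$ (not $i\in S$) with $\hat w[j]=c_i[j]$. Once you know, via accuracy in the real experiment and computational indistinguishability with the alternate one, that $\hat w$ is the rounding of a vector within $1/10$ of the full column average of $C$, \emph{global} marking is immediate: at any column where the average lies strictly in $(0,1)$ both bit-values are present among the $c_i$, and at a constant column the rounded output equals that constant. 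So drop the $d^{*}\cdot 2^{-k}$ argument and invoke global marking directly.

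Second, in your upper bound you invoke MILR leakage-security on $\AAA_1(\vec{x}^{(i)})$ to shrink the effective coalition to $S^{(i)}\setminus\{i\}$. This is awkward: the ciphertext at position $i$ is built with the \emph{original} key $x_i$, not the fresh key $x_i'$ that sits in the preprocessing input, so the MILR oracle does not govern it and you need an extra hybrid to reconcile the two key systems. The paper avoids this complication by invoking MILR leakage only once (to pass from $\mathcal B$ to the zeroed variant $\hat{\mathcal B}$) and obtaining the analogue of your upper bound via CDP of $\hat{\mathcal B}$ \emph{with respect to the codewords}: replace $x_\ell$ in $\AAA_1$'s input by a fresh key (CDP of $\AAA$), then swap $w_\ell$ for $w'_\ell$ in the $\ell$-th ciphertext (plain semantic security, since $x_\ell$ is now unseen by $\AAA$), then swap the key back. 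This gives $(2\eps,(e^\eps+1)\delta)$-CDP in the codewords with no second appeal to the leakage property, after which a single swap $w_{i^*}\leftrightarrow w_0$ and the no-false-accusation clause of the FPC finish the contradiction. Your two-sided $\sum_i \Pr[i\in T]$ bookkeeping is a fine alternative packaging, but organizing the hybrids this way will make the ``$\hat w$ depends only on $C[S,\cdot]$'' step you flagged much easier to formalize.
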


\begin{proof}
Let $\AAA=(\AAA_1,\AAA_2)$ be a $\poly(\Gamma(\lambda))$-time CDP algorithm for the $(\lambda,d)$-DA problem using sample complexity $n=\poly(\Gamma(\lambda))$ and space complexity $s$. 
Denote $\overline{\tau}=\overline{\tau}(\lambda,s)$, and assume towards a  contradiction that $\overline{\tau}=O\left(\sqrt{\frac{d}{\log n}}\right)$ (small enough).
We construct the following adversary $\BBB$ to an $\left(n{+}1,d,\overline{\tau},\frac{1}{n^2}\right)$-FPC (such a code is guaranteed to exist by Theorem~\ref{thm:FPC} and by the contradictory assumption). 

\begin{enumerate}
    \item The input is $n$ codewords $w_1,\dots,w_n\in\{0,1\}^d$.
    \item Sample $n$ keys $x_1,\dots,x_n\sim\Gen(1^\lambda)$.
    \item Let $z\leftarrow\AAA_1(x_1,\dots,x_n)$.
    \item Sample $n$ public parameters $p_1,\dots,p_n\sim\Param(1^\lambda)$.
    \item For $i\in[n]$ let $c_i\leftarrow\Enc(x_i,p_i,w_i)$.
    \item Let $\vec{a}\leftarrow\AAA_2(z, (p_1,c_1),\dots, (p_n,c_n) )$.
    \item Output $\vec{a}$, after rounding its coordinates to $\{0,1\}$.
\end{enumerate}

We think of $\BBB$ as an adversary to an FPC, and indeed, its input is a collection of codewords and its output is a binary vector of length $d$. Observe that if $\AAA$ solves the DA problem (i.e., approximates the decrypted average vector), then for every coordinate, the outcome of $\BBB$ must agree with at least one of the input codewords, namely, it satisfies the marking assumption (see Remark~\ref{rem:marking}).

\begin{remark}
Before we proceed with the formal proof, we give an overview of its structure (this remark can be ignored, and is not needed for the formal proof). Informally, we will show that 
\begin{enumerate}
    \item[(1)] Algorithm $\BBB$ is computationally differentially private w.r.t.\ the collection of {\em codewords} (even though our assumption on $\AAA$ is that it is private w.r.t.\ the {\em keys}).
    \item[(2)] Leveraging the properties of the MILR scheme, we will show that $\BBB$ must effectively ignore most of its inputs, except for at most $\overline{\tau}$ codewords. This means that $\BBB$ is effectively an FPC adversary that operates on only $\overline{\tau}$ codewords (rather than the $n$ codewords it obtains as input).
    \item[(3)] A known result in the literature of differential privacy (starting from \cite{BunUV14}) is that a successful FPC adversary cannot be differentially private, because this would contradict the fact that the tracing algorithm is able to recover one of its input points. Our gain here comes from the fact that $\BBB$ only uses (effectively) $\overline{\tau}$ codewords, and hence, in order to get a contradiction, it suffices to use an FPC with a much shorter codeword-length (only $\approx \overline{\tau}^2$ instead of $\approx n^2$). This will mean that the hardness of the DA problem depends on the space of $\AAA$ (which controls $\overline{\tau}$) rather than the size of the input (which is $n$).
\end{enumerate}
\end{remark}

Recall that $\AAA=(\AAA_1,\AAA_2)$ is computationally differentially private w.r.t.\ the {\em keys}. 
We first show that $\BBB$ is computationally differentially private w.r.t.\ the {\em codewords}. %
To this end, let $Q$ be an adversary (as in Definition~\ref{def:CompDP}) that outputs a pair of neighboring datasets $(\vec{w},\vec{w'})$, each containing $n$ codewords, together with a state $r$. Given $(\vec{w},\vec{w'})$, we write $\ell=\ell(\vec{w},\vec{w'})\subseteq[n]$ to denote the index on which $\vec{w},\vec{w'}$ differ. We also write $x_0$ to denote another key, independent of the keys $x_1,\dots,x_n$ sampled by algorithm $\BBB$. 
By the privacy guarantees of algorithm $\AAA$ and by the semantic security of the encryption scheme (see Definition~\ref{def:muiltienc}) we have that
\begin{align*}
\langle r,\BBB&(\vec{w}) \rangle\equiv\\
    \equiv&\; 
    \langle r,\AAA_2\left( \AAA_1(x_1,...,x_{\ell},...,x_n) , \vec{p} ,  \Enc(x_1,p_1,w_1), ..., \Enc(x_{\ell},p_{\ell},w_{\ell}), ..., \Enc(x_n,p_n,w_n) \right)\rangle\\
    \approx_{(\eps,\delta)}&\; 
    \langle r,\AAA_2\left( \AAA_1(x_1,...,\red{x_{0}},...,x_n) , \vec{p} ,  \Enc(x_1,p_1,w_1),..., \Enc(x_{\ell},p_{\ell},w_{\ell}), ..., \Enc(x_n,p_n,w_n) \right)\rangle\\
    \equiv_c&\;  
    \langle r,\AAA_2\left( \AAA_1(x_1,...,x_{0},...,x_n) , \vec{p} ,  \Enc(x_1,p_1,w_1),..., \Enc(x_{\ell},p_{\ell},\red{w'_{\ell}}), ..., \Enc(x_n,p_n,w_n) \right)\rangle\\
    \approx_{(\eps,\delta)}&\; 
    \langle r,\AAA_2\left( \AAA_1(x_1,...,\red{x_{\ell}},...,x_n) , \vec{p} ,  \Enc(x_1,p_1,w_1),..., \Enc(x_{\ell},p_{\ell},w'_{\ell}),..., \Enc(x_n,p_n,w_n) \right)\rangle\\
    \equiv&\; 
    \langle r,\BBB(\vec{w'})\rangle.
\end{align*}
So algorithm $\BBB$ is $(2\eps,(e^{\eps}+1)\delta)$-computationally differentially private. Now consider the following variant of algorithm $\BBB$, denoted as $\hat{\BBB}$. The modifications from $\BBB$ are marked in red.

\begin{enumerate}
    \item The input is $n$ codewords $w_1,\dots,w_n\in\{0,1\}^d$.
    \item Sample $n$ keys $x_1,\dots,x_n\sim\Gen(1^\lambda)$.
    \item Let $z\leftarrow\AAA_1(x_1,\dots,x_n)$.
    \item Sample $n$ public parameters $p_1,\dots,p_n\sim\Param(1^\lambda)$.
    
     \item \red{Let $J\leftarrow J(\AAA_1,\vec{x},z,\vec{p})\subseteq[n]$ be the subset of coordinates guaranteed to exist by Definition~\ref{def:muiltienc}, of size $|J|=n-\overline{\tau}$.}
     \item \red{For $i\in J$ let $c_i\leftarrow\Enc(x_i,p_i,0)$.}
    
    \item For $i\in[n]\setminus J$ let $c_i\leftarrow\Enc(x_i,p_i,w_i)$.
    \item Let $\vec{a}\leftarrow\AAA_2(z, (p_1,c_1),\dots, (p_n,c_n) )$.
    \item Output $\vec{a}$, after rounding its coordinates to $\{0,1\}$.
\end{enumerate}

\begin{remark}
Observe that algorithm $\hat{\BBB}$ is not necessarily computationally efficient, since computing $J(\AAA_1,\vec{x},z,\vec{p})$ might not be efficient. Nevertheless, as we next show, this still suffices to obtain a contradiction and complete the proof of the lower bound. Specifically, we will show that $\hat{\BBB}$ is computationally differentially private (w.r.t.\ the codewords) and that it is a successful adversary to the FPC. This will lead to a contradiction, even if $\hat{\BBB}$ itself is a non-efficient mechanism.
\end{remark}

We now show that, by the multi-security of the MILR scheme (see Definition~\ref{def:muiltienc}), the outcome distributions of $\BBB$ and $\hat{\BBB}$ are computationally indistinguishable. Specifically, we want to show that for every efficient adversary $(Q,T)$, as in Definition~\ref{def:CompDP}, it holds that
$$
|\Pr[T(r,\BBB(\vec{w}))=1]-\Pr[T(r,\hat{\BBB}(\vec{w}))=1]|\leq\negl(\Gamma(\lambda)).
$$
Note that here both expressions are with the same dataset $\vec{w}$ (without the neighboring dataset $\vec{w'}$).
To show this, consider the following algorithm, denoted as $\WWW$, which we view as an adversary to the MILR scheme. This algorithm has only oracle access to encryptions, via an oracle $\oracle$.

\begin{enumerate}
    \item The input of $\WWW$ is $n$ codewords $w_1,\dots,w_n\in\{0,1\}^d$, an element $z$ (supposedly computed by $\AAA_1$), and a collection of $n$ public parameters $p_1,\dots,p_n$.
    \item For $i\in[n]$ let $c_i\leftarrow\oracle(i,w_i)$.
    \item Let $\vec{a}\leftarrow\AAA_2(z, (p_1,c_1),\dots, (p_n,c_n) )$.
    \item Output $\vec{a}$, after rounding its coordinates to $\{0,1\}$.
\end{enumerate}

Now by the multi-security of the MILR scheme we have
\begin{align*}
&\left|\Pr_{Q,T,\BBB}[T(r,\BBB(\vec{w}))=1]-\Pr_{Q,T,\BBB}[T(r,\hat{\BBB}(\vec{w}))=1]\right|\\[1em]
&=\left|\Pr_{\substack{\vec{x},\vec{p},\WWW,Q,T,\Enc\\
z\leftarrow \AAA_1(\vec{x})\\
J\leftarrow J(\AAA_1,\vec{x},z,\vec{p}) }}
[Q(r,\WWW^{\oracle_1(\vec{x},\vec{p},J,\cdot,\cdot)}(\vec{w},z,\vec{p}))=1]-\Pr_{\substack{\vec{x},\vec{p},\WWW,Q,T,\Enc\\
z\leftarrow \AAA_1(\vec{x})\\
J\leftarrow J(\AAA_1,\vec{x},z,\vec{p}) }}
[Q(r,\WWW^{\oracle_0(\vec{x},\vec{p},J,\cdot,\cdot)}(\vec{w},z,\vec{p}))=1]\right|\\[1em]
&\leq\negl(\Gamma(\lambda)).
\end{align*}

So we have that $\hat{\BBB}\equiv_c\BBB$ and we have that $\BBB$ is $(2\eps,(e^{\eps}+1)\delta)$-computationally differentially private. Hence, algorithm $\hat{\BBB}$ is also $(2\eps,(e^{\eps}+1)\delta)$-computationally differentially private (w.r.t.\ the input codewords). Observe that algorithm $\hat{\BBB}$ ignores all but $N-|J|=\overline{\tau}$ of the codewords, and furthermore, the choice of which codewords to ignore is independent of the codewords themselves. 
Now consider the following thought experiment.
\begin{enumerate}
    \item Sample a codebook $w_0,w_1,\dots,w_n$ for the fingerprinting code.
    \item Run $\hat{\BBB}$ on $(w_1,\dots,w_n)$.
    \item Run ${\rm Trace}$ on the outcome of $\hat{\BBB}$ and return its output.
\end{enumerate}
As $\hat{\BBB}$ ignores all but $\overline{\tau}$ codewords, by the properties of the FPC, with probability at least $1-\frac{1}{n^2}\geq\frac{1}{2}$ the outcome of ${\rm Trace}$ is a coordinate of a codeword that $\hat{\BBB}$ did not ignore, and in particular, it is a coordinate between $1$ and $n$. Therefore, there must exist a coordinate $i^*\neq 0$ that is output by this thought experiment with probability at least $\frac{1}{2n}$. Now consider the following modified thought experiment.
\begin{enumerate}
    \item Sample a codebook $w_0,w_1,\dots,w_n$ for the fingerprinting code.
    \item Run $\hat{\BBB}$ on $(w_1,\dots,w_{i^*-1},\red{w_0},w_{i^*+1},\dots,w_n)$.
    \item Run ${\rm Trace}$ on the outcome of $\hat{\BBB}$ and return its output.
\end{enumerate}
As $\hat{\BBB}$ is computationally differentially private and as ${\rm Trace}$ is an efficient algorithm, the probability of outputting $i^*$ in this second thought experiment is roughly the same as in the previous thought experiment, specifically, at least 
\begin{align*}
e^{-2\eps}\left(\frac{1}{2 n}-(e^{\eps}+1)\delta-\negl(\Gamma(\lambda))\right)\geq
e^{-2\eps}\left(\frac{1}{4 n}-\negl(\Gamma(\lambda))\right)=\Omega\left(\frac{1}{n}\right).
\end{align*}
However, by the guarantees of the FPC (small probability of false accusation), in the second experiment the probability of outputting $i^*$ should be at most $\frac{1}{n^2}$. This is a contradiction to the existence of algorithm $\AAA$.
\end{proof}

The following corollary follows by instantiating Theorem~\ref{thm:DPhard} with our MILR scheme, as specified in Theorem~\ref{thm:multi_first_statement}.

\begin{corollary}\label{cor:finalDP}
Let $\Omega(\lambda) \leq \Gamma(\lambda) \leq 2^{o(\lambda)}$, and let $d\leq\poly(\Gamma(\lambda))$.
If there exists a $\Gamma(\lambda)$-secure encryption scheme against non-uniform adversaries then there exists an MILR scheme such that the corresponding $(\lambda,d)$-DA problem requires large space to be solved privately. Specifically, 
let $n\leq\poly(\Gamma(\lambda))$, 
let $\eps$ be a constant, and let $\delta\leq\frac{1}{4n(e^{\eps}+1)}=\Theta(\frac{1}{n})$. 
Every 
$\poly(\Gamma(\lambda))$-time $(\eps,\delta)$-CDP algorithm for the $(\lambda,d)$-DA problem with sample complexity $n$ must have space complexity
$
s=\Omega\left(\lambda\cdot\sqrt{\frac{d}{\log n}}\right).
$
\end{corollary}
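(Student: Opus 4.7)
The proof is a direct composition of the two main technical results already developed in the paper, namely Theorem~\ref{thm:multi_first_statement} (the existence of an MILR scheme with a specific leakage tradeoff) and Theorem~\ref{thm:DPhard} (the reduction from CDP DA algorithms to fingerprinting codes via the MILR scheme). I would therefore structure the proof as a short "plug-and-play" argument rather than introduce new machinery.

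First, I would invoke Theorem~\ref{thm:multi_first_statement} under the assumption that a $\Gamma(\lambda)$-secure encryption scheme against non-uniform adversaries exists, and with the given bound $\Omega(\lambda)\leq\Gamma(\lambda)\leq 2^{o(\lambda)}$. This produces an MILR scheme $\Pi=(\Gen,\Param,\Enc,\Dec)$ that is $(\Gamma(\lambda),\overline{\tau})$-secure against space-bounded preprocessing adversaries, where the leakage function is concretely
\[
\overline{\tau}(\lambda,s) \;=\; \frac{2s}{\lambda} + 4.
\]
This is the only place in the proof where the cryptographic assumption is used, and it is used exactly as the hypothesis of Theorem~\ref{thm:DPhard} requires.

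Second, I would apply Theorem~\ref{thm:DPhard} to the scheme $\Pi$ produced above. The hypotheses $d\leq\poly(\Gamma(\lambda))$, $n\leq\poly(\Gamma(\lambda))$, $\eps$ constant, and $\delta\leq\frac{1}{4n(e^\eps+1)}$ are exactly those stated in the corollary, so the theorem applies verbatim. Its conclusion is that any $\poly(\Gamma(\lambda))$-time $(\eps,\delta)$-CDP algorithm solving the $(\lambda,d)$-DA problem with sample complexity $n$ and space complexity $s$ must satisfy
\[
\overline{\tau}(\lambda,s) \;=\; \Omega\!\left(\sqrt{\frac{d}{\log n}}\right).
\]

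Finally, I would combine the two bounds by substituting the explicit form of $\overline{\tau}$ from step one into the asymptotic inequality from step two:
\[
\frac{2s}{\lambda} + 4 \;=\; \Omega\!\left(\sqrt{\frac{d}{\log n}}\right).
\]
In the interesting regime where $d$ is sufficiently large, the additive constant $4$ is absorbed by the right-hand side, yielding $\frac{2s}{\lambda} = \Omega(\sqrt{d/\log n})$, equivalently $s = \Omega\!\left(\lambda\cdot\sqrt{d/\log n}\right)$, which is precisely the desired conclusion. I do not anticipate any real obstacle: since both ingredients are already proved and their parameter regimes match (both allow $d,n$ polynomial in $\Gamma(\lambda)$ and both apply to $\poly(\Gamma(\lambda))$-time adversaries), the only "work" is the arithmetic of inverting $s\mapsto \overline{\tau}(\lambda,s)$, and the tightness of the bound is inherited directly from the tightness of $\overline{\tau}(\lambda,s)=\frac{2s}{\lambda}+O(1)$ in Theorem~\ref{thm:multi_first_statement}.
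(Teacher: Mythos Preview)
Your proposal is correct and follows exactly the approach the paper takes: the paper states that the corollary ``follows by instantiating Theorem~\ref{thm:DPhard} with our MILR scheme, as specified in Theorem~\ref{thm:multi_first_statement},'' and your write-up spells out precisely that instantiation and the resulting arithmetic on $\overline{\tau}(\lambda,s)=\frac{2s}{\lambda}+4$.
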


\section{Space Hardness for Adaptive Data Analysis (ADA)}\label{sec:ADAbody}

Consider a mechanism that first gets as input a sample containing $t$ i.i.d.\ samples from some underlying (unknown) distribution $\DDD$, and then answers $k$ {\em adaptively chosen} statistical queries w.r.t.\ $\DDD$. Importantly, the answers must be accurate w.r.t.\ the underlying distribution and not just w.r.t.\ the empirical sample. The challenge here is that as the queries are being chosen adaptively, the interaction might quickly lead to {\em overfitting}, i.e., result in answers that are only accurate w.r.t.\ the empirical sample and not w.r.t.\ the underlying distribution. 
This  fundamental problem, which we refer to as the ADA problem, was introduced by Dwork et al.~\cite{dwork2015preserving} who connected it to differential privacy and showed that differential privacy can be used as a countermeasure against overfitting. Intuitively, overfitting happens when  answers reveal properties that are specific to the input sample, rather than to the general population, and this is exactly what differential privacy aims to protect against.

Hardt, Steinke, and Ullman~\cite{HardtU14,SteinkeU15} showed negative results for the ADA problem. Specifically, they showed that given $t$ samples, it is computationally hard to answer more than $k=O(t^2)$ adaptive queries. We show that the hardness of the ADA problem is actually more fundamental; it is, in fact, a result of a space bottleneck rather than a sampling bottleneck. Informally, we show that the same hardness result continues to hold even if in the preprocessing stage the mechanism is given the full description of the underlying distribution $\DDD$, and is then required to store only a limited amount of information about it (an amount that equals the representation length of $t$ samples from $\DDD$). So it is not that the mechanism did not get enough information about $\DDD$; it is just that it cannot shrink this information in a way that supports $t^2$ adaptive queries. This generalizes the negative results of \cite{HardtU14,SteinkeU15}, as sampling $t$ points from $\DDD$ is just one particular way of trying to store information about $\DDD$.

Consider \texttt{AdaptiveGameSpace}, where the mechanism initially gets the full description of the underlying distribution, but it must shrink it into an $s$-bit summary $z$. To emphasize that the mechanism does not have additional access to the underlying distribution, we think about it as two mechanisms $\AAA=(\AAA_1,\AAA_2)$ where $\AAA_1$ computes the summary $z$ and where $\AAA_2$  answers queries given $z$. We consider $s=|z|$ as the space complexity of such a mechanism $\AAA$.

\begin{algorithm}[t]\caption{$\texttt{AdaptiveGameSpace}(\AAA{=}(\AAA_1,\AAA_2),\BBB,s,k)$}

\begin{enumerate}
    \item The adversary $\BBB$ chooses a distribution $\DDD$ over a domain $\XXX$.
    \item The mechanism $\AAA_1$ gets $\DDD$ and summarizes it into $s$ bits, denoted as $z$. 
    \item The mechanism $\AAA_2$ is instantiated with $z$.
    \item For round $i=1,2,\dots,k$:
    \begin{enumerate}
        \item The adversary $\BBB$ specifies a query $q_i:\XXX\rightarrow\left\{-1,0,1\right\}$
        \item The mechanism $\AAA_2$ obtains $q_i$ and responds with an answer $a_i\in[-1,1]$ 
        \item $a_i$ is given to $\AAA$
    \end{enumerate}
    \item The outcome of the  game is one if $\exists i$ s.t.\ $\left|a_i - \E_{y\sim\DDD}[q_i(y)] \right|>1/10$, and zero otherwise.
\end{enumerate}
\end{algorithm}

Our main theorem in the context of the ADA problem is the following.

\begin{theorem}\label{thm:ADA}
Let $\Omega(\lambda) \leq \Gamma(\lambda) \leq 2^{o(\lambda)}$, and let $k\leq\poly(\Gamma(\lambda))$.
If there exists a $\Gamma(\lambda)$-secure encryption scheme against non-uniform adversaries then 
there exists a $\poly(\Gamma(\lambda))$-time adversary $\BBB$ such that the following holds. Let $\AAA{=}(\AAA_1,\AAA_2)$ be a $\poly(\Gamma(\lambda))$-time mechanism with space complexity $s\leq O\left(\lambda\cdot\sqrt{k}\right)$ (small enough). Then, 
$
\Pr[\texttt{AdaptiveGameSpace}(\AAA,\BBB,s,k)=1]>\frac{2}{3}.
$

Furthermore, the underlying distribution defined by the adversary $\BBB$ can be fully described using $O(\sqrt{k} \cdot \lambda)$ bits, 
it is sampleable in $\poly(\Gamma(\lambda))$-time, and elements sampled from this distribution can be represented using $O(\lambda+\log(k))$ bits.
\end{theorem}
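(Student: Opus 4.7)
I will lift the Hardt--Ullman/Steinke--Ullman sampling lower bound (Theorem~\ref{thm:HSU1415}) to a space lower bound, using our MILR scheme as a gadget that converts bounded space into a bounded effective sample of users. The adversary $\BBB$ samples $N=\Theta(\sqrt{k})$ independent keys $x_1,\dots,x_N\leftarrow\Gen(1^{\lambda})$, so that simply listing them yields the promised $O(\sqrt{k}\lambda)$-bit description of the distribution $\DDD$; we take $\DDD$ to be uniform over the tagged keys $\{(x_i,i):i\in[N]\}$, whose elements fit in $O(\lambda+\log k)$ bits. Next $\BBB$ instantiates an FPC codebook $w_1,\dots,w_N\in\{0,1\}^k$ from Theorem~\ref{thm:FPC} with collusion parameter $\tau$ set just above $\overline{\tau}(\lambda,s)=2s/\lambda+4$. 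At round $j$, $\BBB$ draws fresh public parameters $p_{i,j}\leftarrow\Param(1^{\lambda})$, computes $c_{i,j}\leftarrow\Enc(x_i,p_{i,j},w_i[j])$ and submits the query $q_j(x,i)=2\Dec(x,p_{i,j},c_{i,j})-1$, whose true expectation under $\DDD$ is the centred average of the $j$-th FPC bit over all $N$ users. These queries are generated internally by an instance of the HU/SU adversary, so the codebook index pattern may be adapted to previous answers.

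\paragraph{Hybrid via MILR.}
Let $z\leftarrow\AAA_1(\vec{x})$ be the summary and let $J=J(\AAA_1,\vec{x},z,\vec{p})$, with $|J|\ge N-\overline{\tau}(\lambda,s)$, be the hidden set supplied by Definition~\ref{def:muiltienc}(2). Introduce the hybrid adversary $\hat\BBB$ that is identical to $\BBB$ except that for every $i\in J$ and every round $j$ it encrypts $0$ in place of $w_i[j]$. Packaging the entire $k$-round process -- sampling $w$, invoking $\AAA_2$ on each $q_j$, feeding answers back to the internal HU/SU adversary, and finally declaring whether $\texttt{AdaptiveGameSpace}$ outputs $1$ -- into a single $\poly(\Gamma(\lambda))$-time distinguisher against the MILR encryption oracle, Definition~\ref{def:muiltienc}(2) gives
\[\bigl|\Pr[\texttt{AdaptiveGameSpace}(\AAA,\BBB,s,k)=1]-\Pr[\texttt{AdaptiveGameSpace}(\AAA,\hat\BBB,s,k)=1]\bigr|\le\negl(\Gamma(\lambda)).\]
It therefore suffices to prove that $\AAA$ fails against $\hat\BBB$ with probability at least $2/3+\Omega(1)$.

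\paragraph{Reduction to HU/SU and conclusion.}
In the $\hat\BBB$-game the ciphertexts for $i\in J$ carry no information about $w_i[j]$, so $\AAA$'s answers are a function only of the $\overline{\tau}$ codewords $\{w_i\}_{i\in[N]\setminus J}$, while the true query values still average over all $N$ users. Any such mechanism can be simulated by a standard sample-based ADA mechanism $M^{\rm HU}$ that takes $\{w_i\}_{i\in[N]\setminus J}$ (of size $\le\overline{\tau}(\lambda,s)$) as its input sample and internally generates the keys, public parameters, fake encryptions of $0$ for $J$ and real encryptions for $[N]\setminus J$, then runs $\AAA_1,\AAA_2$. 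The FPC-based attack underlying Theorem~\ref{thm:HSU1415} then makes $M^{\rm HU}$ err on some query with probability $>2/3+\Omega(1)$ as soon as $\overline{\tau}(\lambda,s)\le c\sqrt{k/\log N}$ for a small enough constant $c$. With $N=\Theta(\sqrt{k})$ this translates into $s\le c'\lambda\sqrt{k}$ for small enough $c'$ (the ``small enough'' slack in the theorem statement absorbs the $\sqrt{\log k}$ factor). Combined with the hybrid indistinguishability, this yields $\Pr[\texttt{AdaptiveGameSpace}(\AAA,\BBB,s,k)=1]>2/3$.

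\paragraph{Main obstacle.}
The most delicate point is the hybrid step: Definition~\ref{def:muiltienc}(2) is phrased for a single efficient distinguisher with a single encryption oracle, whereas our setting entangles a $k$-round adaptive interaction with an internally-running FPC adversary whose queries depend on the transcript. I must show that the entire interaction compiles into one $\poly(\Gamma(\lambda))$-time oracle distinguisher so that the MILR indistinguishability applies uniformly across rounds. A secondary subtlety is that the HU/SU attack is customarily stated for a uniformly random sample, while in our simulation $[N]\setminus J$ is determined by $\AAA_1$ acting on $\vec{x}$; this is resolved by observing that $J$ depends only on $\AAA_1,\vec{x},z,\vec{p}$, all sampled independently of the codebook $w$, so the FPC-based argument (which only requires the ``revealed'' index set to be fixed before the codebook is drawn) goes through.
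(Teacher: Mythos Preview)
Your overall strategy matches the paper's: wrap the HU/SU adversary so that its queries are presented to $\AAA_2$ as encrypted FPC bits, use MILR security to replace plaintexts on the hidden set $J$ by zeros, and then argue that the resulting mechanism is effectively a natural/selecting mechanism on at most $\overline{\tau}(\lambda,s)$ users. Two technical points, however, do not go through as written.

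\paragraph{The hybrid step is misformulated.} You package the interaction as an MILR distinguisher that ``declares whether \texttt{AdaptiveGameSpace} outputs $1$'' and conclude
\[
\bigl|\Pr[\texttt{AdaptiveGameSpace}(\AAA,\BBB,s,k){=}1]-\Pr[\texttt{AdaptiveGameSpace}(\AAA,\hat\BBB,s,k){=}1]\bigr|\le\negl(\Gamma(\lambda)).
\]
But the distinguisher in Definition~\ref{def:muiltienc}(2) receives only $(z,\vec p)$ and oracle access; it does \emph{not} receive $J$. The event ``\texttt{AdaptiveGameSpace}$(\AAA,\hat\BBB,\dots)=1$'' is measured against $\E_{\DDD}[q_j^{\hat\BBB}]$, which for $i\in J$ decrypts to $0$ and hence depends on $J$; the distinguisher cannot compute this bit. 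The only bit it \emph{can} compute is ``$\exists j:\ |a_j-\tfrac{1}{N}\sum_i(2w_i[j]-1)|>1/10$'' (the full-average target), and this equals the \texttt{AdaptiveGameSpace} outcome only on the $\oracle_1$ side. Concretely, under $\hat\BBB$ the true value of the submitted query is roughly $-1$ (almost all coordinates decrypt to $0$), not the full FPC column average, so your sentence ``the true query values still average over all $N$ users'' is false for $\hat\BBB$. The paper avoids this by never writing \texttt{AdaptiveGameSpace}$(\AAA,\hat\BBB,\dots)$: it works entirely in a game (\texttt{AdaptiveGameTwo}) whose success criterion is \emph{always} the full average $\tfrac1n\sum_j q_i(j)$, proves indistinguishability of the two wrappers there, and only at the end identifies the real-encryption wrapper with \texttt{AdaptiveGameSpace}. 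Your reduction paragraph is in fact proving the right thing (failure w.r.t.\ the full average), so the fix is to state the hybrid for that event rather than for \texttt{AdaptiveGameSpace}$(\AAA,\hat\BBB,\dots)=1$.

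\paragraph{Fresh public parameters per round.} You draw $p_{i,j}\leftarrow\Param(1^\lambda)$ anew at every round $j$. In Definition~\ref{def:muiltienc} the oracles $\oracle_b(\vec x,\vec p,J,\cdot,\cdot)$ are parameterized by a \emph{single} vector $\vec p$, and $J=J(F,\vec x,z,\vec p)$ depends on that vector; multiple calls with key index $i$ all use the same $p_i$. Your per-round parameters cannot be packaged into this oracle model (the same key $x_i$ would need different $p_i$ values across calls), so the MILR guarantee does not apply as stated. The paper samples $\vec p$ once, before the query phase, and reuses it across all rounds; you should do the same.

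With these two corrections your argument coincides with the paper's. Your handling of the ``selecting vs.\ sampling'' subtlety (noting that $J$ is fixed independently of the codebook, so the FPC/HU--SU argument applies to a selecting mechanism) is correct and matches the paper's Theorem~B.2.
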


In a sense, the ``furtheremore'' part of the theorem shows that the distribution chosen by our adversary is not too complex. Specifically, our negative result continues to hold even if the space of the mechanism is {\em linear} in the full description length of the underlying distribution (in a way that allows for efficiently sampling it). If the space of the mechanism was just a constant times bigger, it could store the full description of the underlying distribution and answer an unbounded number of adaptive queries. 
The formal proof of Theorem~\ref{thm:ADA} is deferred to Appendix~\ref{app:ADAfullProof}. Here we only provide an informal (and overly simplified) proof sketch.

\subsection{Informal proof sketch}
Let $k$ denote the number of queries that the adversary makes. 
Our task is to show that there is an adversary that fails every efficient mechanism $\AAAspace$ that plays in \texttt{AdaptiveGameSpace}, provided that it uses space $s\ll\sqrt{k}$. What we know from \cite{HardtU14,SteinkeU15} is that there is an adversary $\BBBsamp$ that fails every efficient mechanism that plays in the standard ADA game (the game specified in Figure~\ref{fig:ADAintro}), provided that its sample complexity is $t\ll\sqrt{k}$. 
We design an adversary $\BBBspace$ that plays in \texttt{AdaptiveGameSpace} in a way that emulates $\BBBsamp$. We now elaborate on the key points in the construction of $\BBBspace$, and their connection to $\BBBsamp$.

Recall that both games begin with the adversary specifying the underlying distribution. A useful fact about the adversary $\BBBsamp$ (from \cite{HardtU14,SteinkeU15}) is that the distribution it specifies is uniform on a small set of points of size $n=\Theta(t)$ 
(these $n$ points are unknown to the mechanism that $\BBBsamp$ plays against).  
Our adversary, $\BBBspace$, first samples $n$ independent keys $(x_1,\dots,x_n)$ from our MILR scheme, %
and then defines the target distribution $\DDDspace$ to be uniform over the set $\{(j,x_j)\}_{j\in[n]}$. Recall that in \texttt{AdaptiveGameSpace} this target distribution is given to the mechanism $\AAAspace$, who must shrink it into a summary $z$ containing $s$ bits. After this stage, by the security of our MILR scheme, there should exist a large set  $J\subseteq[n]$ of size $|J|=n-\overline{\tau}$ corresponding to keys uncompromised by $\AAAspace$. Denote $I=[n]\setminus J$.

Our adversary $\BBBspace$ now emulates $\BBBsamp$ as follows. First, let $\DDDsamp$ denote the target distribution chosen by $\BBBsamp$, and let $m_1,\dots,m_n$ denote its support. 
Our adversary then samples $n$ public parameters $p_1,\dots,p_n$, and encrypts every point in the support $m_j$ using its corresponding key and public parameter. Specifically,  $c_j\leftarrow\Enc(x_j,p_j,m_j)$. {\small \gray{\% This is an over-simplification. For technical reasons, the actual construction is somewhat different.}}

Now, for every query $q$ specified by $\BBBsamp$, our adversary outputs the query $f_q$ defined by $f_q(j,x) = q(\Dec(x, p_j , c_j))$. Our adversary then obtains an answer $a$ from the mechanism $\AAAspace$, and feeds $a$ to $\BBBsamp$.
Observe that 
the ``true'' value of $f_q$ w.r.t.\ $\DDDspace$ is the same as the ``true'' value of $q$ w.r.t.\ $\DDDsamp$. Therefore, if $\AAAspace$ maintains accuracy in this game against our adversary $\BBBspace$, then in the emulation that $\BBBspace$ runs internally we have that $\BBBspace$ maintains utility against $\BBBsamp$. 
Intuitively, we would like to say that this leads to a contradiction, since $\BBBsamp$ fails every efficient mechanism it plays against. But this is not accurate, because $\BBBspace$ saw the full description of the target distribution $\DDDsamp$, and $\BBBsamp$ only fools mechanisms that get to see at most {\em $t$ samples} from this target distribution.

To overcome this, we consider the following modified variant of our adversary, called $\hBBBspace$. The modification is that $\hBBBspace$ does not get to see the full description of $\DDDsamp$. Instead it only gets to see points from the support of $\DDDsamp$ that correspond to indices in the set $I=[n]\setminus J$. Then, when generating the ciphertexts $c_j$, the modified adversary $\hBBBspace$ encrypts zeroes instead of points $m_j$ which it is missing. By the security of our MILR scheme, the mechanism $\AAAspace$ cannot notice this modification, and hence, assuming that it maintains accuracy against our original adversary $\BBBspace$ then it also maintains accuracy against our modified adversary $\hBBBspace$. As before, this means that $\hBBBspace$ maintains accuracy against the emulated $\BBBsamp$. Intuitively, this leads to a contradiction, as $\hBBBspace$ is using only $\overline{\tau}\leq t$ points from the target distribution $\DDDsamp$.

We stress that this proof sketch is over-simplified and inaccurate. In particular, the following two technical issues need to be addressed: (1) It is true that $\hBBBspace$ uses only $\overline{\tau}$ points from the support of the target distribution $\DDDsamp$, but these points are not necessarily {\em sampled} from $\DDDsamp$; and (2) The modified adversary $\hBBBspace$ is not computationally efficient because computing the set $J$ is not efficient. We address these issues, and other informalities made herein, in Appendix~\ref{app:ADAfullProof}.

\section{Construction of an MILR Scheme from a Semantically Secure Encryption Scheme}
\label{sec:constructME}

\paragraph{Construction.}
Let $\lambda' \leq \lambda$ be such that $\lambda = \poly(\lambda')$.
Given an encryption scheme $\Pi' = (\Gen',\Enc',\Dec')$ such that $\Gen'(1^{\lambda'})$ outputs a key uniformly distributed on $\{0,1\}^{\lambda'}$ (i.e., $x' \leftarrow_R \{0,1\}^{\lambda'}$),
we construct an MILR scheme $\Pi = (\Gen,\Param,\Enc,\Dec)$ as follows:
\begin{itemize}
    \item $\Gen$: On input $1^\lambda$, return $x \leftarrow_R \{0,1\}^{\lambda}$.
    \item $\Param$: On input $1^{\lambda}$, let $\mathcal{G}$ be a family of universal hash functions with domain $\{0,1\}^{\lambda}$ and range $\{0,1\}^{\lambda'}$.
        Return (a description of) $g \leftarrow_R \mathcal{G}$.
    \item $\Enc$: On input $(x,p,m)$, parse $g \coloneqq p$ (as a description of a hash function), let $x' = g(x)$ and return $\Enc'(x',m)$.
    \item $\Dec$: On input $(x,p,c)$, parse $g \coloneqq p$, let $x' = g(x)$ and return $\Dec'(x',c)$.
\end{itemize}

Using a standard construction of a universal hash function family, all the algorithms run in time polynomial in $\lambda$. Moreover, if $c\leftarrow\Enc(x,p,m)$, then $\Dec(x,p,c) = m$ with probability 1 (as this holds for $\Enc'$ and $\Dec'$).

The following two theorems (corresponding to the two security properties in Definition~\ref{def:muiltienc}) establish the security of $\Pi$ and prove Theorem~\ref{thm:multi_first_statement}.
\begin{theorem}[Multi semantic security]
\label{thm:multi1}
Let $\Omega(\lambda) \leq \Gamma(\lambda) \leq 2^{o(\lambda)}$ and $\lambda' = 0.1\lambda$.
If $\Pi'$ is $\Gamma(\lambda')$-secure against uniform (resp. non-uniform) adversaries, then $\Pi$ is $\Gamma(\lambda)$-secure against uniform (resp. non-uniform) adversaries.
\end{theorem}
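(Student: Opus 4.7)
My plan is to reduce multi semantic security of $\Pi$ to standard semantic security of the underlying scheme $\Pi'$ via a two-step argument: first a standard hybrid over the $n$ instances, and within each hybrid step a reduction that uses the leftover hash lemma (LHL) to argue that the hashed key $g(x) \in \{0,1\}^{\lambda'}$ is statistically indistinguishable from a freshly sampled uniform $\Pi'$-key. I would begin by fixing an adversary $\BBB$ running in time $\poly(\Gamma(\lambda))$ that interacts with either $\oracle_0(\vec{x},\vec{p},[n],\cdot,\cdot)$ or $\oracle_1(\vec{x},\vec{p},[n],\cdot,\cdot)$ and makes at most $T = \poly(\Gamma(\lambda))$ oracle queries.

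For the hybrid step, define $\mathsf{Hyb}_i$ for $i = 0,1,\ldots,n$ to be the oracle that, on input $(j,m)$, returns $\Enc(x_j,p_j,0)$ if $j \le i$ and $\Enc(x_j,p_j,m)$ otherwise. Then $\mathsf{Hyb}_0$ is exactly $\oracle_1(\vec{x},\vec{p},[n],\cdot,\cdot)$ and $\mathsf{Hyb}_n$ is exactly $\oracle_0(\vec{x},\vec{p},[n],\cdot,\cdot)$, so it suffices to bound the distinguishing advantage between $\mathsf{Hyb}_{i-1}$ and $\mathsf{Hyb}_i$ for each $i \in [n]$. To bound this, I would construct a reduction $\BBB'$ against $\Pi'$: $\BBB'$ receives a challenge $\Pi'$-oracle (either $\Enc'(k',\cdot)$ or $\Enc'(k',0)$ for a uniform $k'\in \{0,1\}^{\lambda'}$), samples the remaining keys $x_j$ and parameters $p_j$ for $j \neq i$ honestly, samples $p_i = g$ uniformly from the universal family $\mathcal{G}$, and then simulates $\BBB$'s view by handling oracle queries $(j,m)$ with $j \neq i$ directly using its knowledge, and forwarding queries $(i,m)$ to its own $\Pi'$ challenge oracle (with message $m$ or $0$ according to the hybrid index pattern).

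The key correctness requirement for this reduction is that the distribution the reduction simulates on instance $i$ matches $\mathsf{Hyb}_{i-1}$ (respectively $\mathsf{Hyb}_i$) up to negligible statistical error. This is exactly where I would invoke the LHL: for $g \leftarrow_R \mathcal{G}$ and $x \leftarrow_R \{0,1\}^\lambda$, the pair $(g, g(x))$ is $\tfrac{1}{2}\sqrt{2^{\lambda'-\lambda}} = 2^{-\Omega(\lambda)}$-close in statistical distance to $(g, U_{\lambda'})$, because $\lambda' = 0.1\lambda$ leaves $0.9\lambda$ bits of slack. Hence the reduction's simulation of instance $i$ is statistically indistinguishable from the real hybrid, with error $2^{-\Omega(\lambda)}$, which is negligible in $\Gamma(\lambda)$ since $\Gamma(\lambda) \le 2^{o(\lambda)}$.

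The main subtlety, and the step I expect to require the most care, is the book-keeping of security parameters: I need to confirm that a $\poly(\Gamma(\lambda))$-time distinguisher yields a $\poly(\Gamma(\lambda'))$-time adversary against $\Pi'$ with non-negligible-in-$\Gamma(\lambda')$ advantage whenever it has non-negligible-in-$\Gamma(\lambda)$ advantage. Since $\lambda' = 0.1\lambda$ and the assumption $\Omega(\lambda) \le \Gamma(\lambda) \le 2^{o(\lambda)}$ implies $\Gamma(\lambda) \le \poly(\Gamma(\lambda'))$ (verified separately for the polynomial and sub-exponential regimes), both runtime and advantage translate correctly. Combining: summing the $n = \poly(\Gamma(\lambda))$ hybrid gaps, each bounded by $2^{-\Omega(\lambda)} + \negl(\Gamma(\lambda'))$, yields a total advantage $\negl(\Gamma(\lambda))$, as required. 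The same reduction works in the non-uniform setting because it is itself uniform given non-uniform advice for $\BBB$.
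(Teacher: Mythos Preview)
Your proposal is correct and follows essentially the same approach as the paper: a hybrid argument over the $n$ instances combined with the leftover hash lemma to replace each hashed key $g(x_i)$ by a fresh uniform $\Pi'$-key. The only cosmetic difference is the order of operations---the paper first applies the LHL globally to all $n$ keys (obtaining a multi-instance adversary $\BBB_1$ against $\Pi'$) and then performs the hybrid, whereas you interleave them by doing the hybrid on $\Pi$ directly and invoking the LHL once per hybrid step; both orderings yield the same bound and require the same parameter bookkeeping.
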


\begin{theorem}[Multi-security against bounded preprocessing adversaries]
\label{thm:multi2}
Let $\Omega(\lambda) \leq \Gamma(\lambda) \leq 2^{o(\lambda)}$, $\lambda' = 0.1\lambda$ (as in Theorem~\ref{thm:multi1}).
If $\Pi'$ is $\Gamma(\lambda')$-secure  against non-uniform adversaries then $\Pi$ is $(\Gamma(\lambda),\overline{\tau})$-\emph{secure} against space bounded non-uniform preprocessing adversaries, where $\overline{\tau}(\lambda,s) = \frac{2s}{\lambda} + 4$.
\end{theorem}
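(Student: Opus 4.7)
\textbf{Proof plan for Theorem~\ref{thm:multi2}.}
The plan is to combine three ingredients in sequence: a pre-sampling (or ``fixing'') argument for bounded leakage, the leftover hash lemma applied to the public parameters $g_j$, and the multi-semantic security established in Theorem~\ref{thm:multi1}. The high-level idea is that if only $s$ bits are leaked about $\vec{x} \in (\{0,1\}^{\lambda})^n$, then an adversary can effectively ``learn'' only about $O(s/\lambda)$ whole keys; the remaining keys still look uniform, and once hashed down to $\lambda'$ bits via the universal hash $g_j$, they serve as fresh keys to which the ordinary semantic security of $\Pi'$ applies.

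First, I would invoke a block-source pre-sampling theorem in the style of \cite{Unruh07,CorettiDGS18,CorettiDG18}: viewing $\vec{x}$ as $n$ independent blocks of $\lambda$ uniform bits and $F$ as producing an $s$-bit output, the joint distribution $(\vec{x}, F(\vec{x}))$ can be approximated, against any distinguisher of runtime $T$, by a convex combination of \emph{block-fixing} distributions, each indexed by a set $I \subseteq [n]$ with $|I| \le \overline{\tau}(\lambda,s) = 2s/\lambda + 4$ and a choice of values for $\vec{x}_I$, with the remaining blocks $\vec{x}_{[n]\setminus I}$ uniform and independent. The quantitative trade-off—fixing $\overline{\tau}\lambda = 2s + 4\lambda$ bits—gives an approximation error that decays like $T\cdot 2^{-\Omega(\lambda)}$, which is negligible in $\Gamma(\lambda)$ for every $T = \poly(\Gamma(\lambda))$ (here is where the assumption $\Gamma(\lambda) \le 2^{o(\lambda)}$ is used). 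The random function $J(F,\vec{x},z,\vec{p})$ is then defined by sampling from the appropriate block-fixing representative conditioned on $(\vec{x},z)$ and returning $J = [n] \setminus I$, of size at least $n - \overline{\tau}(\lambda,s)$.

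Second, I condition on the block-fixing decomposition and look at what the adversary sees. For $j \in J$, the key $x_j$ is uniform on $\{0,1\}^{\lambda}$ and independent of everything the adversary has seen so far (including $z$ and $\vec{x}_I$). The public parameter $p_j = g_j$ is drawn independently from a universal hash family $\mathcal{G}$ from $\{0,1\}^{\lambda}$ to $\{0,1\}^{\lambda'}$ with $\lambda' = 0.1\lambda$. By the leftover hash lemma, $(g_j, g_j(x_j))$ is $2^{-\Omega(\lambda)}$-statistically close to $(g_j, u_j)$ with $u_j$ uniform on $\{0,1\}^{\lambda'}$, independently across $j \in J$. Summing the closeness over $|J| \le n = \poly(\Gamma(\lambda))$ indices keeps the total statistical error negligible in $\Gamma(\lambda)$. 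Thus, up to negligible loss, the adversary's view when querying $\oracle_b(\vec{x},\vec{p},J,\cdot,\cdot)$ is identical to its view in a hybrid where, for each $j \in J$, $\Enc'$ is invoked with a fresh uniform key $u_j \in \{0,1\}^{\lambda'}$.

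Third, having replaced the derived keys on $J$ by fresh uniform $\lambda'$-bit keys, the task of distinguishing $\oracle_0$ from $\oracle_1$ restricted to $J$ reduces exactly to the multi-semantic security of the ``reduced'' scheme built from $\Pi'$ with uniform $\lambda'$-bit keys, which is already established by Theorem~\ref{thm:multi1} (and follows from the $\Gamma(\lambda')$-security of $\Pi'$ via the standard hybrid argument over the $n$ coordinates). Summing the pre-sampling error, the leftover-hash error, and the multi-semantic security bound yields a negligible (in $\Gamma(\lambda)$) overall distinguishing advantage, as required.

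The main obstacle I expect is making the pre-sampling step tight enough to yield $\overline{\tau}(\lambda,s) = 2s/\lambda + 4$ rather than a weaker bound like $O(s/\lambda) + O(\log(1/\eps))$, while simultaneously keeping the approximation error negligible against all $\poly(\Gamma(\lambda))$-time distinguishers; the constants $2$ and $4$ are tied to the precise statement of the block-source pre-sampling lemma we invoke, and extracting them will require a careful, explicit calculation in Section~\ref{sec:preprocessing} (deferred in the paper to Appendix~\ref{app:preprocessing}). A secondary issue is the non-uniformity of the reduction: the block-fixing decomposition depends on $F$, so the resulting adversary against $\Pi'$ is non-uniform, which is precisely why the assumption is phrased in terms of $\Gamma(\lambda')$-security \emph{against non-uniform adversaries}.
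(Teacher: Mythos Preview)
Your three-step plan is structurally correct and matches the paper's architecture, but there is a genuine technical gap in how you state and use the first step, and it cascades into the second.

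The claim that $(\vec{x},F(\vec{x}))$ can be approximated by a convex combination of \emph{block-fixing} distributions in which the blocks $\vec{x}_{[n]\setminus I}$ are \emph{uniform and independent} is not what the pre-sampling literature gives you here. The statement from \cite{CorettiDGS18} that applies in this non-oracle setting (their Claim~2, used in the paper as Lemma~\ref{lem:fixing}) only shows that $X_z$ is $\gamma$-close to a convex combination of $(k,1-\delta)$-\emph{dense} sources: the non-fixed coordinates have high joint min-entropy on every subset, but they need not be independent or uniform. The query-dependent bit-fixing bound you allude to (error $T\cdot 2^{-\Omega(\lambda)}$) is for adversaries with bounded \emph{oracle access} to the random source; in the present setting the adversary is handed $z$ and $\vec g$ outright and its view depends on all of $G(X)$, so that per-query accounting does not apply. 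Consequently your second step, where you apply the leftover hash lemma \emph{independently} to each $x_j$ with $j\in J$, is not justified: the $x_j$'s are correlated given $z$, even after fixing $\vec{x}_I$.

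The paper closes exactly this gap by reversing the order of operations. It first takes the $(k,1-\delta)$-dense decomposition of $X_z$, and then proves a \emph{block-wise} extraction lemma (Lemma~\ref{lem:leftover}) showing that when $G=(G_1,\dots,G_t)$ is applied coordinate-wise to a $(1-\delta)$-dense source, the output $(G,G(Y))$ is statistically close to $(G,U)$ \emph{jointly}, despite the correlations among the $Y_i$. This is a genuine strengthening of the per-coordinate leftover hash lemma, proved via a collision-probability calculation that tracks the number of coordinates on which $Y$ and an independent copy $Y'$ collide. The bit-fixing structure thus emerges for the \emph{hashed} keys $G(X_z)$, not for $X_z$ itself, and this is what Lemma~\ref{lem:leakage} and Corollary~\ref{cor:leakage} package. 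With that in hand, your third step (the non-uniform reduction to the security of $\Pi'$) goes through essentially as you describe; the paper does it directly against $\Pi'$ rather than via Theorem~\ref{thm:multi1}, but that is a minor difference. Your instinct that the constants $2$ and $4$ come from an explicit parameter calculation is correct: they fall out of setting $\delta=1/2$, $\gamma=2^{-\lambda}$, $s'=s+\lambda$ in Corollary~\ref{cor:leakage}.
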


\begin{remark}
Since $\Gamma(\lambda) \leq 2^{o(\lambda)}$ and $\lambda' = 0.1\lambda$, then
$\poly(\Gamma(\lambda')) = \poly(\Gamma(\lambda))$. Therefore, for the sake of simplicity, we analyze the runtime and advantage of all adversaries (including those that run against $\Pi'$) as functions of $\lambda$.
\end{remark}

We prove Theorem~\ref{thm:multi1} in Appendix~\ref{app:constructME} and Theorem~\ref{thm:multi2} in Section~\ref{sec:preprocessing}.

\section{Multi-security Against a Bounded Preprocessing Adversary}
\label{sec:preprocessing}

In this section we prove Theorem~\ref{thm:multi2}.
The proof requires specific definitions and notation, defined below.

\subsection{Preliminaries}

\paragraph{Notation.}

Given a sequence of elements $X = (X_1,\ldots,X_n)$  and a subset $I \subseteq [n]$, we denote by $X_I$ the sequence composed of elements with coordinates in $I$.

For a random variable $X$, denote its min-entropy by $H_{\infty}(X)$.
For random variables $X,Y$ with the same range, denote by $\Delta(X,Y)$ the statistical distance of their distributions. We say that $X$ and $Y$ are $\gamma$-close if
$\Delta(X,Y) \leq \gamma$.
We use the notation $X \leftarrow_R \mathcal{X}$ to indicate that the random variable $X$ is chosen uniformly at random from the set $\mathcal{X}$.

\paragraph{Dense and bit-fixing sources.}
We will use the following definition (see \cite[Definition 1]{CorettiDGS18}).
\begin{definition}
\label{def:source}
An $(n,2^{\lambda})$-source is a random variable $X$ with range $(\{0,1\}^\lambda)^n$. A source is called
\begin{itemize}
  \item $(1 - \delta)$-dense if for every subset $I \subseteq [n]$,
  $H_\infty(X_I) \geq (1 - \delta) \cdot |I| \cdot \lambda,$
  \item $(k, 1-\delta)$-dense if it is fixed on at most $k$ coordinates and is $(1 - \delta)$-dense on the rest,
  \item $k$-bit-fixing if it is fixed on at most $k$ coordinates and uniform on the rest.
\end{itemize}
\end{definition}
Namely, the min-entropy of every subset of entries of a $(1-\delta)$-dense source is at most a
fraction of $\delta$ less than what it would be for a uniformly random one.
\subsection{Key leakage lemma}

Let $X = (X_1,\ldots,X_n) \in (\{0,1\}^{\lambda})^n$ be a random variable for $n$ keys of $\Pi$ chosen independently and uniformly at random. Let $Z \coloneqq F(X)$ be a random variable for the leakage of the adversary. For $z \in \{0,1\}^s$, let $X_z$ be the random variable chosen from the distribution of $X$ conditioned on $F(X) = z$.

We denote $G \coloneqq (G_1,\ldots,G_n)$ and $G(X) \coloneqq (G_1(X_1), \ldots,G_n(X_n))$ the random variable for the hash functions (public parameters) of $\Pi$.
We will use similar notation for sequences of different lengths (which will be clear from the context).

The proof of Theorem~\ref{thm:multi2} is based on the lemma below (proved in Section~\ref{sec:leakage}), which analyzes the joint distribution $(G,Z,G(X))$.
\begin{lemma}
\label{lem:leakage}
Let $F: (\{0,1\}^{\lambda})^n \rightarrow \{0,1\}^s$ be an arbitrary function,
$X = (X_1,\ldots,X_n) \leftarrow_R (\{0,1\}^{\lambda})^n$ and denote $Z \coloneqq F(X)$.
Let $\mathcal{G}$ be a family of universal hash functions with domain $\{0,1\}^\lambda$ and range $\{0,1\}^{\lambda'}$ and let $G \leftarrow_R (\mathcal{G})^n$.
Let $\delta >0, \gamma > 0, s' > s$ be parameters such that $(1 - \delta) \lambda > \lambda' + \log n + 1$.

Then, there exists a family $V_{G,Z} = \{V_{\vec{g},z}\}_{\vec{g} \in (\mathcal{G})^n, z \in \{0,1\}^s}$ of convex combinations $V_{\vec{g},z}$ of $k$-bit-fixing $(n,2^{\lambda'})$-sources
for $k = \frac{s' + \log 1/\gamma}{\delta \cdot \lambda}$ such that
$$\Delta[(G,Z,G(X)),(G,Z,V_{G,Z})] \leq \sqrt{2^{-(1 - \delta) \lambda + \lambda' + \log n}} + \gamma + 2^{s - s'}.$$
\end{lemma}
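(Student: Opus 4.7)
The strategy is to separate the effect of the $s$-bit leakage from the effect of the public hashes, following the pre-sampling paradigm of Unruh~\cite{Unruh07} and the refinements of Coretti et al.~\cite{CorettiDGS18,CorettiDG18,DodisFMT20}. Concretely, I would first argue that conditioned on the leakage, the key vector $X$ is close to a convex combination of dense sources, and then use the leftover hash lemma (LHL) to show that applying the coordinate-wise universal hashes $G$ collapses each dense source to a bit-fixing source over $\{0,1\}^{\lambda'}$. The target statement is then obtained by composing the two approximations.

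\textbf{Step 1 (pre-sampling).} I would invoke the standard pre-sampling lemma for high-min-entropy sources under leakage, of the form proved in~\cite{CorettiDGS18,CorettiDG18,DodisFMT20} via the dense-model / ``heavy-query'' combinatorial argument. In our parameters, this yields that $(Z, X)$ is $\bigl(\gamma + 2^{s-s'}\bigr)$-close to $(Z, \widetilde{X}_Z)$, where for every fixed $z$ the random variable $\widetilde{X}_z$ is a convex combination of $(k, 1-\delta)$-dense $(n, 2^{\lambda})$-sources with $k = \frac{s' + \log (1/\gamma)}{\delta \cdot \lambda}$. Intuitively, $s$ bits of information about $X$ can only ``pin down'' roughly $s/(\delta\lambda)$ of the $n$ blocks, and on the remaining blocks the source retains min-entropy rate at least $1-\delta$.

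\textbf{Step 2 (LHL).} I would next fix any $(k, 1-\delta)$-dense source $D$ appearing in the decomposition above, with fixed coordinate set $T \subseteq [n]$, $|T|\le k$, and fixed values $(x_i)_{i\in T}$. Sampling $G \leftarrow_R \mathcal{G}^n$ independently, the coordinates of $G(D)$ indexed by $T$ become the deterministic values $(G_i(x_i))_{i\in T}$ (a legitimate fixing of a bit-fixing source over $\{0,1\}^{\lambda'}$), while on every coordinate $i \notin T$ we have $H_\infty(D_i) \ge (1-\delta)\lambda > \lambda' + \log n + 1$. Universality of $\mathcal{G}$ together with the standard LHL gives that $(G_i, G_i(D_i))$ is $\tfrac{1}{2}\sqrt{2^{-(1-\delta)\lambda + \lambda'}}$-close to $(G_i, U_{\lambda'})$ for each such $i$; a hybrid over the at most $n$ unfixed coordinates then shows that $(G, G(D))$ is within $\sqrt{2^{-(1-\delta)\lambda + \lambda' + \log n}}$ of $(G, V_{g,D})$, where $V_{g,D}$ is the corresponding $k$-bit-fixing source with the fixed values inherited from $D$ through $G$.

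\textbf{Step 3 (combine).} Taking the convex combination on both sides and pushing the LHL error inside the combination (which preserves it, by the triangle inequality and convexity of statistical distance), step~2 upgrades step~1 from ``convex combination of dense $(\{0,1\}^\lambda)$-sources under $Z$'' to ``convex combination of bit-fixing $(\{0,1\}^{\lambda'})$-sources under $(G,Z)$.'' Defining $V_{\vec g, z}$ to be exactly this combination, the overall distance from $(G, Z, G(X))$ to $(G, Z, V_{G,Z})$ is bounded by the sum of the two errors, yielding the claimed $\sqrt{2^{-(1-\delta)\lambda + \lambda' + \log n}} + \gamma + 2^{s-s'}$. The main obstacle is step~1: obtaining the pre-sampling decomposition with exactly the stated trade-off between the ``bad leakage'' mass $2^{s-s'}$, the approximation error $\gamma$, and the number $k$ of fixed coordinates requires a careful iterative fixing of the coordinates most heavily constrained by $F$ and a clean invocation of the dense-model theorem; steps~2 and~3 are then essentially routine.
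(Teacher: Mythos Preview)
Your overall three-step structure matches the paper's proof, and Steps~1 and~3 are essentially as in the paper (the paper phrases Step~1 as two separate pieces, Proposition~\ref{prop:min} for the $2^{s-s'}$ term and Lemma~\ref{lem:fixing} for the $\gamma$ term, but the content is the same).

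Step~2, however, has a real gap. Applying the standard LHL coordinate by coordinate using only the marginal bound $H_\infty(D_i) \geq (1-\delta)\lambda$ and then hybridizing does not work as stated, for two reasons. First, the arithmetic: summing the per-coordinate error $\tfrac{1}{2}\sqrt{2^{-(1-\delta)\lambda+\lambda'}}$ over $n$ coordinates gives $\tfrac{1}{2}\sqrt{2^{-(1-\delta)\lambda+\lambda'+2\log n}}$, not $\sqrt{2^{-(1-\delta)\lambda+\lambda'+\log n}}$; this is a multiplicative $\sqrt{n}$ loss relative to the claimed bound. Second, and more fundamentally, the blocks of a $(1-\delta)$-dense source are not independent, and the denseness condition on singletons says nothing about the conditional min-entropy of $D_i$ given the other blocks (for instance, with $\delta=1/2$ one can take $D_1=D_2$ each uniform, so $H_\infty(D_2\mid D_1)=0$). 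Hence the hybrid step ``replace $G_i(D_i)$ by uniform while the other $G_j(D_j)$ are still present'' is not justified by the marginal LHL alone.

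The paper handles this by proving a block-wise extraction lemma (Lemma~\ref{lem:leftover}) directly via a collision-probability argument: it bounds $\mathrm{Col}[(G,G(Y))]$ by conditioning on the number $c$ of coordinates where $Y$ and an independent copy $Y'$ agree, uses the denseness hypothesis $H_\infty(Y_I)\geq(1-\delta)|I|\lambda$ for \emph{every} subset $I$ to bound $\Pr[C=c]$, sums the resulting geometric series, and converts to statistical distance via Cauchy--Schwarz. Using the denseness condition for all subsets, rather than just singletons, is exactly what handles the block correlations and is also what yields $\log n$ rather than $2\log n$ in the exponent.
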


We obtain the following corollary (which implies the parameters of Theorem~\ref{thm:multi2}).
\begin{corollary}
\label{cor:leakage}
In the setting of Lemma~\ref{lem:leakage}, assuming $n < 2^{0.15 \lambda}$ and sufficiently large $\lambda$, the parameters
$s' = s + \lambda,\lambda' =  0.1 \lambda, \delta = 0.5, \gamma = 2^{-\lambda}$ give
\begin{align*}
\Delta[(G,Z,G(X)),(G,Z,V_{G,Z})] \leq 2^{-0.1\lambda}, \text{ \qquad and \qquad } k = \frac{2s}{\lambda} + 4.
\end{align*}

\end{corollary}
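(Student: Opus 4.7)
The plan is to directly instantiate Lemma~\ref{lem:leakage} with the specified parameters and verify that each error term is bounded by a small enough quantity. This is essentially a plug-and-chug calculation, so the main task is to confirm the hypothesis of the lemma and to bound each of the three error contributions separately.

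First I would verify that the hypothesis $(1-\delta)\lambda > \lambda' + \log n + 1$ is satisfied under the chosen parameters. With $\delta = 0.5$, $\lambda' = 0.1\lambda$, and $n < 2^{0.15\lambda}$, the left-hand side equals $0.5\lambda$ while the right-hand side is strictly less than $0.1\lambda + 0.15\lambda + 1 = 0.25\lambda + 1$, which is below $0.5\lambda$ for all sufficiently large $\lambda$.

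Next I would bound the three terms in the statistical distance estimate from Lemma~\ref{lem:leakage}. The first term satisfies
\[
\sqrt{2^{-(1-\delta)\lambda + \lambda' + \log n}} = \sqrt{n \cdot 2^{-0.4\lambda}} < \sqrt{2^{0.15\lambda - 0.4\lambda}} = 2^{-0.125\lambda},
\]
using $n < 2^{0.15\lambda}$. The second term is $\gamma = 2^{-\lambda}$, and the third is $2^{s - s'} = 2^{-\lambda}$. Summing, the total statistical distance is bounded by $2^{-0.125\lambda} + 2 \cdot 2^{-\lambda}$, which is at most $2^{-0.1\lambda}$ for all sufficiently large $\lambda$ (since the first term dominates and $-0.125 < -0.1$).

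Finally, I would compute the bit-fixing parameter directly:
\[
k = \frac{s' + \log(1/\gamma)}{\delta \cdot \lambda} = \frac{(s + \lambda) + \lambda}{0.5\,\lambda} = \frac{2s}{\lambda} + 4,
\]
which matches the claimed expression. There is no substantive obstacle here — the content is entirely in Lemma~\ref{lem:leakage} — the only thing to be mildly careful about is the choice of parameters: $s' = s + \lambda$ and $\gamma = 2^{-\lambda}$ are picked so that $s' + \log(1/\gamma) = s + 2\lambda$, giving the clean expression for $k$, while $\lambda' = 0.1\lambda$ and $\delta = 0.5$ are chosen to make the entropy-loss term dominate and fall comfortably below $2^{-0.1\lambda}$ whenever $n$ is allowed to be as large as $2^{0.15\lambda}$.
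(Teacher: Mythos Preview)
Your proof is correct and follows essentially the same approach as the paper's own proof: verify the hypothesis of Lemma~\ref{lem:leakage}, plug in the chosen parameters to bound each of the three error terms, and compute $k$ directly. Your bound $2^{-0.125\lambda}$ on the first term is slightly sharper than the paper's presentation, but the argument is otherwise identical.
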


\begin{proof}
Set $s' = s + \lambda$, $\lambda' =  0.1 \lambda$, $\delta = 0.5, \gamma = 2^{-\lambda}$.
Then, for sufficiently large $\lambda$,
$(1 - \delta) \lambda  = 0.5 \lambda > 0.1 \lambda + 0.15 \lambda + 1 > \lambda' + \log n + 1$
(and the condition of Lemma~\ref{lem:leakage} holds).
We therefore have (for sufficiently large $\lambda$): 
$
\Delta[(G,Z,G(X)),(G,Z,V_{G,Z})] \leq %
\sqrt{2^{-(1 - \delta) \lambda + \lambda' + \log n}} + \gamma + 2^{s - s'} =
\sqrt{2^{-0.4\lambda + \log n}} + 2^{-\lambda} + 2^{-\lambda} \leq 2^{-0.1\lambda},
$ 
and 
$
k = \frac{s' + \log 1/\gamma}{\delta \cdot \lambda} = \frac{s + \lambda + \lambda}{0.5 \cdot \lambda} = \frac{2s}{\lambda} + 4
$.

\end{proof}

\subsection{The Proof of Theorem~\ref{thm:multi2}}

\paragraph{Using Lemma~\ref{lem:leakage} to prove Theorem~\ref{thm:multi2}.}
Before proving Theorem~\ref{thm:multi2}, we explain why Lemma~\ref{lem:leakage} is needed and how it used in the proof.

It is easy to prove some weaker statements than Lemma~\ref{lem:leakage}, but these do not seem to be sufficient for building the MILR scheme (i.e., proving Theorem~\ref{thm:multi2}). For example, one can easily prove that with high probability, given the leakage $z$ and hash functions $\vec{g}$, there is a large subset of (hashed) keys such that each one of them is almost uniformly distributed. However, the adversary could have knowledge of various relations between the keys of this subset and it is not clear how to prove security without making assumptions about the resistance of the encryption scheme against related-key attacks.

Moreover, consider a stronger statement, which asserts that with high probability, given the leakage $z$ and hash functions $\vec{g}$, there is a large subset of (hashed) keys that are jointly uniformly distributed. We claim that even this stronger statement may not be sufficient to prove security, since it does not consider the remaining keys outside of the subset. In particular, consider a scenario in which the adversary is able to recover some weak keys outside of the subset. Given this extra knowledge and the leakage $z$, the original subset of keys may no longer be distributed uniformly (and may suffer from a significant entropy loss).

Lemma~\ref{lem:leakage} essentially asserts that there is a subset of keys that is almost jointly uniformly distributed even if we give the adversary $z,\vec{g}$ and all the remaining keys.
More specifically, given the hash functions $\vec{g}$ and the leakage $z$, according to the lemma,
the distribution of the hashed keys $\vec{g}(X)$ is (close to) a convex combinations $V_{\vec{g},z}$ of $k$-bit-fixing sources. In the proof of Theorem~\ref{thm:multi2} we will fix such a $k$-bit-fixing source by giving the adversary $k$ hashed keys (we will do this carefully, making sure that the adversary's advantage does not change significantly). Since the remaining hashed keys are uniformly distributed from the adversary's view, security with respect to these keys follows from the semantic security of the underlying encryption scheme.

\begin{proof}[Proof of Theorem~\ref{thm:multi2}]
Fix a preprecessing procedure $F$ and let $\lambda$ be sufficiently large, $n < 2^{0.15 \lambda}$.
By Lemma~\ref{lem:leakage} (with parameters set in Corollary~\ref{cor:leakage}),
there exists a family $V_{G,Z}$ of convex combinations $V_{\vec{g},z}$ of $k$-bit-fixing $(n,2^{\lambda'})$-sources
for
$k = \frac{2s}{\lambda} + 4$
such that
$\Delta[(G,Z,G(X)),(G,Z,V_{G,Z})] \leq 2^{-0.1\lambda}.$

\paragraph{Sampling the index set $J$ and simplifying the distribution.}
We first define how the oracles $\oracle_0$ and $\oracle_1$ in Definition~\ref{def:muiltienc} sample $J$. The random variable $J$ is naturally defined when sampling the random variables $(G,Z,V_{G,Z})$: given $\vec{g} \in (\mathcal{G})^n, z \in \{0,1\}^s$, sample a $k$-bit-fixing source in the convex combination $V_{\vec{g},z}$ (according to its weight) and let $J$ be the set of (at least) $n-k$ indices that are not fixed. This defines a joint distribution on the random variables $(G,Z,V_{G,Z},J)$. Another way to sample from this distribution is to first sample the variables $(G,Z,V_{G,Z})$ and then sample $J$ according to its marginal distribution. This defines a randomized procedure for sampling $J$. Although the oracles do not sample $(G,Z,V_{G,Z})$, we reuse the same sampling procedure for sampling $J$ given the sample $(\vec{g},z,\vec{g}(\vec{x}))$ (if the sample $(\vec{g},z,\vec{g}(\vec{x}))$ is not in the support of the distribution of $(G,Z,V_{G,Z})$, define $J = [n]$).

Consider a $\poly(\Gamma(\lambda))$-time algorithm $\BBB$.
As encryption queries of $\BBB$ to $\oracle_0$ and $\oracle_1$ are answered with the hash keys $\vec{g}(\vec{x})$, then given $\vec{g}(\vec{x})$, the interaction of $\BBB$ with $\oracle_0$ and $\oracle_1$ no longer depends on $\vec{x}$.
Therefore, for $t = 0,1$ we define $\oracle^{(1)}_t(\vec{g}(\vec{x}),\vec{g},J,\cdot,\cdot)$ that simulates the interaction of $\oracle_t(\vec{x},\vec{g},J,\cdot,\cdot)$ with $\BBB$. Instead of sampling $\vec{x}$, the oracles directly sample $(\vec{g},z,\vec{g}(\vec{x}),J)$ according to their joint distribution before the interaction with $\BBB$. To simplify notation, we denote this joint distribution by $\mathcal{D}_1$.
Denote
\begin{align*}
&\adv_{\BBB}(\lambda) = \\
&\left|
\Pr_{\substack{\BBB,\Enc\\
(\vec{g},z,\vec{g}(\vec{x}),J)
\leftarrow \mathcal{D}_1}}
\left[  \BBB^{\oracle^{(1)}_0(\vec{g}(\vec{x}),\vec{g},J,\cdot,\cdot)}(z,\vec{g})=1  \right]
-
\Pr_{\substack{\BBB,\Enc\\
(\vec{g},z,\vec{g}(\vec{x}),J)
\leftarrow \mathcal{D}_1}}\left[  \BBB^{\oracle^{(1)}_1(\vec{g}(\vec{x}),\vec{g},J,\cdot,\cdot)}(z,\vec{g})=1  \right]
\right|.
\end{align*}
It remains to prove that $\adv_{\BBB}(\lambda) \leq \negl(\Gamma(\lambda))$.

\paragraph{Using Lemma~\ref{lem:leakage} to switch to a family of convex combinations of bit-fixing sources.}
We have
$$\Delta[(G,Z,G(X),J),(G,Z,V_{G,Z},J)] \leq
\Delta[(G,Z,G(X)),(G,Z,V_{G,Z})] \leq 2^{-0.1\lambda},
$$
where the first inequality follows by the data processing inequality, since $J$ is computed by applying the same function to the three variables of both distributions, and the second inequality is by Corollary~\ref{cor:leakage}.
Hence, for $t= 0,1$ we replace $\oracle^{(1)}_t$ that samples from $\mathcal{D}_1$ with $\oracle^{(2)}_t$ that samples from the joint distribution of $(G,Z,V_{G,Z},J)$, which we denote by $\mathcal{D}_2$. Since $\BBB$ and $\Enc$ use independent randomness, by the triangle inequality, the total penalty is at most $2 \cdot 2^{-0.1\lambda}$, namely
\begin{align*}
&\left|
\Pr_{\substack{\BBB,\Enc\\
(\vec{g},z,\vec{y},J)
\leftarrow \mathcal{D}_2}}
\left[  \BBB^{\oracle^{(2)}_0(\vec{y},\vec{g},J,\cdot,\cdot)}(z,\vec{g})=1  \right]
- 
\Pr_{\substack{\BBB,\Enc\\
(\vec{g},z,\vec{y},J)
\leftarrow \mathcal{D}_2}}\left[  \BBB^{\oracle^{(2)}_1(\vec{y},\vec{g},J,\cdot,\cdot)}(z,\vec{g})=1  \right]
\right| \geq \\
&\adv_{\BBB}(\lambda) - 2^{-0.1\lambda + 1},
\end{align*}
where we denote a sample from $\mathcal{D}_2$ by $(\vec{g},z,\vec{y},J)$.

\paragraph{Giving the adversary additional input.}
Consider a (potentially) more powerful $\poly(\Gamma(\lambda))$-time algorithm $\BBB_1$ against $\Pi$ whose input consists of $(z,\vec{g},J,\vec{y}_{\overline{J}})$, where $\overline{J} = [n] \backslash J$.
Namely, in addition to $(z,\vec{g})$ the input also consists of $J$, as well as the hashed keys $\vec{y}_{\overline{J}} \in \left(\{0,1\}^{\lambda'}\right)^{n - |J|}$ (note that these parameters define a $|\overline{J}|$-bit-fixing source). We denote $in = (z,\vec{g},J,\vec{y}_{\overline{J}})$,
and
\begin{align*}
&\adv_{\BBB_1}(\lambda) = \\
&\left|
\Pr_{\substack{\BBB_1,\Enc\\
(\vec{g},z,\vec{y},J)
\leftarrow \mathcal{D}_2}}
\left[  \BBB_1^{\oracle^{(2)}_0(\vec{y},\vec{g},J,\cdot,\cdot)}(in)=1  \right]
-
\Pr_{\substack{\BBB_1,\Enc\\
(\vec{g},z,\vec{y},J)
\leftarrow \mathcal{D}_2}}\left[  \BBB_1^{\oracle^{(2)}_1(\vec{y},\vec{g},J,\cdot,\cdot)}(in)=1  \right]
\right|.
\end{align*}
Next, we prove that for any such $\poly(\Gamma(\lambda))$-time algorithm $\BBB_1$, $\adv_{\BBB_1}(\lambda) \leq \negl(\Gamma(\lambda))$.
As any algorithm $\BBB$ with input $(z,\vec{g})$ can be simulated by an algorithm $\BBB_1$ with input $in$ and similar runtime, this implies that $\adv_{\BBB}(\lambda) - 2^{-0.1\lambda + 1} \leq \negl(\Gamma(\lambda))$ and hence $\adv_{\BBB}(\lambda) \leq \negl(\Gamma(\lambda))$, concluding the proof.

\paragraph{Fixing the adversary's input.}
Since both $\oracle^{(2)}_0$ and $\oracle^{(2)}_1$ sample the input of $\BBB_1$ from the same distribution, by an averaging argument, there exists an input $in^{*} = in^{*}_{\lambda} =   (z^{*},\vec{g}^{*},J^{*},\vec{y^{*}}_{\overline{J^{*}}})$ such that the advantage of $\BBB_1$ remains at least as large when fixing the input to $in^{*}$ and sampling from $\mathcal{D}_2$ conditioned on $in^{*}$. Note that given $in^{*}$ sampling from $\mathcal{D}_2$ reduces to sampling from the $|\overline{J^{*}}|$-bit-fixing source defined by $(\overline{J^{*}},\vec{y^{*}}_{\overline{J^{*}}})$,
i.e., selecting $\vec{w} \leftarrow_R (\{0,1\}^{\lambda'})^{|J^{*}|}$.
Therefore,
\begin{align*}
&\left|
\Pr_{\substack{\BBB_1,\Enc\\
\vec{w} \leftarrow_R (\{0,1\}^{\lambda'})^{|J^{*}|}}}
\left[  \BBB_1^{\oracle^{(2)}_0(in^{*},\vec{w},\cdot,\cdot)}(in^{*})=1  \right]
-
\Pr_{\substack{\BBB_1,\Enc\\
\vec{w} \leftarrow_R (\{0,1\}^{\lambda'})^{|J^{*}|}}}\left[  \BBB_1^{\oracle^{(2)}_1(in^{*},\vec{w},\cdot,\cdot)}(in^{*})=1  \right]
\right| \geq \\
&\adv_{\BBB_1}(\lambda).
\end{align*}

\paragraph{Reducing the security of $\Pi$ with preprocessing from the (multi-instance) security of $\Pi'$.}
We now use $\BBB_1$ to define a non-uniform $\poly(\Gamma(\lambda))$-time adversary $\BBB_2$ (with no preprocessing) that runs against $|J^{*}|$ instances of $\Pi'$ and has advantage at least $\adv_{\BBB_1}(\lambda)$.
By the semantic security of $\Pi'$ and a hybrid argument (similarly to the proof of Theorem~\ref{thm:multi1}), this implies that $\adv_{\BBB_1}(\lambda) \leq \negl(\Gamma(\lambda))$, concluding the proof.

\begin{algorithm}\caption{$\BBB_2^{\oracle_{(\cdot)}^{(3)}(\vec{x'},[|J^{*}|],\cdot,\cdot)}()$}
\label{alg:b2_2}

{\bf Setting:} $\BBB_2$ is a non-uniform adversary that runs against $|J^{*}|$ instances of $\Pi'$ (defined by $\oracle_{(\cdot)}^{(3)}$).
It gets $in^{*} = in^{*}_{\lambda} = (z^{*},\vec{g}^{*},J^{*},\vec{y^{*}}_{\overline{J^{*}}})$ as advice.
$\BBB_2$ has access to $\BBB_1^{\oracle^{(2)}_{(\cdot)}(in^{*},\vec{w},\cdot,\cdot)}(in^{*})$, which runs against $\Pi$.

\begin{enumerate}
  \item $\BBB_2$ gives $in^{*}$ to $\BBB_1$ as input.
  \item $\BBB_2$ answers each query $(j,m)$ of $\BBB_1$ as follows:
  \begin{itemize}
    \item If $j \in \overline{J^{*}}$, $\BBB_2$ uses the advice string $in^{*}$ (which contains $\vec{y}^{*}_j$) to compute the answer $\Enc'(\vec{y}^{*}_j,m)$ and gives it to $\BBB_1$.
    \item If $j \in J^{*}$, $\BBB_2$ translates the query $(j,m)$ to $(j',m)$, where $j' \in [|J^{*}|]$ is obtained by mapping $j$ to $J^{*}$ (ignoring indices in $\overline{J^{*}}$). $\BBB_2$ then queries its oracle with $(j',m)$ and forwards the answer to $\BBB_1$.
  \end{itemize}
  \item $\BBB_2$ outputs the same output as $\BBB_1$.
\end{enumerate}

\end{algorithm}

The adversary $\BBB_2$ is given in Algorithm~\ref{alg:b2_2}.
Note that $\BBB_2$ perfectly simulates the oracles of $\BBB_1$ given the input $in^{*}$, and hence its advantage is at least $\adv_{\BBB_1}(\lambda)$ as claimed. Finally, it runs in time $\poly(\Gamma(\lambda))$.
\end{proof}

\subsection{Proof of Lemma~\ref{lem:leakage}}
\label{sec:leakage}

\paragraph{Proof overview.}
We first prove in Lemma~\ref{lem:leftover} that $G = (G_1,\ldots,G_t)$ (for some $t \in [n]$) is a good extractor, assuming its input $Y = (Y_1,\ldots,Y_t)$ is $(1 - \delta)$-dense, namely, it has sufficient min-entropy for each subset of coordinates (see Lemma~\ref{lem:leftover} for the exact statement). Specifically, we prove that $(G,G(Y))$ is statistically close to $(G,U)$, where $U$ is uniformly distributed over $(\{0,1\}^{\lambda'})^t$. The proof is by a variant of the leftover hash lemma~\cite{HastadILL99} where a sequence of hash functions $(G_1,\ldots,G_t)$ are applied locally to each block of the input (instead of applying a single hash function to the entire input). We note that a related lemma was proved in~\cite[Lem. 13]{GoosLM0Z15} in a different setting of communication complexity.
Our variant is applicable to a different (mostly wider) range of parameters (such as various values of $\delta$ and the number of bits extracted, $t \cdot \lambda'$) that is relevant in our setting. Additional (somewhat less related) results were presented in~\cite{DodisFMT20,TZ20}.

The remainder of the proof is given in Appendix~\ref{app:preprocessing} and is somewhat similar to~\cite[Lem. 1]{CorettiDGS18}.
We fix $z \in \{0,1\}^s$ such that $H_{\infty}(X_z)$ is sufficiently high.
Lemma~\ref{lem:fixing} (proved in~\cite{CorettiDGS18,GoosLM0Z15,KothariMR17}) states that $X_z$ (before applying $G$) is close to a convex combination of $(k,1 - \delta)$-dense sources (for a sufficiently small value of $k$).
We therefore invoke Lemma~\ref{lem:leftover} on the high min-entropy coordinates of each such $(1-\delta)$-dense source. This allows to derive Lemma~\ref{lem:zleakage}, which states that $(G,G(X_z))$ is statistically close to $(G,V_{G,z}^{*})$, where $V_{G,z}^{*}$ is a family of convex combinations of $k$-bit-fixing $(n,2^{\lambda'})$-sources, indexed by $\vec{g} \in \mathcal{G}^n$.

Proposition~\ref{prop:min} (which is quite standard) shows that with very high probability (over $X$ and $F$, setting $z \leftarrow F(X)$), $H_{\infty}(X_z)$ is indeed sufficiently high to apply Lemma~\ref{lem:zleakage} with good parameters.
For each such $z$, we define $V_{G,z}$ to be $V_{G,z}^{*}$ as above and this essentially results in the family $V_{G,Z}$ (the remaining values of $z$ with low $H_{\infty}(X_z)$ are easy to handle with small loss in statistical distance).
Thus, the statistical distance  between $(G,Z,G(X))$ and $(G,Z,V_{G,Z})$ can be bounded by combining Proposition~\ref{prop:min} and Lemma~\ref{lem:zleakage}, establishing Lemma~\ref{lem:leakage}.

\paragraph{Block-wise extraction from dense sources.}
\begin{lemma}
\label{lem:leftover}
Let $Y = (Y_1,\ldots,Y_t) \in (\{0,1\}^{\lambda})^t$ be a $(t,2^{\lambda})$-source that is $(1 - \delta)$-dense for $0< \delta < 1$.
Let $\mathcal{G}$ be a family of universal hash functions with domain $\{0,1\}^\lambda$ and range $\{0,1\}^{\lambda'}$. Then, for $G \leftarrow_R (\mathcal{G})^t$ and $U \leftarrow_R (\{0,1\}^{\lambda'})^t$,
\begin{align*}
\Delta[(G,G(Y)),(G,U)] \leq \sqrt{2^{-(1 - \delta) \lambda + \lambda' + \log t}},
\end{align*}
assuming that $(1 - \delta) \lambda > \lambda' + \log t + 1$.
\end{lemma}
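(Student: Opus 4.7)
The plan is to bound the statistical distance by going through the collision probability, which is the standard leftover-hashing route. Concretely, for any random variable $X$ on a range of size $N$, Cauchy--Schwarz gives $\Delta(X, U_N) \le \tfrac{1}{2}\sqrt{N\cdot\mathrm{CP}(X) - 1}$, where $\mathrm{CP}(X)=\sum_\omega \Pr[X=\omega]^2$. Applied to $X=(G, G(Y))$ with $N = |\mathcal{G}|^t \cdot 2^{\lambda' t}$, it suffices to upper bound $\mathrm{CP}(G, G(Y))$.

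I would then compute this collision probability by taking an independent copy $(G', Y')$ and observing that the event $G=G'$ has probability $1/|\mathcal{G}|^t$, so
\[
\mathrm{CP}(G, G(Y)) \;=\; \frac{1}{|\mathcal{G}|^t}\cdot \mathbb{E}_{Y,Y'}\!\left[\Pr_G[G(Y)=G(Y')]\right].
\]
Since the blocks are hashed independently and each $\mathcal{G}_i$ is universal, for fixed $Y,Y'$ one gets $\Pr_G[G(Y)=G(Y')] = \prod_{i=1}^t \Pr_{G_i}[G_i(Y_i)=G_i(Y_i')] \le (2^{-\lambda'})^{t-|I|}$ where $I=\{i: Y_i=Y_i'\}$. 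Rewriting $(2^{-\lambda'})^{t-|I|}$ as $2^{-\lambda' t}\prod_i(1+(2^{\lambda'}-1)\mathbb{1}[Y_i=Y_i'])$ and expanding over subsets $S\subseteq[t]$ yields
\[
\mathbb{E}_{Y,Y'}\!\left[\Pr_G[G(Y)=G(Y')]\right] \;\le\; 2^{-\lambda' t}\sum_{S\subseteq[t]} 2^{\lambda'|S|}\cdot \Pr[Y_S = Y_S'].
\]

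Here I would invoke the density hypothesis: $\Pr[Y_S=Y_S']=\mathrm{CP}(Y_S)\le 2^{-H_\infty(Y_S)}\le 2^{-(1-\delta)|S|\lambda}$. Plugging in and applying the binomial theorem collapses the sum to $2^{-\lambda' t}\bigl(1+2^{\lambda'-(1-\delta)\lambda}\bigr)^t$. Setting $x=2^{\lambda'-(1-\delta)\lambda}$, the assumption $(1-\delta)\lambda > \lambda'+\log t+1$ gives $tx<\tfrac{1}{2}$, so $(1+x)^t - 1 \le e^{tx}-1 \le 2tx$. Therefore
\[
N\cdot \mathrm{CP}(G, G(Y)) - 1 \;\le\; (1+x)^t - 1 \;\le\; 2tx \;=\; 2^{-(1-\delta)\lambda+\lambda'+\log t+1},
\]
which, combined with $\Delta \le \tfrac{1}{2}\sqrt{N\cdot \mathrm{CP}-1}$, yields the claimed bound (in fact with a little slack).

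The main obstacle is the bookkeeping in the expansion over $S$: one must be careful that the factorization of $\Pr_G[G(Y)=G(Y')]$ (which is exact for fixed $Y,Y'$ because the $G_i$ are independent) is combined correctly with the non-independent block-wise collision probabilities of $Y$, and that the density hypothesis is used at the level of arbitrary subsets $Y_S$ rather than just the whole source. Everything else is a direct application of the universal-hashing/Cauchy--Schwarz machinery of the standard leftover hash lemma, specialized to the block-wise setting.
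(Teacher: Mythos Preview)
Your proposal is correct and follows essentially the same collision-probability/Cauchy--Schwarz route as the paper: both factor out $\Pr[G=G']$, use independence of the $G_i$ to get $2^{-\lambda'(t-|I|)}$ on the collision set $I$, and invoke the $(1-\delta)$-density hypothesis subset-wise to bound $\Pr[Y_S=Y_S']\le 2^{-(1-\delta)|S|\lambda}$. The only cosmetic difference is in closing the sum: the paper conditions on $|I|=c$, bounds $\binom{t}{c}\le t^c$, and sums a geometric series, whereas you expand over all subsets and apply the binomial theorem to get $(1+x)^t$ directly---your bookkeeping is marginally cleaner but the argument is the same.
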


\begin{proof} Let $d \coloneqq \log |\mathcal{G}|$. For a random variable $Q$, and $Q'$ an independent copy of $Q$, we denote by $Col[Q] = \Pr[Q = Q']$ the collision probability of $Q$. We have
\begin{align}
\label{eq:col1}
\begin{split}
&Col[(G,G(Y))] =
\Pr_{G,Y,G',Y'}[(G,G(Y')) = (G',G'(Y'))] = \\
&\Pr_{G,G'}[G = G'] \cdot \Pr_{G,Y,Y'}[G(Y) = G(Y')] =
2^{-t \cdot d} \cdot  \Pr_{G,Y,Y'}[G(Y) = G(Y')].
\end{split}
\end{align}
For sequences $Y_1,\ldots,Y_t,Y'_1,\ldots,Y'_t$, define $C = |\{i \mid Y_i = Y'_i\}|$. We now upper bound the expression $\Pr[C = c]$.

Recall that $Y$ is a $(1 - \delta)$-dense source, i.e., for every subset $I \subseteq [t]$,
$H_\infty(Y_I) \geq (1 - \delta) \cdot |I| \cdot \lambda$. Fix a subset $I \subseteq [t]$ such that $|I| = c$. Then,
\begin{align*}
&\Pr[Y_I = Y'_I] = \sum_{y_I \in (\{0,1\}^\lambda)^{c}} (\Pr[Y_I = y_I])^2 \leq \\
&\max_{y_I}\{\Pr[Y_I = y_I]\} \cdot \sum_{y_I \in (\{0,1\}^\lambda)^{c}} \Pr[Y_I = y_I] \leq
2^{-(1 - \delta) \cdot c \cdot \lambda}.
\end{align*}
Therefore,
\begin{align}
\label{eq:col2}
\begin{split}
&\Pr[C = c] \leq
\sum_{\{I \subseteq [t] \mid |I| = c\}} \Pr[Y_I = Y'_I] \leq
\binom{t}{c} \cdot 2^{-(1 - \delta) \cdot c \cdot \lambda} \leq \\
&t^c \cdot 2^{-(1 - \delta) \cdot c \cdot \lambda} =
2^{c \cdot (-(1 - \delta) \lambda + \log t)}.
\end{split}
\end{align}

We have
\begin{align*}
\Pr_{G,Y,Y'}[G(Y) = G(Y')] =
\sum_{c = 0}^{t} \Pr_{Y,Y'}[C = c] \cdot \Pr_{G,Y,Y'}[G(Y) = G(Y') \mid C = c].
\end{align*}

For each coordinate $i$ such that $Y_i \neq Y'_i$,
$\Pr_{G_i}(G_i(Y_i) = G_i(Y'_i)) = 2^{-\lambda'}$ as $G_i$ is selected uniformly from a family of universal hash functions. Since $G = (G_1,\ldots,G_t)$ contains $t$ independent copies selected uniformly from $\mathcal{G}$,
\begin{align*}
\Pr_{G}[G(Y) = G(Y') \mid C = c] =
2^{-\lambda' \cdot (t-c)}.
\end{align*}

Hence, using~(\ref{eq:col2}) we obtain
\begin{align*}
&\Pr_{G,Y,Y'}[G(Y) = G(Y')] =
\sum_{c = 0}^{t} \Pr[C = c] \cdot 2^{-\lambda' \cdot (t-c)} \leq \\
&\sum_{c = 0}^{t} 2^{c \cdot (-(1 - \delta) \lambda + \log t)} \cdot 2^{-\lambda' \cdot (t-c)} =
2^{- \lambda' \cdot t} \cdot \sum_{c = 0}^{t} 2^{- c \cdot ((1 - \delta) \lambda - \lambda' -  \log t)} = \\
&2^{- \lambda' \cdot t} \cdot (1 + \sum_{c = 1}^{t} 2^{- c \cdot ((1 - \delta) \lambda - \lambda' - \log t)}) \leq
2^{- \lambda' \cdot t} \cdot (1 + 2^{-(1 - \delta) \lambda + \lambda' + \log t + 1}),
\end{align*}
where the last inequality uses the assumption that $(1 - \delta) \lambda > \lambda' + \log t + 1$.

Treating distributions as vectors over $\{0,1\}^{t \cdot d + t \cdot \lambda'}$ (and abusing notation), we plug the above expression into~(\ref{eq:col1}) and deduce
\begin{align*}
&\|(G,G(Y)) - (G,U) \|_2^2 = Col[(G,G(Y))] - 2^{-t \cdot d - t \cdot \lambda'} \leq \\
&2^{-t \cdot d - t \cdot \lambda'} \cdot (1 +  2^{-(1 - \delta)\lambda + \lambda' + \log t + 1}) - 2^{-t \cdot d - t \cdot \lambda'} =
2^{-t \cdot d - t \cdot n' -(1 - \delta) \lambda + \lambda' + \log t + 1}.
\end{align*}

Finally, using the Cauchy–Schwarz inequality, we conclude
\begin{align*}
&\Delta[(G,G(Y)),(G,U)] \leq
1/2 \cdot \sqrt{2^{t \cdot d + t \cdot \lambda'}} \cdot \|(G,G(Y)) - (G,U) \|_2 \leq  \\
&1/2 \cdot \sqrt{2^{t \cdot d + t \cdot \lambda'}} \cdot \sqrt{ 2^{-t \cdot d - \lambda' \cdot t -(1 - \delta) \lambda + \lambda' + \log t + 1} } <
\sqrt{2^{-(1 - \delta) \lambda + \lambda' + \log t}}.
\end{align*}

\end{proof}

%


\appendix

\section{Additional preliminaries from cryptography}\label{sec:additionalPrelims}

A {\em probability ensemble} is an infinite sequence of probability distributions. Computational indistinguishability is defined as follows.

\begin{definition}
Two probability ensembles $\XXX=\{X_{\lambda}\}_{\lambda\in\N}$ and $\YYY=\{Y_{\lambda}\}_{\lambda\in\N}$ are {\em computationally indistinguishable}, denoted as $\XXX\equiv_c\YYY$, if for
every non-uniform probabilistic polynomial-time distinguisher $D$ there exists a negligible
function $\negl$ such that:
$$
\left|
\Pr_{x\leftarrow X_{\lambda}}\left[D(x)=1\right]
-
\Pr_{y\leftarrow Y_{\lambda}}\left[D(y)=1\right]
\right|
\leq\negl(\lambda).
$$
\end{definition}

\begin{definition}[\cite{Goldreich2001}]
A {\em non-uniform polynomial-time algorithm} is a pair $(M, \overline{a})$, where $M$ is a two-input polynomial-time Turing machine and $\overline{a} = a_1, a_2,\dots$ is an infinite sequence of strings
such that $|a_{\lambda}| = \poly(\lambda)$. For every $x$, we consider the computation of $M$
on the input pair $(x, a_{|x|})$. Intuitively, $a_{\lambda}$ can be thought of as extra ``advice'' supplied
from the ``outside'' (together with the input $x\in\{0, 1\}^{\lambda}$). We stress that $M$ gets the same advice (i.e., $a_{\lambda}$) on all inputs of the same length (i.e., $\lambda$). 

Equivalently, a {\em non-uniform polynomial-time algorithm} is an infinite sequence of Turing machines, $M_1, M_2,\dots$, such that both the length of the description of $M_{\lambda}$ and its running time on inputs of length $\lambda$ are bounded by polynomials
in $\lambda$ (fixed for the entire sequence). Machine $M_{\lambda}$ is used only on inputs of length $\lambda$.
\end{definition}

\paragraph{Encryption schemes.}
An {\em encryption scheme} is a tuple of efficient 
algorithms $(\Gen,\Enc,\Dec)$ with the following syntax:
\begin{itemize}
    \item $\Gen$ is a randomized algorithm that takes as input a security parameter $\lambda$ and outputs a $\lambda$-bit secret key. Formally,
    $x\leftarrow\Gen(1^\lambda)$.
    \item $\Enc$ is a randomized algorithm that takes as
    input a secret key $x$, a public parameter $p$, and a
    message $m\in\{0,1\}$, and outputs a
    ciphertext $c\in\{0,1\}^{\poly(\lambda)}$. Formally, $c\leftarrow\Enc(x,m)$.
    \item $\Dec$ is a deterministic algorithm that takes as input a secret key $x$, a public parameter $p$, and a ciphertext $c$, and outputs a decrypted message $m'$. If the ciphertext $c$ was an encryption of $m$ under the key $x$ with the parameter $p$, then $m' = m$. Formally, if $c\leftarrow\Enc(x,m)$, then $\Dec(x,c) = m$ with probability $1$.
\end{itemize}

Let us recall the formal definition of security of an encryption scheme. Consider a pair of oracles $\oracle_0$ and $\oracle_1$, where $\oracle_1(x,\cdot)$ takes as input a message $m$ and returns $\Enc(m,x)$, and where $\oracle_0(x,\cdot)$ takes the same input
but returns $\Enc(0,x)$. 
An encryption scheme $(\Gen,\Enc,\Dec)$ is {\em secure} if no computationally efficient adversary can tell whether it is interacting with $\oracle_0$ or with $\oracle_1$. Formally,

\begin{definition}
Let $\lambda$ be a security parameter, and let $\Gamma=\Gamma(\lambda)$ be a function of $\lambda$. 
An encryption scheme $(\Gen,\Enc,\Dec)$ is $\Gamma$-\emph{secure} if for every  $\poly(\Gamma(\lambda))$-time adversary $\BBB$ there exists a negligible
function $\negl$ such that:
\begin{equation*}
\left| 
\Pr_{\substack{x\leftarrow\Gen(1^\lambda) \\ \BBB,\Enc }}\left[  \BBB^{\oracle_0(x,\cdot)}=1  \right] 
- 
\Pr_{\substack{x\leftarrow\Gen(1^\lambda) \\ \BBB,\Enc }}\left[  \BBB^{\oracle_1(x,\cdot)}=1  \right] 
\right| \leq \negl(\Gamma(\lambda)),
\end{equation*}
where the probabilities are over sampling $x\leftarrow\Gen(1^{\lambda})$ and over the randomness of $\BBB$ and $\Enc$.
\end{definition}

\section{Proof of Theorem~\ref{thm:ADA}}\label{app:ADAfullProof}

In this section we prove Theorem~\ref{thm:ADA}. We begin by recalling additional preliminaries from \cite{HardtU14,SteinkeU15}. As a stepping stone towards proving their negative results for the ADA problem, \cite{HardtU14,SteinkeU15} showed a negative result for a restricted family of mechanisms, called {\em natural} mechanisms. These are algorithms that can only evaluate the query on the sample points they are given. Formally,

\begin{definition}[\cite{HardtU14}]
A mechanism $\AAA$ is {\em natural} if, when given a sample $Y=(y_1,\dots,y_t)\in X^t$ 
and a query $q:X\rightarrow\{-1,0,1\}$, the algorithm returns an answer that is a function
only of $(q(y_1),\dots ,q(y_t))$. In particular, it cannot evaluate $q$ on other data points of its choice.
\end{definition}

Consider the following variant of the adaptive game that appeared in Figure~\ref{fig:ADAintro} (in the introduction), in which the underlying distribution is fixed to be uniform. See algorithm \texttt{AdaptiveGameOne}.

\begin{algorithm}\caption{$\texttt{AdaptiveGameOne}(\AAA,\BBB,t,k)$}
{\bf Notation:} $t$ is the sample size and $n=2000t$ is the domain size. 

\begin{enumerate}
    \item The mechanism $\AAA$ gets a sample $Y=(y_1,\dots,y_t)$ containing $t$ i.i.d.\ samples from the uniform distribution on $[n]$. This sample is not given to $\BBB$.
    \item For round $i=1,2,\dots,k$:
    \begin{enumerate}
        \item The adversary $\BBB$ specifies a query $q_i:[n]\rightarrow\left\{-1,0,1\right\}$
        \item The mechanism $\AAA$ obtains $q_i$  %
        and responds with an answer $a_i\in[-1,1]$ 
        \item[] \gray{{\small \% If $\AAA$ is {\em natural} then it only gets $(q_i(y_1),\dots,q_i(y_t))$ instead of the full description of $q_i$.}}
        \item $a_i$ is given to $\BBB$
    \end{enumerate}
    \item The outcome of the  game is one if $\exists i$ s.t.\ $\left|a_i - \frac{1}{n}\sum_{j\in[n]}q_i(j) \right|>1/10$, and zero otherwise.
\end{enumerate}
\end{algorithm}

\begin{theorem}[\cite{HardtU14,SteinkeU15}]
There exists a computationally efficient adversary $\BBB$ such that for every {\em natural} mechanism $\AAA$ and every large enough $t$ and $k=\Theta(t^2)$ it holds that
$$
\Pr[\texttt{AdaptiveGameOne}(\AAA,\BBB,t,k)=1]>\frac{3}{4}.
$$
\end{theorem}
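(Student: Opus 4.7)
The strategy is to construct the adversary $\BBB$ from a fingerprinting code (FPC), instantiated via Theorem~\ref{thm:FPC}. Set $d = k = \Theta(t^2 \log n)$ and fix an $(n, d, t, 1/n^2)$-FPC $(C, \mathrm{Trace})$ with codebook rows $c_1,\dots,c_n \in \zo^d$ and an efficient tracing algorithm. The adversary $\BBB$ samples such a codebook once, and in round $j$ asks the query $q_j(i) = 2c_{i,j}-1 \in \{-1,+1\}$, records the answer $a_j$, and rounds to $\hat{w}_j = \mathbf{1}[a_j > 0]$. At the end $\BBB$ computes $i^\star = \mathrm{Trace}(\hat{w})$ (this step is used only in the analysis; the adversary's actual ``win'' comes from some $a_j$ being far from $\mu_j := \tfrac{1}{n}\sum_i q_j(i)$). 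Note $\BBB$ runs in time $\poly(t, n)$ since the FPC and Trace are efficient.

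\textbf{Key lemma (accuracy $+$ naturalness $\Rightarrow$ marking).} Suppose towards contradiction that $\AAA$ is $(1/10)$-accurate on every one of the $k$ queries simultaneously with probability at least $1/4$ (otherwise $\BBB$ already wins with probability $>3/4$). I would first show that, on this accuracy event, the rounded vector $\hat{w}$ satisfies the FPC marking assumption with respect to the sample codewords $c_{y_1},\dots,c_{y_t}$. The argument has two parts: (i)~because $\AAA$ is natural, its answer $a_j$ is determined entirely by $(q_j(y_1),\dots,q_j(y_t))$, i.e.\ by the bits $(c_{y_1,j},\dots,c_{y_t,j})$; (ii)~if all these bits equal a common value $b \in \{0,1\}$, then for any Tardos-style FPC the column bias $p_j$ is strongly skewed towards $b$ (with an extra amplification step one can force $|\mu_j - (2b-1)| < 1/10$ with high probability over the codebook), so $(1/10)$-accuracy forces $\hat{w}_j = b$. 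Hence at every coordinate $j$ the value $\hat{w}_j$ equals $c_{y_i,j}$ for at least one $i \in [t]$, which is exactly the marking assumption for the coalition $Y$.

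\textbf{Applying the FPC guarantee.} Combining the marking assumption with Theorem~\ref{thm:FPC}, Trace identifies a user in the sample: $i^\star \in Y$ with probability at least $1 - 2/n^2$ conditioned on the accuracy event. Therefore the unconditional probability of the event $\{i^\star \in Y\}$ is at least $1/4 - 2/n^2 \geq 1/5$ for $n = 2000t$ and $t$ large enough.

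\textbf{Contradiction via false accusation.} The final and most delicate step is to derive a contradiction from ``$i^\star \in Y$ with constant probability'' using the FPC's small-false-accusation guarantee. Since the codebook is generated by a row-symmetric distribution and $Y$ is uniformly random in $[n]^t$, for any fixed $y^\star \in [n]$ the joint distribution of $(C, Y, \hat{w})$ is invariant under permutations of $[n]$ that move $y^\star$. Averaging $\Pr[i^\star \in Y] \geq 1/5$ over $y^\star \in [n]$ and using $|Y|=t$, there must exist some $y^\star$ with $\Pr[i^\star = y^\star \wedge y^\star \in Y] \geq 1/(5n)$. Conditioning on $y^\star \notin Y$ (which occurs with probability $1 - t/n$) and invoking the FPC false-accusation bound on the coalition $Y \subseteq [n]\setminus\{y^\star\}$ yields $\Pr[i^\star = y^\star \wedge y^\star \notin Y] \leq 1/n^2$. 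Combining these inequalities with the $n=2000t$ scaling will produce the contradiction, provided one carefully couples the ``$y^\star \in Y$'' and ``$y^\star \notin Y$'' cases using the symmetry of the codebook distribution.

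\textbf{Main obstacle.} The most delicate part is the marking step: one must guarantee that whenever the sample bits in a column are unanimous, the true mean is close enough that $(1/10)$-accuracy forces the rounded bit to match. This likely requires modifying the raw FPC either by a majority-vote amplification across groups of columns (raising $k$ by only a logarithmic factor) or by switching to the Steinke--Ullman biased-code variant. The contradiction step via symmetric averaging is also subtle because the sample $Y$ is drawn with replacement and the codebook rows are not exactly exchangeable once $Y$ is observed; this is handled by a light reduction to the sampling-without-replacement version of the FPC game.
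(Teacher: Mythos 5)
This theorem is cited from \cite{HardtU14,SteinkeU15}; the paper does not prove it, so there is no in-paper argument to compare your attempt against. Your reconstruction does capture the high-level reduction---a natural mechanism's rounded answers form a fingerprinting-code pirate word, which is traced back to the sample---and this is indeed the skeleton of the cited proofs. However, two issues prevent the proposal from going through as written.

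First, a parameter gap. You set $d = k = \Theta(t^2 \log n)$, which with $n = 2000t$ is $\Theta(t^2 \log t)$, while the theorem asserts $k = \Theta(t^2)$. The non-interactive code of Theorem~\ref{thm:FPC} carries an unavoidable $\log n$ factor in its length; \cite{SteinkeU15} shave it off by constructing an \emph{interactive} fingerprinting code whose columns are generated adaptively from the pirate's previous answers. Your outline does not use that machinery, so it can only prove the strictly weaker bound $k = \Theta(t^2 \log t)$.

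Second, and more fundamentally, the ``contradiction via false accusation'' step does not close. Exchangeability of the codebook and sample gives $\Pr[i^\star = y^\star \wedge y^\star \in Y] \geq 1/(5n)$ for \emph{every} $y^\star$, and the false-accusation guarantee gives $\Pr[i^\star = y^\star \wedge y^\star \notin Y] \leq 1/n^2$; but these bound probabilities of \emph{disjoint} events and are perfectly consistent with each other, because a fingerprinting code is allowed to accuse coalition members arbitrarily often. The ``careful coupling'' you allude to---relating the case $y^\star \in Y$ to the case $y^\star \notin Y$---is exactly the hard part, and exchangeability alone cannot supply it: the natural mechanism's pirate word is a function of $C_Y$, and it genuinely changes when a coalition codeword is swapped out for a fresh one. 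Closing this gap needs the full stability/privacy reasoning of \cite{HardtU14,SteinkeU15}. As a side remark, your worry about the marking step is misplaced: under the marking assumption this paper adopts (see Remark~\ref{rem:marking}), marking fails on column $j$ only when all $n$ users share a common bit $b$, in which case the population mean is $\mu_j = 2b-1 = \pm 1$ and $(1/10)$-accuracy already forces $\hat{w}_j = b$. No majority-vote amplification or biased-code variant is needed.
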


We remark that this theorem holds  even if $M$ is computationally unbounded. That is, there exists a computationally efficient adversary $\AAA$ that fails {\em every} natural mechanism $M$ in the ADA problem, even when the underlying distribution is uniform.

Our first observation is that the proof of \cite{HardtU14,SteinkeU15} continues to hold even if in Step 1 of the game the mechanism {\em chooses} its $t$ input points, rather than sampling them uniformly. We refer to such a mechanism as a {\em selecting} mechanism. See algorithm \texttt{AdaptiveGameTwo}.

\begin{algorithm}\caption{$\texttt{AdaptiveGameTwo}(\AAA,\BBB,t,k)$}
{\bf Notations:} $t$ is the sample size and $n=2000t$ is the domain size. 

\begin{enumerate}
    \item \red{The mechanism $\AAA$ {\em chooses} a subset  $Y=(y_1,\dots,y_t)\subseteq[n]$ of size $t$. This subset is not given to $\BBB$.}
    \item For round $i=1,2,\dots,k$:
    \begin{enumerate}
        \item The adversary $\BBB$ specifies a query $q_i:[n]\rightarrow\left\{-1,0,1\right\}$
        \item The mechanism $\AAA$ obtains $q_i$  %
        and responds with an answer $a_i\in[-1,1]$ 
        \item[] \gray{{\small \% If $\AAA$ is {\em natural} then it only gets $(q_i(y_1),\dots,q_i(y_t))$ instead of the full description of $q_i$.}}
        \item $a_i$ is given to $\BBB$
    \end{enumerate}
    \item The outcome of the  game is one if $\exists i$ s.t.\ $\left|a_i - \frac{1}{n}\sum_{j\in[n]}q_i(j) \right|>1/10$, and zero otherwise.
\end{enumerate}
\end{algorithm}

\begin{theorem}[extension of \cite{HardtU14,SteinkeU15}]\label{thm:extension}
There exists a computationally efficient adversary $\BBB$ such that for every {\em natural} mechanism $\AAA$ and every large enough $t$ and $k=\Theta(t^2)$ it holds that
$$
\Pr[\texttt{AdaptiveGameTwo}(\AAA,\BBB,t,k)=1]>\frac{3}{4}.
$$
\end{theorem}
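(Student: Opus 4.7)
The plan is to reduce \texttt{AdaptiveGameTwo} to \texttt{AdaptiveGameOne} via a symmetrization argument, showing that letting the natural mechanism choose its sample confers no real advantage against a suitably symmetric adversary.

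First I would symmetrize the adversary $\BBB_0$ from the Hardt--Ullman / Steinke--Ullman theorem. Define $\BBB$ to internally sample a uniformly random permutation $\pi:[n]\to[n]$, then simulate $\BBB_0$ and, whenever $\BBB_0$ emits a query $q$, output $q\circ\pi^{-1}$ to the outside (passing the received answer back to $\BBB_0$ unchanged). Because the uniform distribution on $[n]$ is invariant under $\pi^{-1}$, the true expectation of $q\circ\pi^{-1}$ equals that of $q$, so the accuracy criterion is preserved. Moreover, against any natural mechanism receiving an iid uniform sample $y_1,\dots,y_t$, the view $(q_i\circ\pi^{-1})(y_j)=q_i(\pi^{-1}(y_j))$ is distributed exactly as if $\BBB_0$ were interacting with the same mechanism on iid sample $(\pi^{-1}(y_1),\dots,\pi^{-1}(y_t))$, which is itself iid uniform. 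Hence $\BBB$ inherits from $\BBB_0$ the failure guarantee in \texttt{AdaptiveGameOne}.

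Next, fix any natural selecting mechanism $\AAA$ for \texttt{AdaptiveGameTwo} and let $Y^*=(y^*_1,\dots,y^*_t)$ denote the (possibly random) sample it chooses. Because $\AAA$ is natural, its view in round $i$ is $(q_i(y^*_1),\dots,q_i(y^*_t))$, and substituting $q_i=r_i\circ\pi^{-1}$ for $\BBB_0$'s internal query $r_i$ yields $(r_i(w_1),\dots,r_i(w_t))$ where $\vec{w}=\pi^{-1}(Y^*)$. Since $\pi$ is uniform and independent of $Y^*$, the tuple $\vec{w}$ is, conditional on $Y^*$, uniformly distributed over injective $t$-tuples in $[n]^t$ and is in fact independent of $Y^*$. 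I would then construct a natural sampling mechanism $\AAA'$ that, on receiving sample $\vec{w}$, internally resamples $Y^*$ from $\AAA$'s distribution with fresh randomness and applies $\AAA$'s answer map to $(Y^*,\text{view})$ where $\text{view}=(r_i(w_j))_j$. The interactions $(\AAA,\BBB)$ in \texttt{AdaptiveGameTwo} and $(\AAA',\BBB_0)$ with $\vec{w}$ a uniformly random $t$-subset of $[n]$ are then identically distributed, including the accuracy event.

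Finally I would close the small gap between a uniformly random $t$-subset of $[n]$ and an iid uniform sample of size $t$. Since $n=2000t$, an iid uniform sample from $[n]$ has distinct entries with probability at least $1-\binom{t}{2}/n\geq 1-1/4000$ and, conditioned on distinctness, is distributed uniformly over injective $t$-tuples; hence the two distributions lie within total variation distance $1/4000$. This tiny loss is absorbed either by slightly enlarging $n/t$ or by exploiting the $1-o(1)$ failure probability actually produced by the Hardt--Ullman / Steinke--Ullman construction. Invoking the original theorem on $\BBB_0$ against $\AAA'$ then forces $\AAA$ to fail with probability exceeding $3/4$ in \texttt{AdaptiveGameTwo}. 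The main subtle step is verifying that the symmetrization of $\BBB_0$ correctly preserves efficiency, success probability, and the accuracy criterion; once this is in place, the rest is a direct application of the permutation invariance of the uniform distribution on $[n]$.
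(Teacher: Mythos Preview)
Your argument is correct. The symmetrization step is sound: once $\pi$ is uniform and independent of the selecting mechanism's choice $Y^*$, the tuple $\pi^{-1}(Y^*)$ is uniform over injective $t$-tuples and independent of $Y^*$, so the selecting mechanism's view coincides with that of a natural mechanism on a random injective sample against the unpermuted adversary $\BBB_0$. The small total-variation gap between an injective uniform sample and an i.i.d.\ uniform sample is indeed $O(t/n)=O(1/2000)$, and you correctly note it is swallowed either by the $1-o(1)$ failure probability the fingerprinting-code adversary actually achieves or by a trivial adjustment of constants.

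The paper itself does not give a reduction. It simply asserts that the proof of Hardt--Ullman / Steinke--Ullman ``continues to hold'' when the mechanism selects rather than samples its points; i.e., it is a white-box observation that nothing in the fingerprinting-code argument uses randomness of the sample, only that it is a size-$t$ subset of $[n]$. Your route is genuinely different: it treats the prior theorem as a black box and derives the extension by permutation invariance of the uniform target distribution. The advantage of your approach is that it is self-contained and does not require the reader to re-inspect the internals of the earlier proof; the (minor) cost is the extra bookkeeping around the injective-vs.-i.i.d.\ gap and the need to note that naturalness lets you fold the mechanism's choice of $Y^*$ into fresh internal randomness of the simulated mechanism $\AAA'$.
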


\begin{remark}
This extension is not directly necessary for our purposes, but it simplifies our analysis. Recall that we ultimately want to show a negative result for efficient mechanisms that can preprocess the underlying distribution and store a limited amount of space about it. Intuitively, using our notion of MILR, we will force such a mechanism to store useful information only about a bounded number of points from the underlying distribution. But these points are not necessarily {\em sampled} from the underlying distribution, and hence, it will be more convenient for us to rely on a hardness result for {\em selecting} natural mechanisms.
\end{remark}

\subsection{From selecting and natural to space boundedness}

We next ``lift'' the negative result for selecting and natural mechanisms (Theorem~\ref{thm:extension}) to obtain space a lower bound. To this end, recall \texttt{AdaptiveGameSpace}, presented in Section~\ref{sec:ADAbody}, where the mechanism initially gets the full description of the underlying distribution, but it must shrink it into an $s$-bit summary $z$. 

We prove the following theorem.

\begin{theorem}\label{thm:mainADAbody} 
Assume the existence of an MILR scheme $\Pi=(\Gen,\Param,\Enc,\Dec)$  with security parameter $\lambda$ that is  $(\Gamma,\overline{\tau})$-secure against space bounded preprocessing adversaries. Let $k\leq\poly(\Gamma(\lambda))$ be a function of $\lambda$. 
There exists a $\poly(\Gamma(\lambda))$-time adversary $\BBB$ 
such that the following holds. Let $\AAA{=}(\AAA_1,\AAA_2)$ be a $\poly(\Gamma(\lambda))$-time mechanism with space complexity $s$ such that 
$\overline{\tau}(\lambda,s)\leq O\left(\sqrt{k}\right)$ (small enough). Then,
$$
\Pr[\texttt{AdaptiveGameSpace}(\AAA,\BBB,s,k)=1]>\frac{2}{3}.
$$

Furthermore, the underlying distribution defined by the adversary $\BBB$ can be fully described using $O(\sqrt{k} \cdot \lambda)$ bits, 
it is sampleable in $\poly(\Gamma(\lambda))$-time, and elements sampled from this distribution can be represented using $O(\lambda+\log(k))$ bits.
\end{theorem}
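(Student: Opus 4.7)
The plan is to lift the hardness result for natural \emph{selecting} mechanisms (Theorem~\ref{thm:extension}) to a space hardness result by using the MILR scheme as a compressed proxy for the support points of the adversary's distribution. Let $\BBBsamp$ be the efficient adversary guaranteed by Theorem~\ref{thm:extension}, which fails every natural selecting mechanism with sample size $t$ against $k=\Theta(t^{2})$ adaptive queries over domain $[n']$ with $n' = 2000t$. We choose $n = t = \Theta(\sqrt{k})$, so that the contradictory hypothesis $\overline{\tau}(\lambda,s)\le O(\sqrt k)$ is strictly smaller than $n$.

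\textbf{Construction of $\BBBspace$.} Our adversary $\BBBspace$ samples $n$ independent keys $x_{1},\dots,x_{n}\leftarrow\Gen(1^{\lambda})$ and declares the target distribution $\DDDspace$ to be uniform on the labeled pairs $\{(j,x_{j}):j\in[n]\}$ over $\XXX=[n]\times\{0,1\}^{\lambda}$. This distribution is efficiently sampleable, is described by $O(n\lambda)=O(\sqrt k\cdot\lambda)$ bits, and has elements of size $O(\lambda+\log k)$, establishing the ``furthermore'' clause. In parallel, $\BBBspace$ internally emulates $\BBBsamp$: it obtains the distribution $\DDDsamp$ that $\BBBsamp$ chooses (uniform on some $\{m_{1},\dots,m_{n}\}\subseteq[n']$), samples public parameters $p_{j}\leftarrow\Param(1^{\lambda})$ and ciphertexts $c_{j}\leftarrow\Enc(x_{j},p_{j},m_{j})$, and then in each round translates $\BBBsamp$'s query $q_{i}$ into the query $f_{q_{i}}(j,x):=q_{i}(\Dec(x,p_{j},c_{j}))$, forwards $f_{q_i}$ to $\AAA_{2}$, and feeds the resulting answer $a_{i}$ back into $\BBBsamp$. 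A direct computation gives
\[
\E_{(j,x)\sim\DDDspace}[f_{q_{i}}(j,x)] \;=\; \tfrac{1}{n}\sum_{j=1}^{n}q_{i}(m_{j}) \;=\; \E_{m\sim\DDDsamp}[q_{i}(m)],
\]
so if $\AAA$ is $1/10$-accurate against $\BBBspace$ in \texttt{AdaptiveGameSpace}, then the answers seen by the emulated $\BBBsamp$ are $1/10$-accurate with respect to $\DDDsamp$ in \texttt{AdaptiveGameTwo}.

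\textbf{A hybrid adversary from the MILR hidden set.} Define $\hBBBspace$ exactly as $\BBBspace$, except that after $\AAA_{1}$ outputs $z$ we invoke Definition~\ref{def:muiltienc} to obtain the hidden set $J=J(\AAA_{1},\vec x,z,\vec p)$ of size at least $n-\overline{\tau}(\lambda,s)$, and for each $j\in J$ we generate $c_{j}\leftarrow\Enc(x_{j},p_{j},0)$ instead of $\Enc(x_{j},p_{j},m_{j})$. A reduction analogous to the one in the proof of Theorem~\ref{thm:DPhard} (with the preprocessing function being $\AAA_{1}$, the leakage being $z$, and the oracle-querying adversary being the simulation of the entire $k$-round interaction with $\AAA_{2}$ and the emulated $\BBBsamp$) shows, via the multi-security property of the MILR scheme, that the joint distribution of $\AAA$'s view under $\BBBspace$ and under $\hBBBspace$ is computationally indistinguishable. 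Hence $\AAA$ remains $1/10$-accurate in the hybrid experiment with probability at least $2/3 - \negl(\Gamma(\lambda))$. Exactly as in the DP argument, the fact that computing $J$ may be infeasible is harmless: $\hBBBspace$ is used only as an information-theoretic object in deriving the contradiction.

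\textbf{Extraction of a selecting natural mechanism.} Let $I=[n]\setminus J$, so $|I|\le\overline{\tau}(\lambda,s)=O(\sqrt k)$. By construction, in the hybrid experiment the only values of the plaintexts $m_{1},\dots,m_{n}$ that influence the ciphertexts (and hence the entire view of $\AAA$) are $\{m_{j}:j\in I\}$; for every $j\in J$ the decryption of $c_{j}$ is the data-independent constant $0$. Thus the composed algorithm $M:=\hBBBspace\circ\AAA$, regarded as a mechanism playing against $\BBBsamp$ in \texttt{AdaptiveGameTwo}, depends on $\DDDsamp$ only through $\{q_{i}(m_{j}):j\in I\}$ together with the fixed values $q_{i}(0)$; viewed through the lens of \texttt{AdaptiveGameTwo}, $M$ is a selecting natural mechanism with sample $\{m_{j}:j\in I\}\cup\{0\}$ of size at most $\overline{\tau}+1=O(\sqrt k)=t$. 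But by the previous step $M$ wins \texttt{AdaptiveGameTwo} against $\BBBsamp$ with probability at least $2/3-\negl(\Gamma(\lambda))>1/4$, contradicting Theorem~\ref{thm:extension}. The main obstacle in executing this plan is ensuring that the hybrid reduction is tight enough to go through for adaptively many rounds, and formalising the sense in which the composed mechanism is ``natural'' even though $\AAA_{2}$ is allowed to evaluate $f_{q_{i}}$ at arbitrary points; both will be handled exactly as in Theorem~\ref{thm:DPhard} and by treating $0$ as one additional selected sample point.
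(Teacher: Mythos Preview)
Your proposal follows the paper's informal sketch (Section~\ref{sec:ADAbody}) almost verbatim, including the construction that the paper itself flags as an ``over-simplification'' whose ``actual construction is somewhat different.'' The crucial gap is in your Extraction step: the composed mechanism $M=\hBBBspace\circ\AAA$ is \emph{not} a natural mechanism in \texttt{AdaptiveGameTwo}. A natural mechanism may depend on the query $q_i$ only through its values on the selected sample. But your query $f_{q_i}(j,x)=q_i(\Dec(x,p_j,c_j))$ must be handed to $\AAA_2$ as a description, and that description necessarily contains (a circuit for) the full function $q_i$. Hence $\AAA_2$---and therefore $M$---can evaluate $q_i$ at every point of the domain, not merely at $\{m_j:j\in I\}\cup\{0\}$. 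Your claim that $M$ ``depends on $\DDDsamp$ only through $\{q_i(m_j):j\in I\}$'' confuses dependence on the \emph{data points} $m_j$ with dependence on the \emph{query} $q_i$; naturalness is a restriction on the latter. Treating $0$ as an extra sample point does not help, and nothing in the proof of Theorem~\ref{thm:DPhard} addresses naturalness at all.

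The paper's actual proof avoids this by encrypting the \emph{query values} $q_i(j)$ rather than the support points $m_j$: in each round the wrapper forms $c_{i,j}=\Enc(x_j,p_j,q_i(j))$ for $j\in I$ and $c_{i,j}=\Enc(x_j,p_j,0)$ for $j\in J$, and passes $f_i(j,x)=\Dec(x,p_j,c_{i,j})$ to $\AAA_2$. Now the description of $f_i$ is just $(\{p_j\},\{c_{i,j}\})$ and depends on $q_i$ only through $\{q_i(j):j\in I\}$, so the wrapped mechanism is \emph{syntactically} natural with sample $I$. The MILR hybrid then swaps this ``WrapA'' for a ``WrapB'' that encrypts $q_i(j)$ on all coordinates; for WrapB the true value of $f_i$ under $\DDDspace$ equals $\tfrac1n\sum_j q_i(j)$, so the \texttt{AdaptiveGameTwo} and \texttt{AdaptiveGameSpace} outcomes coincide. (Relatedly, in \texttt{AdaptiveGameTwo} the target distribution is uniform on the \emph{entire} domain $[n]$ with $n=2000t$; $\BBBsamp$ does not choose a sub-support $\{m_1,\dots,m_n\}$, and one needs $n=2000t$ keys, not $n=t$.)
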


\begin{proof}
Our task is to show that there is an adversary that fails every (efficient) mechanism that plays in \texttt{AdaptiveGameSpace}. What we know is that there is an adversary that fails every natural mechanism that plays in \texttt{AdaptiveGameTwo}. To bridge this gap 
we consider a wrapper algorithm, called \texttt{WrapA}, that can be used as a wrapper over a mechanism $\AAA$ designed for \texttt{AdaptiveGameSpace}, in order to transform it into a {\em natural} mechanism that plays in \texttt{AdaptiveGameTwo}. Consider algorithm \texttt{WrapA}.

We stress that algorithm $\texttt{WrapA}$ is not necessarily computationally efficient, since computing $J(\AAA_1,\vec{x},z,\vec{p})$ might not be efficient. This does not break our analysis because Theorem~\ref{thm:extension} holds even for computationally unbounded mechanisms (that are natural). Indeed, for every mechanism $\AAA$ (designed for \texttt{AdaptiveGameSpace}) we have that $\texttt{WrapA}{\circ}\AAA$ is a {\em natural} mechanism for playing in \texttt{AdaptiveGameTwo}. Therefore, the adversary $\BBB$ guaranteed by Theorem~\ref{thm:extension} causes $\texttt{WrapA}{\circ}\AAA$ to fail in \texttt{AdaptiveGameTwo}, provided that
\begin{equation}
\overline{\tau}(\lambda,s)=|I|\leq t = \Theta(\sqrt{k}),
\label{eq:ADAcondition}
\end{equation}
where $I$ is the set from Step~\ref{step:WrapAI} of algorithm \texttt{WrapA} and $t$ is the sample size in \texttt{AdaptiveGameTwo}. We continue with the analysis assuming that Inequality~(\ref{eq:ADAcondition}) holds. 

\begin{algorithm}\caption{$\texttt{WrapA}$}

{\bf Setting:} Algorithm \texttt{WrapA} has access to an algorithm $\AAA{=}(\AAA_1,\AAA_2)$ with space complexity $s$ (designed for \texttt{AdaptiveGameSpace}) and it interacts with an adversary $\BBB$ (designed for \texttt{AdaptiveGameTwo}).

\begin{enumerate}
    \item Let $(\Gen,\Param,\Enc,\Dec)$ be a $(\Gamma,\overline{\tau})$-secure MILR scheme with security parameter $\lambda$.
    \item Sample $n$ keys $x_1,\dots,x_n\in\{0,1\}^{\lambda}$ from $\Gen$, and define $\DDD$ to be the uniform distribution over the set $\{(j,x_j)\}_{j=1}^n$.
    \item Let $z\leftarrow \AAA_1(\DDD)$.
    
    \item Sample $n$ public parameters $p_1,\dots,p_n\in\{0,1\}^{\poly(\lambda)}$ from $\Param$.
    
    \item Let $J\leftarrow J(\AAA_1,\vec{x},z,\vec{p})\subseteq[n]$ be the set of ``hidden coordinates'', as in Definition~\ref{def:muiltienc}. Denote $I=[n]\setminus J$. The set $I$ corresponds to the subset $Y$ from Step 1 of \texttt{AdaptiveGameTwo}.\label{step:WrapAI}
    
    \item Instantiate $\AAA_2$ with $z$.
    
    \item For round $i=1,2,\dots,k$:
    \begin{enumerate}
        \item Obtain a query $q_i$, restricted to coordinates in $I$. That is, for every $j\in I$ we obtain $q_i(j)$. 
        \item For $j\in I$ let $c_{i,j}=\Enc(x_i,p_i,q_i(j))$. For $j\notin I$ let $c_{i,j}=\Enc(x_i,p_i,0)$.
        \item Define the query $f_i:[n]\times\{0,1\}^\lambda\rightarrow\{-1,0,1\}$ where $f_i(j,x)=\Dec(x,p_j,c_{i,j})$. The description of $f_i$ includes $\{c_{i,j}\}_{j\in[n]}$ and $\{p_j\}_{j\in[n]}$.
        \item Give $f_i$ to $\AAA_2$ and obtain an answer $a_i\in[-1,1]$.
        \item Output $a_i$.
    \end{enumerate}
\end{enumerate}
\end{algorithm}

Recall that our task is to design an adversary that causes $\AAA$ to fail in \texttt{AdaptiveGameSpace}. To this end, let us consider a modified version of algorithm \texttt{WrapA}, called \texttt{WrapB}. %

\begin{algorithm}\caption{$\texttt{WrapB}$}

{\bf Setting:} Algorithm \texttt{WrapB} has access to an algorithm $\AAA{=}(\AAA_1,\AAA_2)$ with space complexity $s$  (designed for \texttt{AdaptiveGameSpace}) and it interacts with an adversary $\BBB$ (designed for \texttt{AdaptiveGameTwo}).

\begin{enumerate}
    \item Let $(\Gen,\Param,\Enc,\Dec)$ be a $(\Gamma,\overline{\tau})$-secure MILR scheme with security parameter $\lambda$.
    \item Sample $n$ keys $x_1,\dots,x_n\in\{0,1\}^{\lambda}$ from $\Gen$, and define $\DDD$ to be the uniform distribution over the set $\{(j,x_j)\}_{j=1}^n$.
    \item Let $z\leftarrow \AAA_1(\DDD)$.

    \item Sample $n$ public parameters $p_1,\dots,p_n\in\{0,1\}^{\poly(\lambda)}$ from $\Param$.
    
    \item Let $J\leftarrow J(\AAA_1,\vec{k},z,\vec{p})\subseteq[n]$ be the set of ``hidden coordinates'', as in Definition~\ref{def:muiltienc}. Denote $I=[n]\setminus J$. The set $I$ corresponds to the subset $Y$ from Step 1 of \texttt{AdaptiveGameTwo}.
    
    \item Instantiate $\AAA_2$ with $z$.
    
    \item For round $i=1,2,\dots,k$:
    \begin{enumerate}
        \item \red{Obtain a query $q_i$ (not restricted to coordinates in $I$)}
        \item \red{For $j\in [n]$ let $c_{i,j}=\Enc(x_i,p_i,q_i(j))$.}
        \item Define the query $f_i:[n]\times\{0,1\}^\lambda\rightarrow\{-1,0,1\}$ where $f_i(j,x)=\Dec(x,p_j,c_{i,j})$. The description of $f_i$ includes $\{c_{i,j}\}_{j\in[n]}$ and $\{p_j\}_{j\in[n]}$.
        \item Give $f_i$ to $\AAA_2$ and obtain an answer $a_i\in[-1,1]$.
        \item Output $a_i$.
    \end{enumerate}
\end{enumerate}
\end{algorithm}

Unlike \texttt{WrapA}, the modified mechanism \texttt{WrapB} does not guarantee that the resulting $\texttt{WrapB}{\circ}\AAA$ is natural. Nevertheless, as we next explain, 
for every computationally efficient mechanism $\AAA$ it still holds that $\texttt{WrapB}{\circ}\AAA$ fails in \texttt{AdaptiveGameTwo}. 
To see this, consider algorithm  \texttt{WrapC}, which is another variant of \texttt{WrapA} and \texttt{WrapB} that has only oracle access to encryptions, via an oracle $\oracle$. We think of \texttt{WrapC} as an adversary to the MILR scheme.

\begin{algorithm}\caption{$\texttt{WrapC}$}

{\bf Setting:} Algorithm \texttt{WrapC} has access to an algorithm $\AAA{=}(\AAA_1,\AAA_2)$ with space complexity $s$ (designed for \texttt{AdaptiveGameSpace}) and it interacts with an adversary $\BBB$ (designed for \texttt{AdaptiveGameTwo}). It also has access to an encryption oracle $\oracle$.

{\bf Input:} An element $z$ (supposedly computed by $\AAA_1$), and a collection of $n$ public parameters $p_1,\dots,p_n\in\{0,1\}^{\poly(\lambda)}$.

\begin{enumerate}

    \item Instantiate $\AAA_2$ with $z$.
    
    \item For round $i=1,2,\dots,k$:
    \begin{enumerate}
        \item Obtain a query $q_i$ (not restricted to a set of coordinates)
        \item \red{For $j\in [n]$ let $c_{i,j}=\oracle(i,q_i(j))$.}
        \item Define the query $f_i:[n]\times\{0,1\}^\lambda\rightarrow\{-1,0,1\}$ where $f_i(j,x)=\Dec(x,p_j,c_{i,j})$. The description of $f_i$ includes $\{c_{i,j}\}_{j\in[n]}$ and $\{p_j\}_{j\in[n]}$.
        \item Give $f_i$ to $\AAA_2$ and obtain an answer $a_i\in[-1,1]$.
        \item Output $a_i$.
    \end{enumerate}
\end{enumerate}
\end{algorithm}

Observe that if \texttt{WrapC} has access to the oracle $\oracle_1$ then it behaves identically to \texttt{WrapA}, and if it has access to the oracle $\oracle_0$ then it behaves identically to \texttt{WrapB}. Formally,

\begin{align*}
&\Pr\left[
\texttt{AdaptiveGameTwo}\left( (\texttt{WrapB}{\circ}\AAA),\BBB,t,k \right)
=1\right]\\ 
&\quad=\Pr_{\substack{\vec{x},\vec{p},\BBB,\AAA_2,\Enc\\
z\leftarrow \AAA_1(\vec{x})\\
J\leftarrow J(\AAA_1,\vec{x},z,\vec{p})
}}\left[  
\texttt{AdaptiveGameTwo}\left( (\texttt{WrapC}^{\oracle_0(\vec{x},\vec{p},J,\cdot,\cdot)}(z,\vec{p}){\circ}\AAA),\BBB,t,k \right)
=1
\right]\\
&\quad\geq
\Pr_{\substack{\vec{x},\vec{p},\BBB,\AAA_2,\Enc\\
z\leftarrow \AAA_1(\vec{x})\\
J\leftarrow J(\AAA_1,\vec{x},z,\vec{p})
}}\left[  
\texttt{AdaptiveGameTwo}\left( (\texttt{WrapC}^{\oracle_1(\vec{x},\vec{p},J,\cdot,\cdot)}(z,\vec{p}){\circ}\AAA),\BBB,t,k \right)
=1
\right]-\negl(\Gamma(\lambda))\\
&\quad=\Pr\left[
\texttt{AdaptiveGameTwo}\left( (\texttt{WrapA}{\circ}\AAA),\BBB,t,k \right)
=1\right]-\negl(\Gamma(\lambda))\\
&\quad\geq\frac{3}{4}-\negl(\Gamma(\lambda))\geq\frac{2}{3}.
\end{align*}

Finally, observe that instead of composing \texttt{WrapB} with $\AAA$ (and thinking about them as a single mechanism) we could compose it instead with the adversary $\BBB$ to obtain a composed adversary that plays in \texttt{AdaptiveGameSpace}. That is,
$$
\texttt{AdaptiveGameTwo}\left( (\texttt{WrapB}{\circ}\AAA),\BBB,t,k \right)
\equiv
\texttt{AdaptiveGameSpace}\left( \AAA,(\texttt{WrapB}{\circ}\BBB),s,k \right),
$$

and hence,
\begin{align*}
&\Pr\left[\texttt{AdaptiveGameSpace}\left( \AAA,(\texttt{WrapB}{\circ}\BBB),s,k \right)=1\right]\\
&\qquad=
\Pr\left[
\texttt{AdaptiveGameTwo}\left( (\texttt{WrapB}{\circ}\AAA),\BBB,t,k \right)
=1\right]\\
&\qquad\geq\frac{2}{3}.
\end{align*}
\end{proof}

Theorem~\ref{thm:ADA} now follows by instantiating Theorem~\ref{thm:mainADAbody} with our MILR scheme, as specified in Theorem~\ref{thm:multi_first_statement}.


\section{Missing Proofs from Section~\ref{sec:constructME}}
\label{app:constructME}

\begin{proof}[Proof of Theorem~\ref{thm:multi1}]
We prove the result for a uniform adversary. The proof for a non-uniform adversary is similar.
Let $\BBB$ be a $\poly(\Gamma(\lambda))$ adversary for $\Pi$. Denote
$$
\adv_{\BBB}(\lambda) =
\left|
\Pr_{\vec{x},\vec{g},\BBB,\Enc}\left[  \BBB^{\oracle_0(\vec{x},\vec{g},[n],\cdot,\cdot)}(\vec{g})=1  \right]
-
\Pr_{\vec{x},\vec{g},\BBB,\Enc}\left[  \BBB^{\oracle_1(\vec{x},\vec{g},[n],\cdot,\cdot)}(\vec{g})=1  \right]
\right|.
$$
Our goal is to prove that $\adv_{\BBB}(\lambda) \leq \negl(\Gamma(\lambda))$.

\paragraph*{Reduction of the security of $\Pi$ from (multi-instance) security of $\Pi'$.}
We will use $\BBB$ to construct an adversary $\BBB_1$ for $n$ instances of $\Pi'$ with independent keys. We define the corresponding oracles for $\Pi'$.
\begin{enumerate}
	\item $\oracle_1^{(1)}(\vec{x'},[n],\cdot,\cdot)$ takes as input an index of a key $j\in[n]$ and a message $m$, and returns $\Enc'(x'_j,m)$.
	\item $\oracle_0^{(1)}(\vec{x'},[n],\cdot,\cdot)$ takes the same inputs and returns $\Enc'(x'_j,0)$.
\end{enumerate}
The advantage of $\BBB_1$ is defined as
    $$
\adv_{\BBB_1}(\lambda) =
\left|
\Pr_{\vec{x'},\BBB_1,\Enc'}\left[  \BBB_1^{\oracle_0^{(1)}(\vec{x'},[n],\cdot,\cdot)}()=1  \right]
-
\Pr_{\vec{x'},\BBB_1,\Enc'}\left[  \BBB_1^{\oracle_1^{(1)}(\vec{x'},[n],\cdot,\cdot)}()=1  \right]
\right|.
    $$

\begin{algorithm}\caption{$\BBB_1^{\oracle_{(\cdot)}^{(1)}(\vec{x'},[n],\cdot,\cdot)}()$}
\label{alg:b1}

{\bf Setting:} $\BBB_1$ runs against $n$ instances of $\Pi'$ and has access to $\BBB^{\oracle_{(\cdot)}(\vec{x},\vec{g},[n],\cdot,\cdot)}(\vec{g})$, which runs against $\Pi$.

\begin{enumerate}
  \item $\BBB_1$ selects $\vec{g} \leftarrow_R \mathcal{G}^n$ and gives it to $\BBB$ as input.
  \item  $\BBB_1$ forwards each query $(j,m)$ of $\BBB$ to its oracle and forwards the answers back to $\BBB$.
  \item $\BBB_1$ outputs the same output as $\BBB$.
\end{enumerate}

\end{algorithm}
The adversary $\BBB_1$ is given in Algorithm~\ref{alg:b1} and we analyze its runtime and advantage.

For this purpose, we first consider the distribution of the hashed keys of $\Pi$, given $\vec{g}$. Each key $x \in \{0,1\}^{\lambda}$ is selected uniformly and independently, and is hashed to $\lambda' = \lambda/10$ bits using a hash function selected uniformly from a universal hash function family. A standard application of the leftover hash lemma~\cite{HastadILL99} shows that the distribution of each key is $2^{-\Omega(\lambda)}$-close to the uniform distribution (in terms of statistical distance). Since the keys are selected independently and $n < 2^{-o(\lambda)}$, applying the triangle inequality (summing the statistical distance over all $n$ keys) shows that their joint distribution is $2^{-\Omega(\lambda)}$-close to uniform (given $\vec{g}$).

Since the keys $\vec{x'} \in (\{0,1\}^{\lambda'})^n$ of $\BBB_1$ are uniformly distributed independently of $\vec{g}$,
the simulation of $\BBB$ incurs an additive loss of $2^{-\Omega(\lambda)}$, namely $$\adv_{\BBB_1}(\lambda) \geq \adv_{\BBB}(\lambda) - 2^{-\Omega(\lambda)}.$$
The runtime of $\BBB_1$ is bounded by $\poly(\Gamma(\lambda))$ (as the runtime of $\BBB$).

\paragraph*{Reduction of (multi-instance) security of $\Pi'$ from single-instance security.}
Next, we construct an adversary $\BBB_2$ against a single instances of $\Pi'$ (with $n=1$), using the multi-instance adversary $\BBB_1$. The advantage of $\BBB_2$ is defined as
$$
\adv_{\BBB_2}(\lambda) =
\left|
\Pr_{\vec{x'},\BBB_2,\Enc'}\left[  \BBB_2^{\oracle_0^{(1)}(\vec{x'},[1],\cdot,\cdot)}()=1  \right]
-
\Pr_{\vec{x'},\BBB_2,\Enc'}\left[  \BBB_2^{\oracle_1^{(1)}(\vec{x'},[1],\cdot,\cdot)}()=1  \right]
\right|.
$$

\begin{algorithm}\caption{$\BBB_2^{\oracle_{(\cdot)}^{(1)}(\vec{x'},[1],\cdot,\cdot)}()$}
\label{alg:b2}

{\bf Setting:} $\BBB_2$ runs against a single instance of $\Pi'$ and has access to $\BBB_1^{\oracle_{(\cdot)}^{(1)}(\vec{x'},[n],\cdot,\cdot)}()$, which runs against $n$ instances of $\Pi'$.

\begin{enumerate}
  \item $\BBB_2$ selects an index $i \in [n]$ uniformly at random, and selects $n-1$ keys $x'_1,\ldots,x'_{i-1},x'_{i+1},\ldots,x'_{n}$, each chosen uniformly at random from $\{0,1\}^{\lambda'}$.
  \item $\BBB_2$ answers each query $(j,m)$ of $\BBB_1$ as follows:
  \begin{itemize}
    \item If $i = j$, $\BBB_2$ forwards query $(1,m)$ to its oracle and returns the answer to $\BBB_1$.
    \item If $j < i$, $\BBB_2$ returns $\Enc'(x'_j,m)$ to $\BBB_1$.
    \item If $j > i$, $\BBB_2$ returns $\Enc'(x'_j,0)$ to $\BBB_1$.
  \end{itemize}
  \item $\BBB_2$ outputs the same output as $\BBB_1$.
\end{enumerate}

\end{algorithm}
Adversary $\BBB_2$ is given in Algorithm~\ref{alg:b2}. In order to relate $\adv_{\BBB_2}(\lambda)$ to $\adv_{\BBB_1}(\lambda)$, we use a hybrid argument. Specifically, for $i \in \{0,\ldots,n\}$ define $\oracle^{(i)}$ as follows.
\begin{itemize}
	\item $\oracle^{(i)}(\vec{x'},[n],\cdot,\cdot)$ takes as input an index of a key $j\in[n]$ and a message $m$. If $j \leq i$, returns $\Enc'(x'_j,m)$. Otherwise, $j > i$ and it returns $\Enc'(x'_j,0)$.
\end{itemize}
Note that $\oracle^{(0)} = \oracle_0^{(1)}$  and $\oracle^{(n)} = \oracle_1^{(1)}$.
For $i \in [n]$, define
$$
\adv_{\BBB_1}^{(i)}(\lambda) =
\left|
\Pr_{\vec{x'},\BBB_1,\Enc'}\left[  \BBB_1^{\oracle^{(i-1)}(\vec{x'},[n],\cdot,\cdot)}()=1  \right]
-
\Pr_{\vec{x'},\BBB_1,\Enc'}\left[  \BBB_1^{\oracle^{(i)}(\vec{x'},[n],\cdot,\cdot)}()=1  \right]
\right|.
$$
We have
\begin{align*}
\adv_{\BBB_1}(\lambda) =
\sum_{i = 1}^{n}
\left(
\Pr_{\vec{x'},\BBB_1,\Enc'}\left[  \BBB_1^{\oracle^{(i-1)}(\vec{x'},[n],\cdot,\cdot)}()=1  \right]
-
\Pr_{\vec{x'},\BBB_1,\Enc'}\left[  \BBB_1^{\oracle^{(i)}(\vec{x'},[n],\cdot,\cdot)}()=1  \right]
\right) \leq \\
\sum_{i = 1}^{n}
\left|
\Pr_{\vec{x'},\BBB_1,\Enc'}\left[  \BBB_1^{\oracle^{(i-1)}(\vec{x'},[n],\cdot,\cdot)}()=1  \right]
-
\Pr_{\vec{x'},\BBB_1,\Enc'}\left[  \BBB_1^{\oracle^{(i)}(\vec{x'},[n],\cdot,\cdot)}()=1  \right]
\right| =
\sum_{i = 1}^{n} \adv_{\BBB_1}^{(i)}(\lambda).
\end{align*}
Depending on the oracle of $\BBB_2$, $\BBB_1$ interacts with either $\oracle^{(i-1)}$ or $\oracle^{(i)}$ and hence the advantage of $\BBB_2$ given that $i$ is selected is
$\adv_{\BBB_1}^{(i)}(\lambda)$. Since $i \in [n]$ is uniform,
\begin{align*}
\adv_{\BBB_2}(\lambda) =
\frac{1}{n} \sum_{i = 1}^{n} \adv_{\BBB_1}^{(i)}(\lambda) \geq
\frac{1}{n} \adv_{\BBB_1}(\lambda).
 \end{align*}
The runtime of $\BBB_2$ is bounded by $n \cdot \poly(\Gamma(\lambda)) = \poly(\Gamma(\lambda))$.
Hence, the security assumption about $\Pi'$ implies that $\adv_{\BBB_2}(\lambda) \leq \negl(\Gamma(\lambda))$. We therefore conclude that $\adv_{\BBB_1}(\lambda)$ and hence $\adv_{\BBB}(\lambda)$ is negligible in $\Gamma(\lambda)$.

\end{proof}

\section{Missing Proofs from Section~\ref{sec:preprocessing}}
\label{app:preprocessing}

The following result is taken form~\cite[Claim 2]{CorettiDGS18} (its original variant is due to~\cite{GoosLM0Z15,KothariMR17}).
\begin{lemma}
\label{lem:fixing}
Fix $z \in \{0,1\}^s$ and let $s_z = \lambda \cdot n - H_{\infty}(X_z)$ be the min-entropy deficiency of $X_z$. Let $\gamma > 0$ be arbitrary.
Then, for every $\delta > 0$, $X_z$ is $\gamma$-close to a convex combination of finitely many $(k,1 -\delta)$-dense sources for
$$k = \frac{s_z + \log 1/\gamma}{\delta \cdot \lambda}.$$
\end{lemma}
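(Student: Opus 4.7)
The plan is to follow the iterative ``density restoration'' construction of~\cite{GoosLM0Z15,KothariMR17,CorettiDGS18}: I would express $X_z$ as the distribution of leaves of a finite binary tree of partial fixings, arranging things so that each non-discarded leaf is a $(k,1-\delta)$-dense source and the discarded leaves carry total probability mass at most $\gamma$.

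First I would build the tree. Each node is labeled by a set $F \subseteq [n]$ of fixed coordinates, an assignment $y_F$, and the induced conditional distribution $\mu$ on the coordinates $[n] \setminus F$. The root is $X_z$ with $F = \emptyset$. At an internal node, if $\mu$ is already $(1-\delta)$-dense on $[n]\setminus F$ then I declare the node a leaf; otherwise Definition~\ref{def:source} furnishes a witness $I \subseteq [n] \setminus F$ and value $y_I$ with $p := \Pr_\mu[X_I = y_I] > 2^{-(1-\delta)|I|\lambda}$, and I split into a yes-child with fixings $F \cup I$ (combined assignment extending $y_F$ by $y_I$, conditional weight $p$) and a no-child with the same $F$ but distribution conditioned on $X_I \neq y_I$ (conditional weight $1-p$). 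The tree is finite because every yes-edge strictly increases the bounded quantity $|F|$ and every no-edge strictly shrinks the finite support of the conditional distribution.

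Next, I would control the leaf depths via the potential $\Phi(v) := (n - |F_v|)\lambda - H_\infty(\mu_v)$, the min-entropy deficiency on the unfixed coordinates at node $v$. The elementary inequality $H_\infty(\mu \mid E) \geq H_\infty(\mu) - \log(1/\Pr[E])$, applied with $E = \{X_I = y_I\}$ on the yes-edge and $E = \{X_I \neq y_I\}$ on the no-edge, together with the witness bound on $p$, gives $\Phi(v_{\text{yes}}) \leq \Phi(v) - \delta|I|\lambda$ and $\Phi(v_{\text{no}}) \leq \Phi(v) + \log(1/(1-p))$. Iterating along any root-to-leaf path then yields the key inequality
\[
\delta\lambda \cdot |F_\ell| \;\leq\; s_z + R_\ell, \qquad \text{where}\quad R_\ell \;:=\; \sum_{e \in \text{no-edges on the path to }\ell} \log\frac{1}{1-p_e}.
\]
Choosing $k := (s_z + \log(1/\gamma))/(\delta\lambda)$, every ``good'' leaf (one satisfying $|F_\ell| \leq k$) is a $(k,1-\delta)$-dense source and I add it to the convex combination, weighted by its reaching probability $w_\ell$.

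The main obstacle is to bound by $\gamma$ the total mass of ``bad'' leaves (those with $|F_\ell| > k$). By the potential inequality above, bad leaves necessarily satisfy $R_\ell > \log(1/\gamma)$, and the crucial tradeoff is that a large increment $\log(1/(1-p))$ in $R$ at a no-edge comes with a matching small probability $1-p$ of traversing that edge. Quantifying this tradeoff is the technical heart of the argument: naive exponential Markov on $2^{R_\ell}$ does not suffice because the straightforward potential is not a supermartingale on an arbitrary tree. Following~\cite[Lem.~1]{CorettiDGS18}, I would refine the tree by truncating any no-branch whose running reaching probability falls below $\gamma$ (charging its mass directly to the $\gamma$-error budget) and then show by induction on the tree structure that the surviving bad leaves collectively carry mass at most $\gamma$. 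The finiteness of the resulting decomposition, together with the uniform $(k,1-\delta)$-density of the good leaves, completes the proof.
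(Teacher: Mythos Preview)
The paper does not prove this lemma; it simply quotes it from \cite[Claim~2]{CorettiDGS18} (with attribution to \cite{GoosLM0Z15,KothariMR17}). So there is no ``paper's own proof'' to compare against, only the cited one. Your outline is the right skeleton --- the tree of partial fixings, the potential $\Phi$, and the path inequality $\delta\lambda\,|F_\ell| \le s_z + R_\ell$ --- and it matches the cited argument up to the final step.

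The gap is in your last paragraph. The truncation-plus-induction plan for bounding the bad-leaf mass is vague and, as stated, does not work: truncated no-branches scattered across different parts of a genuine binary tree need not have total mass at most $\gamma$ (there is no reason the products of yes-probabilities leading to the various truncation points sum to at most $1$), and the ``surviving bad leaves'' are not controlled either. You correctly diagnose that $2^{R_\ell}$ is not a supermartingale, but the proposed cure is not the one that closes the argument.

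The fix in \cite{CorettiDGS18} is a single extra word in your construction: at each non-dense node, choose the violating pair $(I,y_I)$ with $|I|$ \emph{maximal}. Then the yes-child $(\mu \mid X_I = y_I)$ is automatically $(1-\delta)$-dense on the remaining coordinates --- otherwise a further violation $(I',y_{I'})$ on $[n]\setminus(F\cup I)$ would yield $\Pr[X_{I\cup I'}=(y_I,y_{I'})] > 2^{-(1-\delta)|I\cup I'|\lambda}$, contradicting maximality of $|I|$. Hence every yes-child is already a \emph{leaf}, and your tree collapses to a single no-path with terminal yes-offshoots hanging off it. Along that path the remaining mass after $t$ no-steps is exactly $\prod_{s\le t}(1-p_s) = 2^{-R_t}$; you stop and discard the tail the moment this drops below $\gamma$, charging mass $<\gamma$ once. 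Every yes-offshoot peeled off before that point has $|I| \le (s_z + R_t)/(\delta\lambda) \le (s_z + \log(1/\gamma))/(\delta\lambda) = k$ by your own potential inequality. No tree induction, no martingale, no truncation bookkeeping is needed.
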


We now prove the following lemma.
\begin{lemma}
\label{lem:zleakage}
Fix $z \in \{0,1\}^s$ and let $s_z = \lambda \cdot n - H_{\infty}(X_z)$.

Let $\mathcal{G}$ be a family of universal hash functions with domain $\{0,1\}^\lambda$ and range $\{0,1\}^{\lambda'}$ and let $G \leftarrow_R (\mathcal{G})^n$.
Let $\delta >0, \gamma > 0$ be parameters and assume that $(1 - \delta) \lambda > \lambda' + \log n + 1$.

Then, there exists a family of convex combinations $V^{*}_{G,z} = \{V^{*}_{\vec{g},z}\}_{\vec{g} \in (\mathcal{G})^n}$ of $k$-bit-fixing $(n,2^{\lambda'})$-sources
where $$k = \frac{s_z + \log 1/\gamma}{\delta \cdot \lambda}$$ such that
$$\Delta[(G,G(X_z)),(G,V^{*}_{G,z})] \leq \sqrt{2^{-(1 - \delta) \lambda + \lambda' + \log n}} + \gamma.$$
\end{lemma}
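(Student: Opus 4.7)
The plan is to combine Lemma~\ref{lem:fixing} (which handles the leakage $z$ at the level of the raw keys $X_z$) with a coordinate-wise application of Lemma~\ref{lem:leftover} (which handles the hashing). First I would apply Lemma~\ref{lem:fixing} to $X_z$ with the given parameters $\delta,\gamma$: this yields a convex combination $\widetilde{X}_z = \sum_i \alpha_i Y_i$ of $(k,1-\delta)$-dense $(n,2^\lambda)$-sources, with $k = (s_z + \log 1/\gamma)/(\delta\lambda)$, such that $\Delta(X_z, \widetilde{X}_z) \le \gamma$. Each $Y_i$ has a set $F_i \subseteq [n]$ of at most $k$ fixed coordinates (with fixed values $y^{*}_{i,j}$ for $j \in F_i$), and is $(1-\delta)$-dense on the complement.

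Next I would define the target $V^{*}_{\vec{g},z}$. For each component $Y_i$ and each hash tuple $\vec{g} \in \mathcal{G}^n$, let $V^{*}_{\vec{g},i}$ be the $(n,2^{\lambda'})$-source that is fixed to $g_j(y^{*}_{i,j})$ on each $j\in F_i$ and uniform on $[n]\setminus F_i$; this is a $|F_i|$-bit-fixing source with $|F_i| \le k$. Then set $V^{*}_{\vec{g},z} := \sum_i \alpha_i V^{*}_{\vec{g},i}$, which is by construction a convex combination of $k$-bit-fixing $(n,2^{\lambda'})$-sources.

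Now I would bound the statistical distance by the triangle inequality:
\begin{align*}
\Delta\bigl[(G,G(X_z)),(G,V^{*}_{G,z})\bigr]
&\le \Delta\bigl[(G,G(X_z)),(G,G(\widetilde{X}_z))\bigr] + \Delta\bigl[(G,G(\widetilde{X}_z)),(G,V^{*}_{G,z})\bigr].
\end{align*}
The first term is at most $\Delta(X_z,\widetilde{X}_z) \le \gamma$ by the data-processing inequality (applying the coordinate-wise function $(\vec{g},\vec{x})\mapsto(\vec{g},\vec{g}(\vec{x}))$ to both sides, with $\vec{g}$ coupled). For the second term, since both $G(\widetilde{X}_z)$ and $V^{*}_{G,z}$ are written as the same convex combination with weights $\alpha_i$, it suffices (by convexity of statistical distance) to bound $\Delta[(G,G(Y_i)),(G,V^{*}_{G,i})]$ for each $i$. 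Conditioning on $G=\vec{g}$, the coordinates in $F_i$ match exactly on both sides (both are $g_j(y^{*}_{i,j})$), so the distance reduces to the distance on coordinates $[n]\setminus F_i$. There, $Y_{i,[n]\setminus F_i}$ is a $(1-\delta)$-dense $(n-|F_i|,2^\lambda)$-source, so Lemma~\ref{lem:leftover} gives
\[
\Delta\bigl[(G_{[n]\setminus F_i},G_{[n]\setminus F_i}(Y_{i,[n]\setminus F_i})),(G_{[n]\setminus F_i},U)\bigr] \le \sqrt{2^{-(1-\delta)\lambda+\lambda'+\log(n-|F_i|)}} \le \sqrt{2^{-(1-\delta)\lambda+\lambda'+\log n}}.
\]
The hypothesis $(1-\delta)\lambda > \lambda' + \log n + 1$ is exactly what is needed to invoke Lemma~\ref{lem:leftover} here (it is inherited unchanged from a smaller $t$). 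Summing the two terms yields the claimed bound $\gamma + \sqrt{2^{-(1-\delta)\lambda+\lambda'+\log n}}$.

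The one subtle point, and the step I would expect to require the most care, is the fact that the fixed values of the bit-fixing source $V^{*}_{\vec{g},i}$ depend on $\vec{g}$ (through $g_j(y^{*}_{i,j})$). This is essential: had we instead used uniform fixed values, we could not couple the fixed coordinates to cancel in the second step. The definition of a $k$-bit-fixing source only requires that \emph{for each realization} $\vec{g}$ of the public randomness the resulting distribution is $k$-bit-fixing, which is compatible with the index set $F_i$ being fixed while the fixed values vary with $\vec{g}$; no property of $V^{*}_{G,z}$ is lost by this, and the whole construction plugs cleanly into Lemma~\ref{lem:leakage} in the next step of the paper.
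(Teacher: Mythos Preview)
Your proposal is correct and follows essentially the same approach as the paper: apply Lemma~\ref{lem:fixing} to decompose $X_z$ into $(k,1-\delta)$-dense pieces, define the corresponding $k$-bit-fixing sources by hashing the fixed coordinates with $\vec{g}$, and combine the triangle inequality with a component-wise application of Lemma~\ref{lem:leftover} on the unfixed coordinates. Your explicit remark that the fixed values of $V^{*}_{\vec{g},i}$ must depend on $\vec{g}$ matches exactly the paper's construction of $V^{*}_G$.
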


Lemma~\ref{lem:zleakage} is based on lemmas~\ref{lem:leftover} and~\ref{lem:fixing}.

\begin{proof}
Fix $z \in \{0,1\}^s$. Let $s_z$, $\gamma > 0, \delta > 0$ be as in the lemma.

Then, by Lemma~\ref{lem:fixing} (applied with parameters $s_z,\gamma,\delta$), $\Delta[X_z,X^{*}_z] \leq \gamma$, where $X^{*}_z$ is a convex combination of finitely many $(k,1 - \delta)$-dense $(n,2^\lambda)$-sources for
$k = \frac{s_z + \log 1/\gamma}{\delta \cdot \lambda}$.

Let $X^{*}$ be a $(k,1 - \delta)$-dense source in the convex combination $X^{*}_z$.
Let $V^{*}$ be the corresponding $k$-bit-fixing source ($X^{*}$ and $V^{*}$ are fixed on the same indices to the same values).

Let $J \subseteq [n]$ denote the set of indices on which $X^{*}$ is not fixed and further denote $t \coloneqq |J| \geq n - k$.
Then, $X^{*}_J$ is a $(t,2^\lambda)$-source, which is $(1 - \delta)$-dense.

Since we assume that $(1 - \delta) \lambda > \lambda' + \log n + 1 > \lambda' + \log t + 1$,
we can apply Lemma~\ref{lem:leftover} to $X^{*}_J$ (with $\delta$ as above).
We conclude that
$$\Delta[(G,G(X^{*}_J)),(G,U)] \leq \sqrt{2^{-(1 - \delta) \lambda + \lambda' + \log t}} < \sqrt{2^{-(1 - \delta)\lambda + \lambda' + \log n}},$$
where $U \leftarrow_R (\{0,1\}^{\lambda'})^t$.

Since the remaining indices of $\overline{J}$ in $X^{*}$ and $V^{*}$ are fixed to the same values,
$$\Delta[(G,G(X^{*})),(G,V^{*}_{G})] \leq \sqrt{2^{-(1 - \delta)\lambda + \lambda' + \log n}},$$
where $V^{*}_{G} \in (\{0,1\}^{\lambda'})^n$ is a family of $|\overline{J}|$-bit-fixing sources, obtained from each $\vec{g} \in \mathcal{G}^n$ and $V^{*}$ by setting the indices of $\overline{J}$ to their values in $\vec{g}(V^{*})$, while the indices of $J$ are uniform in $(\{0,1\}^{\lambda'})^t$.

Let $(G,V^{*}_{G,z})$ be obtained by replacing every $(G,G(X^{*}))$ in $(G,G(X^{*}_z))$ with the corresponding $(G,V^{*}_{G})$.
Then, the above implies that
$$\Delta[(G,G(X^{*}_z)),(G,V^{*}_{G,z})] \leq \sqrt{2^{-(1 - \delta)\lambda + \lambda' + \log n}}.$$

Therefore,
\begin{align*}
&\Delta[(G,G(X_z)),(G,V^{*}_{G,z})] \leq \\
&\Delta[(G,G(X_z)) , (G,G(X^{*}_z)) ] + \Delta[(G,G(X^{*}_z)) ,(G,V^{*}_{G,z})] \leq
\gamma + \sqrt{2^{-(1 - \delta)\lambda + \lambda' + \log n}},
\end{align*}
where we have used the fact that
$$\Delta[(G,G(X_z)), (G,G(X^{*}_z)) ] \leq \Delta[X_z,X^{*}_z] \leq \gamma,$$
since $G$ is independent of $X_z$ and $X^{*}_z$.

\end{proof}

\paragraph{Proof of Lemma~\ref{lem:leakage}.}

Lemma~\ref{lem:leakage} is based on~Lemma~\ref{lem:zleakage} and the proposition below
which upper bounds the probability over $X$ and $F$ that $H_{\infty}(X_z)$ is low.
\begin{proposition}
\label{prop:min}
Let $s' > s$. Then,
$\Pr_{z \leftarrow F(X)}[H_{\infty}(X_z) < n \cdot \lambda - s'] < 2^{s - s'}$.
\end{proposition}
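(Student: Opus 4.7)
The plan is a direct computation of $H_\infty(X_z)$ in terms of the mass that $F$ places on $z$, followed by a trivial counting bound over the at most $2^s$ possible values of $z$.

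First, I would handle the case where $F$ is deterministic. Set $p_z := \Pr[F(X) = z]$ with $X$ uniform on $(\{0,1\}^\lambda)^n$. For every $x$ with $F(x) = z$ we have
$$\Pr[X_z = x] \;=\; \Pr[X = x \mid F(X) = z] \;=\; \frac{2^{-n\lambda}}{p_z},$$
and for every $x$ with $F(x) \neq z$ we have $\Pr[X_z = x] = 0$. Hence
$$H_\infty(X_z) \;=\; -\log \max_x \Pr[X_z = x] \;=\; n\lambda - \log(1/p_z).$$
Therefore the event $\{H_\infty(X_z) < n\lambda - s'\}$ coincides exactly with the event $\{p_z < 2^{-s'}\}$.

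Next I would apply the obvious counting bound: since $z$ ranges over a set of size at most $2^s$,
$$\Pr_{z \leftarrow F(X)}\bigl[H_\infty(X_z) < n\lambda - s'\bigr] \;=\; \sum_{z \,:\, p_z < 2^{-s'}} p_z \;<\; 2^s \cdot 2^{-s'} \;=\; 2^{s-s'},$$
as required.

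Finally, if $F$ is randomized, I would absorb the coins into $F$ by writing $F(x) = F'(x;r)$ for uniform $r$; then for any fixed $z$,
$$\max_x \Pr[X_z = x] \;=\; \max_x \frac{2^{-n\lambda}\cdot\Pr_r[F'(x;r)=z]}{p_z} \;\leq\; \frac{2^{-n\lambda}}{p_z},$$
so $H_\infty(X_z) \geq n\lambda - \log(1/p_z)$ and the same bound follows verbatim. I do not anticipate any real obstacle here; the argument is the standard ``probability of entering a low-probability bucket'' calculation, and the only subtlety worth flagging is the handling of randomized $F$, which becomes a one-line inequality rather than an equality.
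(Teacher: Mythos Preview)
Your proposal is correct and follows essentially the same route as the paper: both show that $H_\infty(X_z) < n\lambda - s'$ forces $p_z < 2^{-s'}$, and then bound the total mass of such $z$ by $2^s\cdot 2^{-s'}$. The only cosmetic difference is that the paper dispatches randomized $F$ with a one-line ``convex combination of deterministic functions'' reduction, whereas you keep $F$ randomized and use the inequality $H_\infty(X_z)\ge n\lambda-\log(1/p_z)$ directly; your handling is arguably cleaner, but the substance is identical.
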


\begin{proof}
First, since a randomized function is a convex combination of deterministic ones,
we assume without loss of generality that $F$ is deterministic.
Fix $z \in \{0,1\}^s$ such that $H_{\infty}(X_z) < n \cdot \lambda - s'$.
Since $X \in (\{0,1\}^\lambda)^n$ is uniform, then $\Pr_X[F(X) = z] < 2^{-s'}$.

Let $K = \{z \in \{0,1\}^s \mid H_{\infty}(X_z) < n \cdot \lambda - s' \}$.
Then,
\begin{align*}
\Pr_{z \leftarrow F(X)}[H_{\infty}(X_z) < n \cdot \lambda - s'] = 
\sum_{z \in K} \Pr_X[F(X) = z] <
\sum_{z \in K} 2^{-s'} \leq 2^{s} \cdot 2^{-s'} = 2^{s - s'}.
\end{align*}

\end{proof}

\begin{proof}[Proof of Lemma~\ref{lem:leakage}]
Let $\delta >0, \gamma > 0, s' > s$ be parameters.

Consider $z \in \{0,1\}^s$ with $s_z \leq s'$ (i.e., $H_{\infty}(X_z) \geq n \cdot \lambda - s'$).
Since $(1 - \delta) \lambda > \lambda' + \log n + 1$, we invoke Lemma~\ref{lem:zleakage} and deduce that
there is a family of convex combinations $V^{*}_{G,z}$ of finitely many $k$-bit-fixing such that
\begin{align} \label{eq:1}
\Delta[(G,G(X_z)),(G,G(V^{*}_{G,z}))]  \leq \sqrt{2^{-(1 - \delta)\lambda + \lambda' + \log n}} + \gamma,
\end{align}
and
$$k = \frac{s' + \log 1/\gamma}{\delta \cdot \lambda}.$$

Define the family
$$V_{G,Z} = \{V_{G,z}\}_{z \in \{0,1\}^s} = \{V_{\vec{g},z}\}_{\vec{g} \in (\mathcal{G})^n, z \in \{0,1\}^s}$$
as follows: for each $z$
such that $s_z \leq s'$, set $V_{G,z} \coloneqq V^{*}_{G,z}$, while for each $z$
such that $s_z > s'$, set $V_{G,z}$ to be the uniform distribution over $(\{0,1\}^{\lambda'})^n$.
Then, $V_{G,Z}$ is a family of convex combinations $V_{\vec{g},z}$ of $k$-bit-fixing $(n,2^{\lambda'})$-sources,
for $k = \frac{s' + \log 1/\gamma}{\delta \cdot \lambda}$. To complete the proof, it remains to bound
$\Delta[(G,Z,G(X)),(G,Z,V_{G,Z})]$.

Denote by $\mathcal{E}$ the event that $s_z < s'$.
Then,
\begin{align*}
&\Delta[(G,Z,G(X)),(G,Z,V_{G,Z})] =
\E_{z \leftarrow F(X)}\Delta[(G,G(X_z)), (G,V_{G,z})] \leq \\
&\Pr[\mathcal{E}] \cdot 1 + \E_{z \leftarrow F(X){\mid \neg \mathcal{E}}} \Delta[(G,G(X_z)), (G,V_{G,z})] \leq \\
&2^{s - s'} + \sqrt{2^{-(1 - \delta)\lambda + \lambda' + \log n}} + \gamma,
\end{align*}
where the final inequality is based on Proposition~\ref{prop:min} and~(\ref{eq:1}).

\end{proof}

\section{Application to Communication Complexity}
\label{sec:app:cc}
In the two-player one-way communication game, inputs $A$ and $B$ are given to Alice and Bob, respectively, and the goal is for Alice to send a minimal amount of information to Bob, so that Bob can compute $f(A,B)$ for some predetermined function $B$. 
The communication cost of a protocol $\Pi$ is the size of the largest message in bits sent from Alice across all possible inputs $X$ and the (randomized) communication complexity is the minimum communication cost of a protocol that succeeds with probability at least $\frac{2}{3}$. 
In the distributional setting, $X$ and $Y$ are further drawn from some known underlying distribution. 

In our setting, suppose Alice has $n$ independent and uniform numbers $a_1,\ldots,a_n$ so that either $a_i\in GF(p)$ for all $i\in[n]$ or $a_i\in GF(2^t)$ for sufficiently large $t$ for all $i\in[n]$ and suppose Bob has $n$ independent and uniform numbers $b_1,\ldots,b_n$ from the same field, either $GF(p)$ or $GF(2^t)$. 
Then for any function $f(\langle a_1,b_1\rangle,\ldots,\langle a_n,b_n\rangle)$, where the dot products are taken over $GF(2)$ or $f(a_1\cdot b_1,\ldots,a_n\cdot b_mn$, where the products are taken over $GF(p)$, has the property that the randomized one-way communication complexity of computing $f$ with probability $\sigma$ is the same as the number of samples from $a_1,\ldots,a_n$ that Alice needs to send Bob to compute $f$ with probability $\sigma-\eps$. 
It is easy to prove sampling lower bounds for many of these problems, sum as $\sum_i a_i\cdot b_i\pmod{p}$ or $\MAJ(\langle a_1, b_1\rangle,\ldots,\langle a_n, b_n\rangle)$, and this immediately translates into communication complexity lower bounds by the above. 
The main intuition for these lower bounds is that the numbers $b_1,\ldots,b_n$ can be viewed as the hash functions that Bob has. 

In the context of Corollary~\ref{cor:leakage}, the input $A$ to Alice is the input $X=(X_1,\ldots,X_n)$ to the hash functions and the input $B$ to Bob is the set of hash functions $G=(G_1,\ldots,G_n)$. 
In particular, we have $X_i\in\{0,1\}^\lambda$ and $G_i\in\{0,1\}^t$ for each $i\in[n]$. 
Then informally speaking, Corollary~\ref{cor:leakage} shows that if a sampling protocol requires $\Omega(\eta^2 n)$ samples to achieve probability $\frac{1}{2}+\frac{\eta}{2}$ of success, i.e., $\frac{\eta}{2}$ advantage, then any algorithm that achieves probability $\frac{1}{2}+\eta$ of success, i.e., $\eta$ advantage, must use $\Omega(\eta^2 n\lambda)$ communication. 

\begin{theorem}
Let $X=(X_1,\ldots,X_n)$ with $X_i\in\{0,1\}^\lambda$ for all $i\in[n]$ be given to Alice and let $G=(G_1,\ldots,G_n)$ with $G_i\in\{0,1\}^t$ for all $i\in[n]$ be given to Bob. 
Suppose that for a function $f:(\{0,1\}^\lambda)^n\times(\{0,1\}^\lambda)^n\to\{0,1\}$, any protocol $\Pi$ that computes $f(X,G)$ with probability at least $\frac{1}{2}+\frac{\eta}{2}$ requires $k$ samples $(X_i,G_i)$ for some quantity $k$ that depends on $\eta$. 
Then any one-way protocol that computes $f(X,G)$ with probability at least $\frac{1}{2}+\eta$ must use $\Omega(s)$ communication, where $s=\frac{k\lambda}{2}-2\log_2\frac{6}{\eta}$. 
\end{theorem}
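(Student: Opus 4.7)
The plan is to argue the contrapositive: assume a one-way protocol $\Pi=(F,B)$ exists with communication $s^{*}< s := k\lambda/2 - 2\log_2(6/\eta)$ that computes $f(X,G)$ with probability $\tfrac{1}{2}+\eta$, and build a protocol that uses only $k$ samples $(X_i,G_i)$ and computes $f(X,G)$ with probability $\tfrac{1}{2}+\eta/2$, directly contradicting the hypothesis. The entire reduction is powered by Lemma~\ref{lem:leakage}, treating Alice's message $Z:=F(X)$ as an $s^{*}$-bit leakage on the input vector $X$.

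First I would instantiate Lemma~\ref{lem:leakage} with $\delta = \tfrac{1}{2}$, $\gamma = \eta/6$, $s' = s^{*}+\log_2(6/\eta)$ (so that $2^{s^{*}-s'} = \eta/6$), and $\lambda'$ chosen small enough (given $n$) that $\sqrt{2^{-(1-\delta)\lambda + \lambda' + \log n}} \le \eta/6$. The three error contributions then sum to at most $\eta/2$, so $(G,Z,G(X))$ is $\eta/2$-close to $(G,Z,V_{G,Z})$, where $V_{G,Z}$ is a convex combination of $k_0$-bit-fixing $(n,2^{\lambda'})$-sources with $k_0 = (s' + \log_2(1/\gamma))/(\delta\lambda) = (2s^{*}+4\log_2(6/\eta))/\lambda$. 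Since $s^{*}<s$ we get $k_0 \le k$, matching the sampling budget.

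Next I would construct the sampling protocol by unpacking the bit-fixing structure. Inspecting the proof of Lemma~\ref{lem:zleakage}, the mixture defining $V_{G,Z}$ inherits its weights $w_i$ and its fixed-index sets $\overline{J}_i$ from the dense-source decomposition of $X\mid F(X)=z$ given by Lemma~\ref{lem:fixing}; crucially, the pair $(\overline{J}, X_{\overline{J}})$ is a function of $F$, $X$, and $Z$ alone, not of $G$. So Alice, who knows $X$, samples an index $i$ according to the weights $w_i(F,Z)$, sets $\overline{J} = \overline{J}_i$, and transmits the $|\overline{J}| \le k_0$ coordinates $\{(j,X_j)\}_{j\in \overline{J}}$ (Bob supplies $G_j$ himself). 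Bob then samples a fresh $X'_{J}$ uniformly on $J = [n]\setminus \overline{J}$, forms $Z' = F(X_{\overline{J}}, X'_{J})$, and outputs $B(Z',G)$ as his guess for $f(X,G)$.

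The main obstacle, and the heart of the argument, is to show that replacing $Z$ by $Z'$ costs at most $\eta/2$ in success probability. I would couple the real tuple $(X,G,Z,G(X),\overline{J})$ with the ideal one $(X,G,Z,V_{G,Z},\overline{J})$; by Step~1 and the data-processing inequality, they are within statistical distance $\eta/2$. In the ideal tuple the $J$-coordinates of $V_{G,Z}$ are uniform and independent of $(G,Z,\overline{J},X_{\overline{J}})$, so resampling $X'_J$ and regenerating $Z'$ yields the same joint law on Bob's view as the original protocol would have, and applying the statistical-distance bound to the event $\{B(\cdot,G)=f(X,G)\}$ gives $\Pr[B(Z',G)=f(X,G)] \ge (\tfrac{1}{2}+\eta) - \eta/2 = \tfrac{1}{2}+\eta/2$. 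The delicate point, which is why the full leakage lemma is indispensable rather than mere per-coordinate uniformity, is that $\overline{J}$ is correlated with $X_J$ through $Z$; only the joint bit-fixing statement legitimizes treating $X_J$ as fresh when forming $Z'$.
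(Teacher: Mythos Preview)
Your first two paragraphs --- setting up the contrapositive, instantiating Lemma~\ref{lem:leakage} with $\delta=\tfrac12$, $\gamma=\eta/6$, $s'=s^*+\log_2(6/\eta)$, and extracting the $k_0$-bit-fixing decomposition with $k_0\le k$ --- track the paper's proof exactly. You are also correct that the fixed set $\overline J_i$ and the fixed $X$-values come from the dense-source decomposition of $X\mid Z=z$ (Lemma~\ref{lem:fixing}) and hence depend only on $(F,z,i)$, not on $G$.

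The gap is in your third paragraph, at the step where Bob regenerates $Z'=F(X_{\overline J},X'_J)$ and you assert this ``yields the same joint law on Bob's view as the original protocol.'' It does not. Conditional on the selected component $i$, the real $X_J$ is distributed as the $(1-\delta)$-dense (not uniform) part of $D_i$, and it is this $X_J$ that determines both $Z=F(X)$ and the target $h(G(X))$. Your $X'_J$ is fresh, uniform, and independent of $X_J$, so $F(X_{\overline J},X'_J)$ need not equal $z$ --- in fact it need not land in $F^{-1}(z)$ at all --- and nothing ties the pair $(Z',G)$ to the correct answer $h(G(X))$. The uniformity of $V_J$ in the ideal tuple says that $G_J(X_J)$ \emph{looks} uniform to Bob given $(G,Z,\overline J,X_{\overline J})$, but $Z$ is one of the variables Bob is conditioning on there; it is not something he can reproduce from the other conditioned variables by resampling.

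The paper sidesteps this entirely by never asking Bob to reconstruct $z$. After switching to the ideal distribution it uses averaging to fix a particular $(g,z)$ and then a particular bit-fixing source in $V_{g,z}$ for which the conditional success probability remains at least $\tfrac12+\tfrac{\eta}{2}$. For that fixed source, Bob's posterior on $G(X)$ is precisely the bit-fixing distribution, which coincides with the posterior Bob would have if Alice had sent only the $|S|\le k_0<k$ coordinates indexed by the fixed set $S$; this is what contradicts the sampling hypothesis. The fixed $z$ plays the role of non-uniform advice rather than something Bob computes.
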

\begin{proof}
We define $h$ to be the function so that 
\[f(X,G)=h(G_1(X_1),G_2(X_2),\ldots,G_n(X_n)).\]
Setting $t=\lambda'=1$, $\delta = 1/2$, $n < 2^{\lambda/3}$, $\gamma = \eta/6$, $s = \Theta(\eta^2 n\lambda)$, and $s' = s + \log_2 (6/\eta)$, in the formulation of Corollary~\ref{cor:leakage}, then for sufficiently large $\lambda$ so that $2^{-c\lambda}<\frac{\eta}{6}$ for a fixed constant $c>0$, we have
\[\Delta[P_{(G,Z,G(X))},P_{(G,Z,V)}]\le\sqrt{2^{-(1-\delta)\lambda+\lambda'+\log n}}+\gamma+2^{s-s'}
\leq \sqrt{2^{-\lambda/2 + 1 + \lambda/3}} + \eta/6 + \eta/6
\leq \eta/2,\]
where $V$ is a $((G,Z),k)$-convex-bit-fixing $(n,2^{\lambda'})$-source, and $k=\frac{2(s + 2\log_2(6/\eta))}{\lambda}$. 

Suppose there exists a protocol $\Pi$ that computes $f(G,X)$ with probability more than $\frac{1}{2}+\eta$ using $o(s)$ space. 
By Corollary~\ref{cor:leakage}, the success probability is the convex combination of the success probabilities for $((G,Z),k')$-convex-bit-fixing $(n,2^{\lambda'})$-sources with $k'<k$. 
Thus there exists a protocol $\Pi'$ that produces a message $z$ by Alice such that Bob's probability of success on $g$ is at least $\frac{1}{2}+\eta$. 
By the law of total probability, the success probability of Bob is the weighted success probability of Bob conditioned on each $k$'-bit-fixing source in the convex combination and so there exists a $k'$-bit-fixing source with success probability at least $\frac{1}{2}+\frac{\eta}{2}$. 
Hence, Bob’s posterior distribution on $g_1,\ldots,g_n$ is jointly uniform on all coordinates outside of a fixed set $S$ of at most $k'$ indices, up to a negligible statistical distance. 
Since each $g_i$ is universal, Bob's posterior distribution is the same as if Alice had sampled the coordinates in $S$ and sent the coordinates to Bob. 
However, since $|S|\le k'<k$, this contradicts the assumption that any protocol $\Pi$ that computes $f(X,G)$ with probability at least $\frac{1}{2}+\frac{\eta}{2}$ requires $k$ samples $(X_i,G_i)$ for some quantity $k$ that depends on $\eta$. 
\end{proof}

We can apply Corollary~\ref{cor:leakage} in this setting to any relation $f(X, G)$ which has the form
\[h(G_1(X_1), G_2(X_2), \ldots, G_n(X_n))\]
for some relation $h$, provided the $G_i$ are independent and drawn from a universal hash function family with range $\{0,1\}^{\lambda'}$. 
Indeed, by viewing $Z$ in that theorem as the message sent from Alice to Bob, Corollary~\ref{cor:leakage} shows that the distributions
$P_{(G,Z,G(X))}$ and $P_{(G,Z,V)}$ have small variation distance, where $V$ is a $((G, Z), k)$-convex-bit-fixing source. 
Consequently, Bob, who is given $G = g$ and $Z = z$ for some particular realizations $g$ and $z$, cannot distinguish $g(X)$ from $V$, where $V$ is a $((g, z),k)$-convex-bit-fixing source, which means that conditioned on $G = g$ and $Z = z$, $V$ is a convex combination of distributions of $k$-bit-fixing sources, i.e., distributions on $(\{0,1\}^{\lambda'})^n$ that are fixed on at most $k$ coordinates and uniform on the rest. 

Since the average success probability over $g$ is at least $1/2 + \eta$, this implies there exists a $g$ for which the success probability is at least $1/2 + \eta$, and we can use this particular $g = (g_1, \ldots, g_n)$ to derive our lower bound. Similarly, we can fix a message $z$ of Alice for which the success probability of Bob on $g$ is at least $1/2 + \eta$. Now, by the law of total probability, the success probability of Bob is the weighted success probability of Bob conditioned on each $k$-bit-fixing source in the convex combination defining $V$, and so if Bob's success probability is at least $1/2 + \eta$ on $(g,z)$, there exists a $k$-bit-fixing source with success probability at least $1/2 + \eta$. This in turn means that Bob's posterior distribution on the $g_i(X_i)$ is jointly uniform on all coordinates outside of a fixed set $S$ of indices, up to a negligible statistical distance. 
 
The above means, in particular, that outside of $S$, $g_i$ cannot be the zero function, as otherwise $g_i(X_i)$ could not be uniform. But for many universal families, if $g_i \neq 0$ is fixed, then $g_i(X_i)$ being uniform means $X_i$ is uniform. 
This holds, in particular if $g_i(X_i) = \langle g_i, X_i \rangle \mod 2$. 
In this case, Bob's posterior distribution is the same as that obtained from Alice just sampling the coordinates $i \in S$ and sending the $X_i$ values to Bob, and so if we prove a sampling lower bound for $1/2 + \eta/2$ success probability, we obtain an arbitrary one-round communication lower bound for $1/2 + \eta$ success probability, assuming the parameters are set so that the statistical distance is at most say, $\eta/2$.

As an example, suppose $h$ is the majority function and for each $i$, $g_i(X_i) =  \langle g_i, X_i \rangle \mod 2$ for bitstrings $g_i$ and $X_i$. It is well-known that the set of all such $g_i$ form a universal family. To obtain $1/2+\eta$ success probability under the uniform distribution on $(G, X)$, assuming $\lambda = \Omega(\log 1/\eta)$ so that $g_i = 0^\lambda$ is unlikely (and for any fixed $g_i \neq 0^\lambda$, when $X_i$ is uniform, $\langle g_i, X_i \rangle$ is uniform), it is standard that it is necessary and sufficient for a sampling protocol to sample $\Theta(\eta^2 n)$ coordinates $i$ and send them to Bob, and then Bob outputs the empirical majority. 
In Corollary~\ref{cor:leakage} we can set $\lambda' = 1$, $\delta = 1/2$, $n < 2^{\lambda/3}$, $\gamma = \eta/6$, $s = \Theta(\eta^2 n \lambda)$, and $s' = s + \log_2 (6/\eta)$ to conclude that for $\eta/6 > 2^{-c \lambda}$ for a small constant $c > 0$,
\[\Delta[P_{(G,Z,G(X))},P_{(G,Z,V)}]\le\sqrt{2^{-(1-\delta)\lambda+\lambda'+\log n}}+\gamma+2^{s-s'}
\leq \sqrt{2^{-\lambda/2 + 1 + \lambda/3}} + \eta/6 + \eta/6
\leq \eta/2,\]
where $V$ is a $((G,Z),k)$-convex-bit-fixing $(n,2^{\lambda'})$-source, and 
$k=\frac{2(s + 2\log_2(6/\eta))}{\lambda}$. 
One needs $k = \Omega(\eta^2 n \lambda)$ by the sampling lower bound to achieve $1/2 + \eta/2$ success probability, which implies $s = \Omega(\eta^2 n \lambda - 2 \log_2(6/\eta))$ as a $1$-way arbitrary communication lower bound, since the success probability of the $1$-way protocol is at most $1/2 + \eta/2 + \eta/2 = 1/2 + \eta$.

\bibliographystyle{splncs03}

\end{document}